\theoremstyle{plain}
\newtheorem{lemma}{Lemma}
\newtheorem*{cor}{Corollary}
\newtheorem{claim}{Claim}
\newtheorem{example}{Example}
\newtheorem{definition}{Definition}
\newtheorem{assumption}{Assumption}
\theoremstyle{definition}
\theoremstyle{remark}
\newcommand{\R}{\mathbb{R}}
\newcommand{\block}[1]{
  \overbrace{\begin{matrix} T\boldsymbol{I}_T - \boldsymbol{1}_{T\times T} & \quad \quad T\boldsymbol{I}_T - \boldsymbol{1}_{T\times T} & \cdots & T\boldsymbol{I}_T - \boldsymbol{1}_{T\times T} \end{matrix}}^{#1}
}
\newcommand{\blockk}[1]{
  \overbrace{\begin{matrix} T\boldsymbol{I}_T - \boldsymbol{1}_{T\times T}  & \cdots & T\boldsymbol{I}_T - \boldsymbol{1}_{T\times T} \end{matrix}}^{#1}
}
\DeclareMathOperator*{\plim}{plim}
\newcolumntype{L}[1]{>{\raggedright\let\newline\\arraybackslash\hspace{0pt}}m{#1}}
\newcolumntype{C}[1]{>{\centering\let\newline\\arraybackslash\hspace{0pt}}m{#1}}
\newcolumntype{R}[1]{>{\raggedleft\let\newline\\arraybackslash\hspace{0pt}}m{#1}}
\begin{document}

\begin{titlepage}

\title{Correlated Synthetic Controls }  
\author{Tzvetan Moev\thanks{ I would like to thank Frank DiTraglia and Max Kasy for excellent supervision. I am grateful for the invaluable comments of Otso Hao, Xiyu Jiao, Anders Kock, Barbara Petrongolo, Stanislav Slavov, Yi Ying Tan  and Frank Windmeijer.}} 

\maketitle
\begin{abstract}
\noindent \normalsize{Synthetic Control methods have recently gained considerable attention in applications with only one treated unit.  Their popularity is partly based on the key insight that we can \textit{predict} good synthetic counterfactuals for our treated unit. However, this insight of predicting counterfactuals is generalisable to microeconometric settings where we often observe many treated units. We propose the Correlated Synthetic Controls (CSC) estimator for such situations: intuitively, it creates synthetic controls that are correlated across individuals with similar observables. When treatment assignment is correlated with unobservables, we show that the CSC estimator has more desirable theoretical properties than the difference-in-differences estimator. We also utilise CSC in practice to obtain heterogeneous treatment effects in the well-known Mariel Boatlift study, leveraging additional information from the PSID.}
\vspace{0.3in}\\
\noindent\textbf{Keywords:} Synthetic Controls, Correlated Random Coefficients, Mariel Boatlift  \\

\bigskip

\end{abstract}
\pagebreak \newpage

\thispagestyle{empty}
\setcounter{page}{0}
\end{titlepage}
\pagebreak \newpage

\doublespacing

\section{ Introduction}

The Fundamental Problem of Causal Inference states that we cannot directly infer the causal effect of some intervention for a single individual because we do not simultaneously observe what happens to them with and without treatment. In other words, we cannot do a \textit{within-person} comparison of the two potential outcomes. What if we can construct a surrogate for the potential outcome without treatment for every treated individual? Achieving this would allow us to get around the Fundamental Problem of Causal Inference by approximating the ideal \textit{within-person} comparison. In the case of one treated unit observed for a long period, the Synthetic Control\footnote{Main abbreviations used in the paper: ATT -- Average Treatment Effect on the Treated; CSC -- Correlated Synthetic Controls; DGP -- Data Generation Process; DiD -- Difference-in-Differences; fDiD -- feasible Difference-in-Differences;  PSC -- Penalised Synthetic Control; PSID -- Panel Study of Income Dynamics; SC -- Synthetic Control;   iDiD -- infeasible Difference-in-Differences} (SC) Method \citep{aba10} method constructs such a synthetic counterfactual by taking a weighted average of the control units. This insight of constructing counterfactuals can also be generalised to the setting considered in this thesis: a binary treatment affects many individuals which we observe for a relatively short time period. Similarly to \citet{lho20}, this paper develops an estimator that uses the control individuals to create SCs for all treated individuals. We call the estimator Correlated Synthetic Controls (CSC) because it builds counterfactuals that are similar across treated individuals with similar observable characteristics.

To explain how CSC works, we should note that generalising SC to panels with many treated units observed for a short period that are common in applied microeconomics is not trivial. In particular, there are two extreme approaches that we can take to achieve this. One is to construct a separate SC for every treated individual. The other is to aggregate the time-series for all treated individuals and for all control individuals, because often they belong to well-defined groups such as cities. For instance, in the Mariel Boatlift study of \citet{car90}, we can group all treated individuals to Miami and all control individuals to other US cities. Then, we can construct a SC for Miami by combining other cities' time series. The CSC estimator is a compromise between these two extremes. Similarly, a very recent paper by \citet{ben21} that appeared while this thesis was being written builds an estimator that balances a related but different tradeoff between two extremes.\footnote{ In Section \ref{subsec_man}, we show how the two extremes that we consider are different from their distinction between \textit{pooled} and \textit{separate} SCs.}

The two extremes that we consider are related to the distinction between pooled (or homogeneous) coefficients and fixed (or heterogeneous) coefficients models in the panel data literature \citep{pes08}. However, there is a middle ground: correlated random coefficients models \citep{woo03, hsi08}. The CSC approach postulates such a model for the weights allocated to different donors. In particular, the weights used for contructing the SCs are allowed to differ across treated individuals in a deterministic way based on their observables, similarly to coefficients in correlated random coefficients models. Thus, individuals with similar observables will have similar (or correlated) SCs. This is an attractive approach in our setting because CSC overcomes the challenge of short $T$ by using information from comparable treated observations when constructing the SCs.

Beyond proposing a novel estimator, we make three contributions to the literature. Firstly, we compare the theoretical properties of CSC to those of Difference-in-Differences (DiD). When treatment is strongly correlated with unobservable characteristics, e.g., we suspect selection on unobservables, CSC should be used because its estimate of the treatment effect has a smaller estimation error than DiD. This provides one good reason for empirical researchers to choose CSC in the context of a panel with many treated units, even though DiD is often considered the default option in such cases \citep{ark20}. Importantly, this theoretical result does not depend on assumptions specific to CSC but holds more generally for many estimators from the SC family when used in a setting with many treated units.

Secondly, via a simulation study we can compare CSC not only to DiD, but also to Penalised Synthetic Control (PSC) \citep{lho20}, our estimator's closest sibling from the family of SC estimators. We provide an infeasible estimator that estimates consistently the parameter of interest (the Average Treatment Effect on the Treated, or ATT) and study the conditions, under which the three feasible estimators approximate its behaviour. The simulation also confirms our main theoretical result. 

Thirdly, we illustrate how researchers can use CSC in practice via studying the effect of immigration on wages and labour supply in the context of the Mariel Boatlift \citep{car90}. Intuitively, what we do is construct a synthetic doppelganger for every treated worker based on workers in other states. Next, we show that CSC performs slightly better than PSC with real data in terms of predicting good counterfactuals in the pre-treatment period. In contrast to previous studies, our empirical application uses an alternative data source, namely the Panel Study of Income and Dynamics (PSID), because the credibility of typical data sources used for evaluations of the Mariel Boatlift\footnote{Mostly variants of CPS. See \citet[p.6-12]{per18}.} have been recently questioned \citep{cle19}. We conclude by showing how CSC can be used to estimate the heterogeneous treatment effects of immigration. 

So, CSC provides a useful addition to the toolkit of empirical researchers for conducting causal inference and the rest of this paper attempts to illustrate this point. Section \ref{sec_mot} presents a motivating example, in which the estimator seems more appropriate than standard causal inference techniques.  The formal setup of the problem and the construction of the CSC estimator is detailed in Section \ref{sec_est_cons}. Next, its theoretical properties are explored in Section \ref{sec_the} whereas Section \ref{sec_sim} provides the result from our simulation exercise. Section \ref{sec_emp} applies the estimator to the Mariel Boatlift.  Some avenues for further research and limitations are discussed in Section \ref{sec_concl}. Supplementary examples and further clarifications are contained in Appendix \ref{ap_oth_res} whereas Appendix \ref{sec_proof} contains the proofs of the main results in the thesis. The coding for the simulation and the empirical application can be found in \href{https://github.com/tzvetanmoev/Correlated-Synthetic-Controls}{this GitHub repository}.

\section{Motivating Example} \label{sec_mot}
The key purpose of this section is to provide an example, in which the CSC estimator seems more appropriate than other causal inference techniques. A key implication of models in economic geography postulates that market access is an important determinant of the spatial distribution of economic activity \citep{kru91, dav02}. \citet{red08} examine this causal lint by exploiting the German division after the Second World War as an exogenous shock to the market access enjoyed by cities. In particular, they speculate that West German cities close to the East-West border experienced a bigger decline in market access relative to other West German cities. The decline should additionally be more pronounced for small cities close to the border relative to big cites. According to economic geography, the reduction in market access in certain cities would lead to a decline in economic development in these cities.

So, \citet{red08} would like to test this mechanism by measuring the possibly heterogeneous treatment effect (depending on city size) of the East-West border on the population of cities close to the border.\footnote{In the paper, they argue that population growth is a good proxy for economic development.} Suppose that we have data on German cities for two pre-treatment years (1922, 1937) and one post-treatment year (1952). A key decision problem which they are facing is which causal inference technique to use where the two main candidates are DiD and SC. To estimate via DiD, we specify the two-way fixed effects model:
\begin{gather}
    Population_{it} = \rho + \gamma_i + \delta_t + \tau D_{it} + e_{it}
\label{mot_DiD}
\end{gather}
where $i$ is city, $t$ is time, $\gamma_i$ are city-level fixed effects, $\delta_t$ are time fixed effects, $e_{it}$ are idiosyncratic shocks and $D_{it}$ is a treatment indicator which equals to $1$ only for the treated cities in the post-treatment indicator. The parameter of interest is $\tau$, which can be interpreted as identifying the ATT and which we estimate via DiD. However, we may be concerned that with just two pre-treatment periods we cannot evaluate if the parallel trends assumption which is necessary for identifying ATT via DiD holds.  Moreover, as suggested by \citet{red08}, the treatment effects may be heterogeneous, i.e., they are bigger for smaller cities. Recent work on DiD with heterogeneous effects has shown that in such cases $\tau$ from (\ref{mot_DiD}) will not identify the ATT, even when the parallel trends assumption holds \citep{cha20}.\footnote{With few discrete covariates, we can consider doing DiD separately for each group of cities (e.g.\ small and big cities). However, this approach becomes infeasible if we want to add continuous covariates or if we have many discrete covariates.} Thus,  DiD does not seem optimal.

Alternatively, we can consider creating a separate SC for every treated city in our sample. The benefit of this approach is that we can get the individual treatment effects for every city with small estimation error under certain conditions \citep{aba10}. This will allow us to capture the heterogeneity among different cities without having to rely on parallel trends. A SC approach would postulate a model where the population of a treated city is a weighted average of populations of untreated cities:
\begin{gather*}
    Population_{it} = \sum_{j=1}^{n_0} w_{ij} Population_{jt} + \tau_i D_{it} + e_{it}
\end{gather*}
where $w_{ij}$ is the weight of untreated city $j$ for treated city $i$, $n_0$ is the total number of untreated cities and $D_{it}$ and $e_{it}$ are defined as above. The parameter of interest is the individual treatment effect $\tau_i$. Note that $w_{ij}$ which we can estimate via SC are constrained to be non-negative and to sum up to 1.  There are two related problems with using SC in this case, especially with a lot of control cities and short $T$ as in our case. Firstly, there might be more than one combination of donors that perfectly matches the time series of a certain treated city pre-treatment, i.e., a multiplicity of solutions for $w_{ij}$. For example, consider some treated city $H$ with pre-treatment values $\{420, \ 480\}$ and suppose that there are three other cities with the following populations in the pre-treatment period (1922 and 1937):
$$ A=\{400, \ 450 \} \quad \quad \quad B = \{440, \ 510\} \quad \quad \quad C =\{500, \  600 \}  $$
where using control cities $A$ and $B$ with weights $0.5$ and $0.5$ or using control cities $A$ and $C$ with weights $0.8$ and $0.2$ both match exactly the time series for $H$. Secondly, if we try to estimate separate SC for every treated city with just two pre-treatment periods, we risk over-fitting, even when multiplicity of solutions is not an issue. 

If neither DiD, nor SC is appropriate, we would have to look out for another techniques such as CSC. Essentially, CSC modifies the weights in such a way that it tackles overfitting and multiplicity of solutions simultaneously without relying on parallel trends. We let the weights depend on observable characteristics and allow for an individual fixed effect $\gamma_i$:  
\begin{gather*}
    Population_{it} = \gamma_i + \sum_{j=1}^{n_0} w_{ij} Population_{jt} + \tau_i D_{it} + e_{it} \\ 
    w_{ij} = w^{small}_j Small_i + w^{big}_j Big_i
\end{gather*}
where $w_{ij}$ is the weight of donor $j$ on treated unit $i$, $D_{it}$ is treatment indicator and $\tau_i$ captures the fact that we get an individual estimate of the treatment effect. $Small_i $ and $Big_i$ indicator functions for being a small or big city. While the weights $w_{ij}$ are still constrained to be non-negative and sum up to 1, the most significant difference is that we constrain them to be the same for all small cities and the same for all big  cities. As a result, we do not get multiplicity of solutions because for small cities the weights are simultaneously balancing the time-series of several cities. Analogically, over-fitting is less of a concern because we have fewer free parameters and each set of weights is exploiting information from different treated cities belonging to the same group. 

So, CSC should be preferred to standard SC and DiD in this case. In contrast to DiD, it does not rely on assumptions like parallel trends and can naturally accommodate heterogeneity of treatment effects. On the other hand, CSC does not suffer from multiplicity of solutions and overfitting as opposed to SC.

\section{Estimator Construction}\label{sec_est_cons}

\subsection{Related work}
In this section we will briefly review the literature and place the CSC in a wider context. In a seminal paper, \citet{aba10} introduced the SC method for constructing synthetic counterfactuals for a single treated unit. This was followed by a series of empirical papers, using SC to estimate the treatment effects of various macro interventions such as the effect of Brexit \citep{bor17b} or the effect of the German unification \citep{aba15}. Despite the wave of applied papers using the method, little was known about its theoretical properties until \citet{dou18} and the work of Bruno Ferman and coauthors \citep{bot19, fer20, fer19}. \citet{dou18} present a general framework for estimators which nests difference-in-difference and SC as special cases and illustrate the connections between the two estimators. Regarding whether DiD or SC is preferable in applications, \citet{fer19} provide conditions under which SC has better theoretical properties. In a related article, \citet{fer20} shows that when our data is generated from an interactive fixed effects models, the SC can yield an asymptotically unbiased estimate of the treatment effect under certain conditions. Lastly, \citet{bot19} show that even if the SC does not match perfectly the true time series in the pretreatment period, the SC method can still yield a meaningful estimate of the treatment effect.

The papers discussed so far have considered the case of one treated unit. However, the idea of constructing counterfactuals is generalisable to settings with many treated individuals: we can construct SC for every individual that has been treated in our dataset. Several recent papers have proposed estimators that work in this context such as the synthetic DiD \citep{ark20} and matrix completion \citep{ath21}. However, the papers closest in spirit for our contribution are \citet{lho20} and \citet{ben21}. Firstly, \citet{lho20} propose the PSC which tackles a key problem when trying to generalise the original SC to a setting with many treated units, namely the multiplicity of solutions. Secondly, while primarily aimed at building an estimator inspired by SC for staggered adoption, the estimator that \citet{ben21} propose also works for the case of many treated units and has a similar motivation to CSC. The CSC estimator that we propose is explicitly aimed at the many treated unit setting without staggered adoption and so its closest sibling is \citet{lho20}. CSC improves on existing techniques by bringing insights from the panel data literature in order to exploit information from treated individuals that are similar in terms of observables.

\subsection{Set-up} \label{subsec_setup}
We introduce in this subsection a formal framework for thinking about the family of SC estimators with many treated units based on \citet{dou18} and more generally Imbens' Sargan Lecture \citeyearpar{imb21}. This formulation of the problem is useful because it allows us to see how the different estimators in the literature relate to each other and to compare the optimisation problem that they solve. Moreover, it brings home the main insight from the different SC methods: we can translate the causal inference problem into a prediction problem (under certain assumptions). 

Suppose that we observe many treated units for a relatively short time period. We have available a panel dataset with the outcome variable of $N$ individuals who are observed for $T$ periods. A policy is implemented \textit{once}\footnote{Thereby ruling out staggered adoption.} at time $T_0 + 1$ and we are interested in estimating its ATT time $t$ (denoted by $\tau_t$ throughout the paper) and the individual treatment effects (denoted by $\tau_{it}$). The intervention affects a total of $n_1$ people which forms our treatment group, whereas the other $n_0 \equiv N – n_1$ people form our control group, which we refer to interchangeably as the donors. So, we can observe the outcome variable $y_{it}(D)$ for both donors $i \in \{1,2, \dots, n_0\}$ and treated individuals $i \in \{n_0+1, n_0 + 2, \dots n_0 + n_1\}$ in the pre-treatment period $t \in \{1, 2, \dots, T_0 \}$ and the post-treatment period $t \in \{T_0+1, \dots T\}$. $D$ indicates if an individual is treated, so that we only have $D=1$ for treated units after $T_0$. In addition to the outcomes $y_{it}(D)$, there is data on $K$ time-invariant covariates in the $(K \times 1)$ vector $\boldsymbol{x}_i = (x_i^{(1)}, x_i^{(2)}, \dots x_i^{(k)})'$. We can define formally the unobserved individual treatment effects as $\tau_{it} = y_{it}(1) – y_{it}(0)$ and ATT at time $t$ is: 
$$\tau_t = \frac{\sum_{i=n_0+1}^N \tau_{it} }{n_1} = \frac{\sum_{i=n_0+1}^N [y_{it}(1) -  y_{it}(0)]}{n_1} $$
where we use ATT at time $t$ and $\tau_t$ interchangeably.

In the empirical application to the $1980$ Mariel Boatlift with PSID data, for instance, the outcome variable is the wages of individuals for the period $1974-1984$, i.e.,  $T = 11$. On the other hand, the treated individuals are people who live in Miami whereas the donors are people who live elsewhere in the US. More generally, we can neatly  illustrate this set-up via the matrix of outcomes $y_{it}(D)$, called $\boldsymbol{\Theta}$: 

\begin{align}
\boldsymbol{\Theta} \equiv 
\begin{blockarray}{cccc|ccc}
    \begin{block}{(cccc|ccc@{\hspace*{7pt}})}
        y_{1,1}(0) & y_{1,2}(0) & \dots & y_{1, T_0}(0) & y_{1,T_0+1}(0) & \dots & y_{1,T}(0) \\
        y_{2,1}(0) & y_{1,2}(0) & \dots & y_{2, T_0}(0) & y_{2, T_0+1}(0) & \dots & y_{2,T}(0) \\
        \vdots & \ddots  &  \ddots & \vdots  &  \vdots & \dots  & \vdots  \\
        y_{n_0, 1}(0) & \dots  & \dots  & y_{n_0, T_0}(0) & y_{n_0, T_0+1}(0) & \dots & y_{n_0,T}(0) \\
         &  &   &  & \\
        \cline{1-7}
         &  &   &  & \\
        y_{n_0+1,1}(0) & y_{n_0+1, 2}(0)  &  \dots  & y_{n_0+1, T_0}(0) & y_{n_0+1,T_0+1}(1) &  \dots & y_{n_0+1,T}(1)\\
        y_{n_0+2,1}(0) & y_{n_0+2, 2}(0)  &  \dots  & y_{n_0+2, T_0}(0) & y_{n_0+2,T_0+1}(1)  & \dots & y_{n_0+2,T}(1) \\
        \vdots & \ddots  &  \ddots  & \vdots & \vdots & \dots  & \vdots  \\
        y_{n_0+n_1,1}(0) & y_{n_0+n_1, 2}(0)  &  \dots  & y_{n_0+n_1, T_0}(0) & y_{n_0+n_1,T_0+1}(1) & \dots & y_{n_0+n_1,T}(1)  \\
    \end{block} 
\end{blockarray}
\label{full_theta_matrix}
\end{align}
Note that $\boldsymbol{\Theta}$ has four panels. The top-left panel contains the outcome variables for the donors in the pre-treatment period ($\boldsymbol{Y}_{n_0}^{pre}$) whereas the top-right panel is filled with the same values for the post-treatment period  ($\boldsymbol{Y}_{n_0}^{post}$). The pre-treatment outcome variables for the treated observations are in the bottom-left panel ($\boldsymbol{Y}_{n_1}^{pre}$) and the post-treatment values for the treated group are in the bottom-right panel ($\boldsymbol{Y}_{n_1}^{post}$). Thus, we can rewrite $\boldsymbol{\Theta}$ as a block matrix: 
\begin{align}
   \boldsymbol{\Theta} = \begin{blockarray}{c|c}
\begin{block}{(c|c@{\hspace*{3pt}})}
        \boldsymbol{Y}_{n_0}^{pre}(0) & \boldsymbol{Y}_{n_0}^{post}(0) \\
        \cline{1-2}
         \boldsymbol{Y}_{n_1}^{pre}(0) & \boldsymbol{Y}_{n_1}^{post}(1) \\
    \end{block}
\end{blockarray}
\label{block_theta_matrix}
\end{align}
Our objective is to estimate the ATT of the policy occurring at time $T_0+1$. In the post-treatment period $ t> T_0$, we only observe the treated outcomes for the treatment group $\boldsymbol{Y}_{n_1}^{post}(1)$ whereas the untreated outcomes for the treatment group $\boldsymbol{Y}_{n_1}^{post}(0)$ are unobserved. In order to calculate $\tau_i$, we need to observe both. The SC method tackles this data limitation by estimating the unobserved $\widehat{\boldsymbol{Y}}_{n_1}^{post}(0)$ using the rest of the information in $\boldsymbol{\Theta}$ and then utilises these estimates to calculate individual treatment effect $\tau_{it}$ for the post-treatment period. In this sense, it turns the causal inference problem into a prediction problem as we are trying to predict the missing values of $\boldsymbol{Y}_{n_1}^{post}(0)$. This is an extremely powerful insight, as it allows us to augment causal inference with cutting-edge techniques from statistics and machine learning that are great at predicting out-of-sample values. Thus, what we are actually trying to estimate is the unobserved matrix with untreated values for all individuals: 
\begin{align}
\widehat{\boldsymbol{\Theta}}(0) =  
\begin{blockarray}{c|c}
\begin{block}{(c|c@{\hspace*{3pt}})}
        \boldsymbol{Y}_{n_0}^{pre}(0) & \boldsymbol{Y}_{n_0}^{post}(0) \\
        \cline{1-2}
         \boldsymbol{Y}_{n_1}^{pre}(0) & \widehat{\boldsymbol{Y}_{n_1}^{post}}(0) \\
    \end{block}
\end{blockarray}
\label{block_untr_theta_matrix}
\end{align}
and the key question is how to use the information from the observed panels, namely $\boldsymbol{Y}_{n_0}^{pre}(0)$, $\boldsymbol{Y}_{n_0}^{post}(0)$ and $\boldsymbol{Y}_{n_1}^{pre}(0) $ to predict $\widehat{\boldsymbol{Y}_{n_1}^{post}}(0)$.\footnote{ Given this general framework, one may wonder why we are bothering with SC, given that we are faced with a problem that requires good prediction of missing values in a matrix and we have an abundance of techniques for such situations in the matrix completion literature \citep{ath21}. We discuss this question further in Section \ref{subsec_dgp_the} after we introduce a potential DGP for $\boldsymbol{\Theta}$.}

\subsection{The SC Method}
The set-up in the previous subsection raises the question how we can impute the missing counterfactuals and so tackle the Fundamental Problem of Causal Inference via approximating the within-person comparison. In the context of $n_1 = 1$ and long $T_0$, \citet{aba10} introduced the SC method for cases with a single treated unit. As discussed in Section \ref{sec_mot}, their method assumes that the outcome vector of the treated unit $y_{n_0+1, t}$ can be represented as a weighted average of $n_0$ donors' outcomes:
\begin{gather}
    y_{n_0+1,t} = \sum_{j=1}^{n_0} w_j y_{jt} + \epsilon_{n_0+1,t}
    \label{scm_wei}
\end{gather}
where $n_0+1$ indicates the single treated observation as it would be shown in $\boldsymbol{\Theta}$ and the weight on donor $j$ is given by $w_j$. In addition, $\epsilon_{n_0+1,t}$ are idiosyncratic shocks that are independent and have mean 0. Since we are taking a weighted average, the weights are constrained to be between 0 and 1 and to sum up to 1. As such, we can interpret the weights as probabilities.\footnote{To give an example, \citet{bor19} construct a synthetic Britain before 2016 to study the effect of Brexit. Their chosen synthetic Britain is made up of 51\% US, 17\% Italy, 14\% New Zealand, 11\% Hungary, 5\% Germany and other countries with a weight of  1\% or less  and matches pretty closely the true Britain. See \href{https://academic.oup.com/view-large/figure/186550277/uez020fig2.jpg}{Figure 2} in \citet{bor17}.}

We may then wonder how the weights on donors are estimated. Equation (\ref{scm_wei}) points towards the idea that we are essentially regressing the time-series for the treated unit on the time-series for the donors in the pre-treatment period, except that the coefficients are constrained to be non-negative and sum up to 1. Thus, one possibility would be to rewrite the SC model in (\ref{scm_wei}) as a constrained optimisation problem:
\begin{gather}
    \min_{w_j} \sum_{t=1}^{T_0} \left(y_{n_0+1,t} - \sum_{j=1}^{n_0} w_j y_{jt} \right)^2  \quad \quad s.t. \quad \quad \sum_{j=1}^{n_0} w_j = 1 \quad w_j \geq 0
    \label{sc_est}
\end{gather}
where the two constraints ensure that the weights can be interpreted as probabilities. So, the formulation amounts to a constrained regression problem.\footnote{Note that the literature has disagreed on whether there are better  ways to estimate $w_j$ which also take account of covariates \citep{bot19}. See Appendix \ref{ap_cov_vs_out} for details.}

\subsection{Many treated units}\label{subsec_man}
While the previous section considered the case of $n_1 = 1$ and implicitly assumed $T_0$ is not short, many interesting applications in empirical microeconomics involve having $n_1 > 1$ and short $T_0$. This section illustrates why implementing SC is not trivial in this case. Similarly to Section \ref{sec_mot}, the problems can be made clear via a particular example.\footnote{See Section \ref{sec_emp} for more details} Consider Card's study \citeyearpar{car90} of the effect of the  Mariel Boatlift, a massive way of Cuban immigration to Miami in 1980, on natives' wages. 

Suppose that we have panel data $\boldsymbol{\Theta}$ on many treated workers in Miami and on many control workers in other cities that did not experience the treatment of immigration. For the reasons outlined in Section \ref{sec_mot}, we would like to use a SC approach rather than DiD. We are faced with two possibilities: either create a \textit{separate SC} for every treated individual in Miami or construct a single \textit{pooled SC} for all treated individuals in Miami that matches well the average wage in Miami. 

The benefit of \textit{separate SC} is that we can obtain an individual treatment effect which allows us to explore how the causal effects vary across different groups, e.g., for low-skilled versus high-skilled workers. However, we can quickly run into two problem: multiplicity of solutions and overfitting. The first issue renders SC infeasible here \citep{lho20}. Overfitting will result from the fact that SC is essentially a constrained regression but we will have very few observations and many regressors given small $T_0$ and big $n_0$.

Fortunately, one can still overcome the problem of multiplicity of solutions via changing the objective function in such a way that it picks the optimal SC out of all perfect SCs based on some criteria. This is the approach taken by \citet{lho20} who consider a similar setting as in this paper. Their estimator selects the SC which has the closest values of the matched variable to the actual treated observation. The example in Appendix \ref{ap_multiplicity} provides an illustration. However, they are still solving a complicated quadratic problem for every treated observation separately. Thus, their estimator can still suffer from overfitting, depending on what variables are included. 

On the other hand, we may consider calculating a (single) \textit{pooled SC} for all treated individuals in Miami, i.e., calculate a single set of weights. This approach has the benefit of not running into multiplicity of solutions and overfitting: we will be solving a constrained quadratic optimisation problem with $n_1 \times T$ observations and $n_0$ regressors rather than just $n_1$ obsevations and $n_0$ regressors as in \textit{separate SC}. However, with many treated individuals, the single SC will not be matching too well the time series of some outlying observations. For example, individual who earns a very low wage will get the same SC as an individual who earns a very high wage. While this is a serious limitation, it illustrates that if we can reduce the heterogeneity across individuals by classifying them into groups, then we can create SCs within each group and tackle  multiplicity of solutions and overfitting.

Moreover, under certain conditions, the \textit{pooled SC} method nests as a special case one common approach for policy evaluation of the Mariel Boatlift \citep{per18}. We can call this approach \textit{city-level SC}: aggregate the individual-level data on a city-level and then run SC on the city-level. So, we are creating SCs for the average wage in Miami by combining other US cities. Appendix \ref{ap_equiv} provides a set of restrictions, under which the \textit{pooled SC} reduces to \textit{city-level SC}: essentially, the space of weights, from which \textit{city-level SC} selects, is a subset of the space of weights, from which which \textit{pooled SC} selects. Unfortunately, the \textit{city-level SC} also has certain limitations. In particular, inference remains a challenge in this case, as we observe just one estimate of the treatment effect. More broadly, inference with SCs is still work in progress \citep{che20}. Furthermore, when $T_0$ is short, using SC methods on aggregate units is not recommended by \citet{aba10}, as the estimate of the ATT can be very biased.\footnote{In addition, when aggregating, we may be losing the heterogeneity of treatment effects. In contrast to \textit{separate SC} and \textit{pooled SC}, we cannot explore how the effect varies for different groups. One solution to this problem is to estimate different SCs for every group of interest, e.g. low-skilled vs high-skilled workers. I thank Barbara Petrongolo for pointing this out to me. However, when we suspect that there is considerable heterogeneity across many categories or even across continuous covariates, then simply dividing our dataset into groups may quickly become infeasible. }

So, it seems that neither the \textit{pooled SC}, nor the \textit{separate SC} are without problems. However, they motivate our estimator CSC, as it balances between the two extremes. This insight for balancing \textit{pooled SC} and \textit{separate SC} has also been exploited in the \textit{partially pooled SC} proposed very recently by \citet{ben21} who made their paper public while we were working on CSC. However, the distinction which the authors draw between \textit{pooled SC} and \textit{separate SC} is slightly different from ours. In particular, we and \citet{ben21} understand \textit{separate SC} similarly: create a separate synthetic counterfactual for every treated individual. However, we differ in how we understand \textit{pooled SC}. For us, \textit{pooled SC} refers to a SC-type of estimator that gives every treated individuals the same set of weights. In contrast, \citet{ben21} understand \textit{pooled SC} to be creating separate SCs  for every treated individual. Crucially, instead of matching the individual time-series of a person, their \textit{pooled SC} is picking weights that match the average time-series for all treated individuals. In a certain sense, we are pooling the \textit{weights} across individuals whereas they are pooling the \textit{outcomes} of treated individuals. As a result of this and other differences,\footnote{ Another key conceptual difference with \citet{ben21} is that they create an estimator for the staggered adoption case with multiple but not too many treated unit whereas we are focused on an estimator with non-staggered adoption with many treated unit. As a result, \citet{ben21} do not consider how issues such as overfitting and multiplicity of solutions affect the properties of their estimator. So, empirical researchers should choose between CSC and \textit{partially pooled SC} estimator based on whether they are facing staggered adoption or many treated units. } the two estimators are related but remain different in important ways. 

\subsection{Correlated Synthetic Controls}\label{sec_CSC}
The key distinction between \textit{separate SC} for each unit and \textit{pooled SC} is analogous to a key distinction in the panel data literature between fixed (or heterogeneous) coefficients models and pooled (or homogeneous) coefficients. If we want to allow for individual-specific coefficients $\beta_i$ in a linear regression model $y_{it} = \beta_i x_{it} + \epsilon_{it}$ without an intercept, we can run separate regressions for every observation $i$. This is equivalent to adding an interaction between the covariates and dummies for every observation as $y_{it} = \sum_{j=1}^N \mathbbm{1}\{j=i\} \beta_j x_{it} + \epsilon_{it}$. This model is sometimes called fixed \textit{coefficients} model in analogy to the fixed \textit{effects} models \citep{bal08}. We can estimate it either by OLS on the last equation or by a separate regression for every unit $i$, i.e., we run $N$ regressions of the type $y_{it} = \beta x_{it} + \epsilon_{it}$. On the other extreme, we can impose homogeneity on the slopes $\beta = \beta_1 = \dots = \beta_N$ in the linear model $y_{it} = \beta x_{it} + \epsilon_{it}$, which we can estimate via pooled OLS.

However, there is a middle ground between the two extremes of fixed coefficients and pooled coefficients: (correlated) random coefficients models \citep{woo03, sur11, hsi08}. Basically, these  models allow us to capture the heterogeneity in $\beta_i$ (in contrast to pooled OLS) without requiring a lot of data\footnote{This is a result of the incidental parameter problem: in $y_{it} = \beta_i x_{it} + \epsilon_{it}$ we need $T \to \infty$ as well, if we would like to estimate  $\beta_i $ consistently. However, in correlated random coefficients if we let $\beta_i = \beta + \psi z_i$ and $z_i$ is a discrete variable with categories $\{1,2, \dots, K\}$, we only need  $T*n_k \to \infty$ for consistency of $\beta_i$ where $n_k$ is the number of people for which  $z_i=k$. Or in other words we can get consistency only with $n_k \to \infty$.} for consistency (in contrast to fixed coefficients). The idea is that we allow $\beta_i$ to  differ across $i$ in a deterministic way based on some time-invariant $z_i$: we specify $\beta_i = \beta + \psi z_i$. These type of models are a generalisation of Chamberlain's (correlated) random \textit{effects} models that allow only the intercept to depend on observables \citep{cha82, cre08}. 

The main contribution of this paper is to apply this idea to the weights that postulated treated group's potential outcome without treatment as weighted average of donors' outcomes. So, we shall assume that the weights for treated unit $i$ follow such a (correlated) random coefficient model:
\begin{gather}
  y_{it}(0) = \eta_i + \sum_{j=1}^{n_0} w_{ij} y_{jt}(0) + e_{it} \quad s.t.  \label{CSC_mod} \\
  w_{ij} = \underbrace{\omega_j}_{ind.-invariant} + \underbrace{\boldsymbol{x}_i \boldsymbol{\alpha}^{j}}_{ind.-specific} \quad \quad (Random \ Coef.) \nonumber \\
    \sum_{j=1}^{n_0} w_{ij} = 1 \quad \quad  \forall j: w_{ij} \geq 0 \quad \quad (SC \ Constraints)  \nonumber 
\end{gather}
where we also allow for an intercept $\eta_i$, following suggestions in \citet{fer19} and \citet{dou18}. In the $(Random \ Coef.)$ constraint, each weight $w_{ij}$ has two parts:  an individual invariant part $\omega_j$ and individual specific part $\boldsymbol{x}_i \boldsymbol{\alpha}^{j}$ where $\boldsymbol{x}_i$ is a $(1\times K)$ vector of covariates and $\boldsymbol{\alpha}^{j}$ is a $(K \times 1)$ vector of coefficients. The individual-invariant part ensures we are close to the \textit{pooled SC} approach discussed above whereas the individual-specific allows us to introduce heterogeneity across weights as in \textit{separate SC}. 

We call this estimator CSC. Intuitively, the SCs of two individuals are similar (or correlated) if  they are similar in terms of observables, as it was the case for small and big German cities in the Motivating Example. It is useful to consider another example: 
\begin{example}
    Suppose we have three treated individuals $i \in \{ Ed, \ Mihai, \ Yi \ Ying \}$ and two  covariates: i) years of education and ii) marital status for being \underline{married}, \underline{single} or \underline{other}. Firstly, holding education the same across them, CSC will yield \textit{separate} SCs with different weights on donor $j$ if our individuals  have three different values of marital status. Let Yi Ying be \underline{single}, Ed be \underline{married} and Mihai be \underline{other}:
    $$ w_{Yi \ Ying, j} = \omega_j + \alpha^{single} \neq w_{Ed, j} = \omega_j + \alpha^{married} \neq  w_{Mihai, j} = \omega_j + \alpha^{other}$$
    However, if they have the same martial status, CSC will result in a single set of weights for all of them (\textit{pooled} SC). Secondly, let education differ across the treated individuals: Mihai has 11 years, Ed - 12 and Yi Ying - 17. Suppose also that Ed and Mihai share the same marital status that is different from Yi Ying. Then, Ed and Mihai will get similar weights on donors, albeit not exactly the same, as they are similar in terms of observables. In that sense, their SCs will be correlated. In contrast, Yi Ying will get a set of weights which is considerably different, given her covariates.
    \label{ex_weights}
\end{example}

A few things should be remarked in light of Example \ref{ex_weights} in order to relate CSC to the discussion on panel data. Firstly, the estimator balances between estimating one set of weights for all treated individuals and separate sets of weights for treated individuals. The reason is that if \textit{Mihai} and \textit{Ed} have exactly the same covariates, we will give them the same weights and so the same SC that balances between fitting well both of their time series simultaneously. Note that we implicitly assume that their time-series will not be too different, if they are similar in terms of observables. 

Secondly, we can see how CSC solves the problem of multiplicity of solutions. Suppose that \textit{Ed} and \textit{Mihai} both have multiple exact SCs and have the same covariates but their time series are not exactly the same.  Then the set of potential donor combinations that exactly match Mihai's time series will not be intersecting with the set of potential donor combinations that exactly match Ed's time series. Since we can only pick one set of weights for them, we will pick the one set of weights that creates a donor which matches simultaneously both of their time series as closely as possible but not exactly. In that way, we tackle multiplicity of solutions: by construction, there does not exist a SC which exactly matches both of their time series at the same time.

Thirdly, the use of covariates in constructing the weights allows us to capture heterogeneity across treated units. If \textit{Ed} and \textit{Mihai} have similar covariates, they get correlated SCs which allows us to capture the heterogeneity between them and the other treated person, namely \textit{Yi Ying}, who might have very different covariates. In a sense, we end up with two different groups of treated individuals which is similar to recent work on group fixed effects \citep{bon15}.

\subsection{Estimation} \label{subsec_est}
Next, one may wonder how the $w_{ij}$ are estimated. We can rewrite the correlated random coefficients model of the weights as a constrained optimisation problem for the pre-treatment period. In particular,the parameters $\alpha_j^{(k)}$ and $\omega_j$ can be found by solving: 
     \begin{equation} 
        \begin{gathered}
            \max_{\alpha_j^{(k)}, \omega_j} \sum_{i=1}^{n_1} \sum_{t=1}^{T_0} \left( y_{it} - \eta_i - \sum_{j=1}^{n_0} y_{jt} \omega_j - \sum_{j=1}^{n_0} \sum_{k=1}^K \alpha^{k}_j y_{jt} x_i^{(k)}  \right)^2 \quad \quad s.t. \\
     \quad  \forall i: \  \sum_{j=1}^{n_0} \left(\omega_j + \sum_{k=1} x_{i}^{(k)} \alpha^k_j\right) = 1 \quad \quad \quad  \forall (i,j): \ \omega_j + \sum_{k=1}^K x_{i}^{(k)} \alpha^k_j \geq 0
            \label{opt_pr_scal_2}
        \end{gathered}
    \end{equation}  
where $y_{it}$ are the potential outcomes without treatment, $\eta_i$ is an individual fixed effect,  $\omega_j$ is the individual-invariant part of the weight on donor $j$,  $x_i^{(k)}$ is the value of the $k$-th covariate for the treated observation  $i$ and $\alpha_j^{(k)}$ is the coefficient of covariate $k$ in determining the weight on donor $j$. One intuition for estimating the SC model with random coefficients (weights) is that we are essentially regressing the vector of outcomes for the treatment group on the outcomes for the donors and an interaction between donors' outcomes and treatment group's covariates, given some constraints on the coefficients. More formally, this is as a constrained regression of $y_{it}$ for the treated group on donors' $y_{jt}$ and an interaction between donors' $y_{jt}$ and treated group's $x^{(k)}_i$.  See Appendix \ref{sec_est_det} for more details, including a formulation in terms of the block matrices of $\boldsymbol{\Theta}$ and the use of package \texttt{CVXR} \citep{fu19} to code CSC in \texttt{R}.

Before proceeding to CSC's theoretical properties, it is important to discuss one limitation of the estimator: it does \textit{not} allow for continuous covariates.\footnote{I would like to thank Anders Kock for encouraging me to pursue this line of thought.} The reason is a restriction imposed by the fact that the random coefficients weights sum up to 1. To gain some intuition, consider the example of $K=3$ covariates with weights:
$$ \sum_{j=1}^{n_0} \left( \omega_j + x_i^1 \alpha_j^1 +  x_i^2 \alpha_j^2 + x_i^3 \alpha_j^3 \right)=1 $$
which can be rewritten as:
\begin{gather}
    x_i^{(1)} \sum_{j=1}^{n_0} \alpha_j^{(1)} + x_i^{(2)} \sum_{j=1}^{n_0} \alpha_j^{(2)} +  x_i^{(3)} \sum_{j=1}^{n_0} \alpha_j^{(3)} = 1 - \sum_{j=1}^{n_0} \omega_j
    \label{anders_res}
\end{gather}
The last expression should hold for every treated $i$ but note that the right-hand side is independent of $i$. If we have continuous $x^{(k)}_{i}$ and $ i \geq 2$, then (\ref{anders_res}) will not hold in general for all $i$ and the optimisation problem will be infeasible, as it is impossible to satisfy the constraint exactly. Nevertheless, suppose that $x^k_{(i)}$  are dummies for a single categorical variable with three mutually exclusive categories (e.g., married, divorced, other), then for the restriction to hold it is sufficient to have: $\sum_{j=1}^{n_0} \alpha_j^{(1)} = \sum_{j=1}^{n_0} \alpha_j^{(2)} = \sum_{j=1}^{n_0} \alpha_j^{(3)} $. Therefore, while the \textit{sum} of the $\alpha^{(k)}_j$ over $j$ will be constrained to be the same across the three categories $k$, the coefficient on the same donor $j$ across two different $k$ and $q$ groups, respectively $\alpha_j^{(k)}$ and $\alpha_j^{(q)}$, can be different. This is good news, as we can then achieve our objective of having heterogeneity in the weights on different donors.\footnote{A similar restriction on the sums applies if $x_i^{(k)}$ are \textit{not} mutually exclusive categories}

Regarding continuous predictors, it is still possible to integrate them by recoding such a variable (e.g., wages) as a discrete predictor (e.g., income brackets). However, there are approaches, allowing us to handle continuous covariates in a more systematic way. For example, we can pre-process continuous covariates prior to estimating CSC, as done in coarse exact matching \citep{iac12}. The idea is that there is an extra step which allow us to balance the donors and the treatment groups in terms of some continuous observables. As a result, we do not need to worry about controlling for this particular predictor in CSC.

\section{Theoretical properties}\label{sec_the}
This section compares the estimation error of the true ATT $\tau$ from the estimated $\hat{\tau}^{DiD}$ from DiD and from the estimated $\hat{\tau}^{CSC}$ from CSC when the data is generated from an interactive fixed effects model.\footnote{We use sometimes interactive fixed effects models and factor models interchangeably, although we try to prioritise the former term.} This is important, because it illustrates in what circumstances CSC should be preferred to DiD. Specifically, in the case of many treated units, SC methods are not the default choice in empirical work, even though sometimes they can perform better.

The main takeaway from this section is that CSC should be preferred in the cases when we believe that the treatment is  correlated with unobservable characteristics, i.e., selection on unobservables, under the data generation process (DGP) we consider. For this condition to hold, we also need the SCs to do a good job at predicting the pretreatment outcomes of the treated group. On the other hand, if the treatment is not strongly correlated with unobserved characteristics, DiD might be a better choice. We assume that our data is generated from an interactive fixed effects models. More generally, to the best of our knowledge there have been just a few papers exploring the theory behind SC in the case of many treated units \citep{lho20, ben21}. Unfortunately, these studies do not provide much guidance on when SC methods should be preferred over DiD. This section fills this gap by making a small step towards providing such conditions.

\subsection{Data Generating Process}\label{subsec_dgp_the}
As common in the literature on SC \citep{aba10, fer19}, we shall assume that each $y_{it}$ in $\boldsymbol{\Theta}$ follows an interactive fixed effects model \citep{bai09, moo15, hsi18}:
\begin{align}
  y_{it} =& \ \boldsymbol{\theta}_t \boldsymbol{x'_i}  + D_{it} \tau + v_{it} \label{dgp_sim_1} \\ 
     =& \ \boldsymbol{\theta}_t \boldsymbol{x'_i}  + D_{it} \tau + \boldsymbol{\lambda_t}  \boldsymbol{\mu_i} + \epsilon_{it} \label{dgp_sim_2}
\end{align}
where $\boldsymbol{x}_i$ is a $(1 \times K)$ vector of time-invariant covariates, $ \boldsymbol{\theta}_t$ are $(1 \times K)$ time-varying coefficients on these covariates,  $D_{it}$ is the treatment assignment, $v_{it}$ is a composite error term such that $v_{it} = \boldsymbol{\lambda_t}  \boldsymbol{\mu_i} + \epsilon_{it}$, $\boldsymbol{\lambda_t} $ is a $(1 \times F)$ vector of common factors, $\boldsymbol{\mu_i}$ is a $(F \times 1)$ vector of factor loadings and $\epsilon_{it}$ are idiosyncratic shocks. The main quantity of interest is $\tau$.\footnote{The reason why $\tau$ is not the ATE is because we are not going to assume that treatment assignment is randomly assigned.} The \textit{interactive} fixed effect structure $\boldsymbol{\lambda_t}  \boldsymbol{\mu_i}$ is a generalisation of the traditional \textit{additive} fixed effects $\lambda_t + \mu_i$.\footnote{In fact, for $F=2$, using the interactive fixed effects with $\boldsymbol{\mu_i} = (1, \mu_i)'$ and $\boldsymbol{\lambda_t} = (\lambda_t, 1) $ reduces to the additive fixed effects.} We can interpret the interactive factor structure as saying that each individual $i$ has some unobserved characteristics $\boldsymbol{\mu}_i$ such as ability or motivation that determine their outcome $y_{it}$. However, at different points in time, different unobserved factors from $\boldsymbol{\mu}_i$ matter, implying that their effects are time-varying which justifies including time-varying common factors $\boldsymbol{\lambda_t} $. 

We will also impose some further structure on the DGP:

\begin{restatable}[\underline{DGP restrictions}]{assumption}{assumptiondgp}
    We assume that:
    \begin{enumerate}[(i)]
          \item Treatment-not-at-random: $Cov(D_{it}, \boldsymbol{\mu_i}) \neq \boldsymbol{0}_F \implies Cov(D_{it}, v_{it}) \neq 0 $
           \item Errors are iid with $E[\epsilon_{it}] = 0$ and $E[\epsilon^2_{it}] \leq \infty$. They are independent from all other random variables.
          \item DGP of $\boldsymbol{x}_i$: $\boldsymbol{x}_i$ is independent of $ \epsilon_{it}$ and $D_{it} $  but $Cov(\boldsymbol{\mu}'_i, \boldsymbol{x}_i ) \neq \boldsymbol{0}_{F\times K}$
          \item $\boldsymbol{\mu_i}$ are stochastic with $E[\boldsymbol{\mu_i}] =  \boldsymbol{\mu}$
          \item $\boldsymbol{\lambda_t}$ are fixed parameters with $\frac{1}{T}\sum_{t=1}^T \boldsymbol{\lambda_t} =  \boldsymbol{ \bar{\lambda} }$
          \item One post-treatment period $T=T_0+1$
    \end{enumerate}
    where $D_{it}$ is the treatment indicator, $\boldsymbol{\mu}_i$ is the column vector of factor loadings, $\boldsymbol{\lambda}_t$ is the row vector of the common factors, $\boldsymbol{0}_F$ is a $(F \times 1)$ column vector of zeros and  $\boldsymbol{0}_{F\times K}$ is a $(F \times K)$ matrix of zeros.
    \label{ass_th}
\end{restatable}
Let us detail the different subparts of \textbf{Assumption \ref{ass_th}}. Firstly, the treatment indicator $D_{it}$ is assumed to be correlated with the unobserved component $\boldsymbol{\mu}_i$ (\textbf{Assumption \ref{ass_th}}.i). This means that essentially  $D_{it}$ is endogenous ($Cov(D_{it}, v_{it}) \neq 0$) and so we cannot estimate $\tau$ consistently by simply ignoring the composite structure of the error term and fitting (\ref{dgp_sim_1}). Secondly, we assume that for all individuals $i$ and time periods $t$ the errors $\epsilon_{it}$ are iid and mean zero with a finite second moment (\textbf{Assumption \ref{ass_th}}.ii). So, we do not allow $\epsilon_{it}$ to follow more complicated autoregressive processes.  Thirdly, we assume that $\boldsymbol{x}_i$ are uncorrelated with treatment assignment \textbf{but} could be correlated with the composite error term $v_{it}$ via the factor loadings (\textbf{Assumption \ref{ass_th}}.iii). This means that $\boldsymbol{x}_i$ are endogenous and we cannot estimate their coefficients directly. Fourthly, we assume that $\boldsymbol{\mu}_i$ and $\boldsymbol{\lambda}_t$ are respectively stochastic and fixed with means $E[\boldsymbol{\mu}_i] = \boldsymbol{\mu}$ and $\bar{\boldsymbol{\lambda}}$ that are \textit{not} constrained to be 0 necessarily (\textbf{Assumptions \ref{ass_th}}.iv and \textbf{\ref{ass_th}}.v). The reason for this particular choice stems from a suggestion in Hsiao \citeyearpar[p.665]{hsi18} that in cases of short $T$ and long $N$ assuming fixed common factors and stochastic factor loadings is reasonable. Lastly, we assume only one post-treatment period, as it simplifies the algebra and allows us to abstract from considerations of dynamic ATT (\textbf{Assumptions \ref{ass_th}}.vi).

Given this DGP, a natural question is why we should use SC, given the abundance of estimators for interactive fixed effects models \citep{hsi18}. For instance, Xu \citeyearpar{xu17} proposes the Generalised SC which directly fits such a model to the data in $\boldsymbol{\Theta}$. More generally, we can simply model $\boldsymbol{\Theta}$ directly via matrix completion techniques, as in \citet{ath21}. 

There are at least three good reasons for not taking this approach. 
Firstly, in practice, we rarely know the true model generating $\boldsymbol{\Theta}$: it could be an interactive fixed effects model as in \citet{xu17} but it could also be a vector autoregressive process \citep{aba20b}. On the other hand,  SC methods can be shown to work well under other DGPs \citep{aba10, ben21}. While with interactive fixed effects models we risk misspecifying the DGP of $\boldsymbol{\Theta}$, SC allows more flexibility with respect to the true DGP. So, we will not risk fitting a factor model to a data that actually follows a vector autoregression. 

Secondly, \textbf{Assumption \ref{ass_th}}.i allows for a non-zero correlation between our only time-varying covariate $D_{it}$ and the factor loadings $\boldsymbol{\mu}_i$. We can then rewrite  model (\ref{dgp_sim_2}) as 
\begin{gather*}
    y_{it} = D_{it} \tau + \left( \boldsymbol{\theta}_t \quad \boldsymbol{\lambda}_t \right) \begin{pmatrix} \boldsymbol{x}'_t \\ \boldsymbol{\mu}_i
    \end{pmatrix}  + \epsilon_{it}
\end{gather*} 
where we treat $\boldsymbol{\theta}_t$ as common factors and $\boldsymbol{x}_i$ as factor loadings. This allows us to apply \textbf{Remark 5.9.} from \citet{hsi18} which states that $\hat{\tau}$ will not be estimated consistently by common approaches for estimating interactive fixed effects models. Thus, even if we wanted to use an interactive fixed effects models, it will not estimate the quantity of interest consistently.

Thirdly, SC focuses on imputing the bottom-right panel of $\boldsymbol{\Theta}$ rather than predicting all of its entries as many matrix completion techniques would do. Given that often we are not interested in fitting the rest of $\boldsymbol{\Theta}$, SC can be more efficient as it exploits the structure of the matrix and focuses on a smaller prediction task. Of course, if we were faced with a more general matrix, e.g.\ different individuals are treated at different times, then matrix completion methods will be more appropriate.

\subsection{CSC Bound}
Let us consider how we can establish an upper bound on the estimation error of CSC. By estimation error, we mean the absolute value of the difference between the estimated ATT $\hat{\tau}^{CSC}$ and the true $\tau$, i.e. $|\hat{\tau}^{CSC} - \tau|$. Unfortunately, given how complicated the optimisation problem solved by CSC is, there is generally no obvious analytical solution for the weights and so we cannot derive an exact expression for the estimation error. For this reason, we will construct an upper bound for the estimation error, drawing on High-Dimensional Statistics. We will then compare this upper bound to the exact expression, derived for DiD. This comparison can then be interpreted as conservative: even under the worst possible estimation error for CSC, there are still certain conditions, under which CSC should be chosen over DiD.

Before establishing the bound, we need to make several additional assumptions. Firstly, we will make the \underline{Exact Fit} assumption which states that CSC will find a weighted average of donor units that exactly match treated individual $i$ in terms of both outcomes and covariates. Versions of this assumption for the case of a single treated unit are common in the SC literature, e.g., see \citet{aba10} and \citet{bot19}. 

\begin{assumption}[\underline{Exact Fit}]
Consider a $( n_0 \times n_1)$ matrix $\boldsymbol{W}$ such that each weight $w_{ij}$ follows a (correlated) random coefficient model $w_{ij} = \omega_j + \sum_{k=1}^K \alpha_{kj}x^{(k)}_i$ and each column of $\boldsymbol{W}$ sums up to 1 with every entry being non-negative. Matrix $\boldsymbol{W}$ satisfies exact fit in \textbf{pre-treatment period} if:
\begin{enumerate}[(i)]
    \item For the outcome variable: $ \forall t \in \{1,2, \dots, T_0 \}: \quad y_{it} = \sum_{j=1}^{n_0} w_{ij} y_{jt}$
    \item For the covariates: $ \forall k \in \{1,2, \dots, K \}: \quad x^k_{i} = \sum_{j=1}^{n_0} w_{ij} x^{(k)}_j $
\end{enumerate}
\label{ex_fit_as}
\end{assumption}
Note, however, that our statement of the assumption is not identical to Abadie et al.'s original assumption \citeyearpar{aba10}. The reason is that when generalised to the many treated unit settings, they allow for the existence of \textit{a set} of weights satisfying exact fit for each $i$ and not for a unique combination of weights. This might be problematic, due to multiplicity of solutions. Nevertheless, if Abadie's assumption holds, then our \underline{Exact Fit} assumption will hold, as we can simply make $x_i^k$ a vector of dummies for each observation. Therefore, one may see our assumption as an extension of Abadie's. 

Another key difference is that our assumption also ruled out multiplicity of solutions by not assuming a set of weights but just one unique value of the weights that satisfies exact fit. Theoretically, this may seem restrictive and in applications it is unlikely to hold. Suppose that we have two treated units with the same observables: it is implausible that for both we can create exact SCs unless their time series match exactly. We may instead interpret \underline{Exact Fit} as suggesting that there is a unique set of correlated random weights that balances the fits of both treated units as much as possible, albeit not exactly. In that case, \underline{Exact Fit} would only hold approximately but we would have rules out multiplicity of solutions. In future work, we hope to formalise \underline{Approximate Fit} assumption and explore if it can be used to derive a bound on CSC. Intuitively, as a relaxation of \underline{Exact Fit}, it should make the bound on CSC's estimation error less tight.

Returning to the \underline{Exact Fit} assumption, we can prove an extremely useful lemma that generalises a result from \citet{aba10} for the case of many treated units and one post-treatment period. This is Lemma \ref{lem_aba} \underline{Abadie's representation} which allows us to rewrite the estimation error without the unobserved component $\boldsymbol{\mu}$. Note that we need to assume invertability of matrix $\boldsymbol{\lambda}'_{pre} \boldsymbol{\lambda}_{pre}$ and a necessary condition for this is $T_0 > F$ which suggests that our DGP should not be too complicated relative to how many time periods we observe. 

\begin{restatable}[\underline{Abadie's representation}]{lemma}{lemaba}
     Assume that:
    \begin{enumerate}[(i)]
        \item The true DGP is the interactive fixed effects model in (\ref{dgp_sim_1})
        \item Assumption \ref{ex_fit_as}. \underline{Exact Fit} holds
        \item The $F \times F$ matrix $\boldsymbol{\lambda}'_{pre} \boldsymbol{\lambda}_{pre}$ is invertible 
    \end{enumerate}
    Then, we can write the estimation error in the post-treatment period $T=T_0+1$ as:
    \begin{align*}
        \hat{\tau}^{CSC} - \tau = &  \frac{1}{n_1}   \left[ \boldsymbol{\lambda}_T (\boldsymbol{\lambda}'_{pre}\boldsymbol{\lambda}_{pre})^{-1} \boldsymbol{\lambda}'_{pre} \sum_{i=n_0+1}^N   \left(  \sum_{j=1}^{n_1} \hat{w}_{ij}  \boldsymbol{\epsilon}_{j,pre} - \boldsymbol{\epsilon}_{i,pre} \right)\right] \\  
        + &\frac{1}{n_1} \left[ \sum_{i=n_0+1}^N  \left( \epsilon_{iT} - \sum_{j=1}^{n_0} \hat{w}_{ij} \epsilon_{jT} \right)  \right]
   \end{align*}
   where $\hat{w}_{ij}$ is the random coefficient weight of donor $j$ on individual $i$, $\boldsymbol{\lambda}_T$ are the common factors at post-treatment time $T$, $\boldsymbol{\lambda}'_{pre}\boldsymbol{\lambda}_{pre}$ is the $F\times F$ matrix of interacted pre-treatment common factors, $\boldsymbol{\epsilon}_{j,pre}$ is a ($T_0 \times 1$) vector of error terms for observation $i$ in the pretreatment period and $\epsilon_{jT}$ is the error term for individual $j$ at time $T$.
   \label{lem_aba}
\end{restatable}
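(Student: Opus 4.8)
The plan is to substitute the interactive fixed effects DGP of (\ref{dgp_sim_1}) into the CSC estimation error and then peel off the unwanted terms using the two halves of the \underline{Exact Fit} assumption: part (ii) kills the covariate contribution, and part (i) lets us solve out the unobserved loadings $\boldsymbol{\mu}_i$. First I would write the error explicitly. Because donors are never treated and there is a single post-treatment date $T=T_0+1$, the CSC counterfactual for a treated unit $i$ is $\widehat{y}_{iT}(0)=\sum_{j=1}^{n_0}\hat w_{ij}y_{jT}$ (consistent with the no-intercept DGP in condition (i)), so
\[
 \hat\tau^{CSC}-\tau=\frac{1}{n_1}\sum_{i=n_0+1}^{N}\Bigl(y_{iT}-\sum_{j=1}^{n_0}\hat w_{ij}y_{jT}\Bigr)-\tau .
\]
Plugging (\ref{dgp_sim_1}) in with $D_{iT}=1$ for the treated unit and $D_{jT}=0$ for each donor, the treatment term returns exactly $\tau$ and cancels the $-\tau$; the covariate term becomes $\boldsymbol{\theta}_T\bigl(\boldsymbol{x}_i'-\sum_j\hat w_{ij}\boldsymbol{x}_j'\bigr)=\boldsymbol{0}$ by part (ii) of Assumption \ref{ex_fit_as}. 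What survives is
\[
 \hat\tau^{CSC}-\tau=\frac{1}{n_1}\sum_{i=n_0+1}^{N}\Bigl[\boldsymbol{\lambda}_T\Bigl(\boldsymbol{\mu}_i-\sum_{j=1}^{n_0}\hat w_{ij}\boldsymbol{\mu}_j\Bigr)+\epsilon_{iT}-\sum_{j=1}^{n_0}\hat w_{ij}\epsilon_{jT}\Bigr],
\]
so everything reduces to rewriting $\boldsymbol{\mu}_i-\sum_j\hat w_{ij}\boldsymbol{\mu}_j$ without reference to $\boldsymbol{\mu}$.

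For that I would turn to the pre-treatment period. For each $t\in\{1,\dots,T_0\}$, part (i) of \underline{Exact Fit} gives $y_{it}=\sum_j\hat w_{ij}y_{jt}$; substituting (\ref{dgp_sim_1}) with $D_{it}=0$ and again cancelling the covariate term through part (ii) leaves the scalar identity $\boldsymbol{\lambda}_t\bigl(\boldsymbol{\mu}_i-\sum_j\hat w_{ij}\boldsymbol{\mu}_j\bigr)=\sum_j\hat w_{ij}\epsilon_{jt}-\epsilon_{it}$. Stacking these over $t=1,\dots,T_0$ produces the $T_0\times F$ linear system $\boldsymbol{\lambda}_{pre}\bigl(\boldsymbol{\mu}_i-\sum_j\hat w_{ij}\boldsymbol{\mu}_j\bigr)=\sum_j\hat w_{ij}\boldsymbol{\epsilon}_{j,pre}-\boldsymbol{\epsilon}_{i,pre}$ in the $F$-vector $\boldsymbol{\mu}_i-\sum_j\hat w_{ij}\boldsymbol{\mu}_j$. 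Under condition (iii) the matrix $\boldsymbol{\lambda}'_{pre}\boldsymbol{\lambda}_{pre}$ is invertible, hence $\boldsymbol{\lambda}_{pre}$ has full column rank $F$ (which in particular forces $T_0\ge F$), so the system has a unique solution, recovered by pre-multiplying both sides by $(\boldsymbol{\lambda}'_{pre}\boldsymbol{\lambda}_{pre})^{-1}\boldsymbol{\lambda}'_{pre}$. Inserting the resulting expression for $\boldsymbol{\mu}_i-\sum_j\hat w_{ij}\boldsymbol{\mu}_j$ back into the displayed error and moving the sum over $i$ inside yields exactly the representation claimed in Lemma \ref{lem_aba}.

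The argument is essentially mechanical, and the one place that requires care is this inversion step: one must check that $(\boldsymbol{\lambda}'_{pre}\boldsymbol{\lambda}_{pre})^{-1}\boldsymbol{\lambda}'_{pre}(\,\cdot\,)$ genuinely recovers the vector $\boldsymbol{\mu}_i-\sum_j\hat w_{ij}\boldsymbol{\mu}_j$ rather than merely a least-squares projection of it. This works precisely because \underline{Exact Fit} in the pre-treatment outcomes guarantees that the right-hand side $\sum_j\hat w_{ij}\boldsymbol{\epsilon}_{j,pre}-\boldsymbol{\epsilon}_{i,pre}$ lies in the column space of $\boldsymbol{\lambda}_{pre}$, so a pre-image exists, and full column rank makes it unique; the normal-equations formula then returns that pre-image. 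A second, minor point worth making explicit when invoking Assumption \ref{ex_fit_as} is that the \emph{same} weights $\hat w_{ij}$ must simultaneously reproduce the pre-treatment outcomes, reproduce the covariates, and be the ones used to impute the post-treatment counterfactual — all three uses appear above and all three are built into the statement of \underline{Exact Fit}.
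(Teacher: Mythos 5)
Your proof is correct and follows essentially the same route as the paper's: substitute the interactive fixed effects DGP into the estimation error, cancel the covariate term via Exact Fit (ii), then use Exact Fit (i) on the stacked pre-treatment equations and pre-multiply by $(\boldsymbol{\lambda}'_{pre}\boldsymbol{\lambda}_{pre})^{-1}\boldsymbol{\lambda}'_{pre}$ to eliminate $\boldsymbol{\mu}_i-\sum_j\hat w_{ij}\boldsymbol{\mu}_j$. The only cosmetic difference is that you solve the linear system unit by unit before summing over $i$ while the paper sums over $i$ first, which is immaterial by linearity; your remark that the normal-equations formula recovers the exact pre-image (not merely a projection) because the pre-treatment identity holds exactly is a valid and worthwhile clarification of a step the paper leaves implicit.
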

\begin{proof}
    The main proofs are relegated to Appendix \ref{sec_proof}. See Appendix \ref{sec_proof_lem_aba} for the proof of this Proposition.
\end{proof}
The proof of the lemma involves rewriting the estimation error in terms of the DGP for the outcome variables $y_{it}$ and substituting out the $\boldsymbol{\mu}_i$, using the \underline{Exact Fit} assumption. While this may seem complicated, it only consists of several tedious algebraic steps. 

Next, we also assume that the errors $\epsilon_{it}$ are iid \texttt{subGaussian}$(\sigma^2)$.\footnote{See \citet{rig15} for a more detailed discussion of \texttt{subGaussian} variables} While this assumption may seem restrictive, it only constraints the distribution not to have fat tails relative to a Normal distribution with variance $\sigma^2$. Moreover, it allows us to draw on techniques from High-Dimensional Statistics which are extremely useful for bounding different quantities such as the estimation error in our case \citep{rig15}. 
\vspace{-25pt}
\begin{assumption}[\underline{SubG Errors}]
$\epsilon_{it}$ are iid \texttt{subG}($\sigma^2$) 
\label{subg_err}
\end{assumption}
Lastly, we make two further assumptions: the matrix $\boldsymbol{\lambda}'_{pre} \boldsymbol{\lambda}_{pre}$ in the pre-treatment period is invertible (so that we can use Lemma \ref{lem_aba} \underline{Abadie's Representation}) and we work on an Euclidean space (which is useful, as in the proof we need to calculate operator norms of matrices). We can find an upper bound on the estimation error for $\hat{\tau}^{CSC}$:

\begin{restatable}[\underline{CSC Bound Estimation Error}]{proposition}{propCSCesterr}
    Suppose that:
    \begin{enumerate}[(i)]
        \item DGP is given by interactive fixed effects model in (\ref{dgp_sim_1}) with Assumption \ref{ass_th}.
        \item Assumption \ref{ex_fit_as} \underline{Exact Fit} holds.
        \item Assumption \ref{subg_err} \underline{SubG Errors} holds.
        \item $F \times F$ matrix $\boldsymbol{\lambda}'_{pre} \boldsymbol{\lambda}_{pre}$ is invertible 
        \item We work on an Euclidean space
    \end{enumerate}
    Then, with probability at least $1 - \frac{3}{\exp(0.25h^2)}$ in the single post-treatment period $T$ the estimation error for the ATT estimated by CSC satisfies for $h>0$:
    \begin{gather}
    |\widehat{\tau}^{CSC} - \tau| <  \left(
   \frac{F\lambda_{max}^2}{\phi_{min} } \sqrt{\frac{2\sigma^2n_0}{T_0} + \frac{h\sigma}{T_0^{1.5}} } \right)
    + \left( \frac{h\sigma F \lambda_{min}^2}{\phi_{max} ^2}\right)
    + h \sigma \sqrt{2}
    \label{CSC_bou}
    \end{gather}
    where $\lambda_{min}$ and $\lambda_{max}$ denote respectively the minimum and maximum common factor $\lambda_{fs}$ in absolute value for either pre-treatment or post-treatment period. Similarly, $\phi_{min}$ and $\phi_{max}$ are respectively the minimum and maximum eigenvalue of matrix $\frac{1}{T_0}\boldsymbol{\lambda}'\boldsymbol{\lambda}$
    \label{prop_CSC_est_err}
\end{restatable}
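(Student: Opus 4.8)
The starting point is Lemma~\ref{lem_aba}, which already expresses $\hat{\tau}^{CSC}-\tau$ as a sum of two terms with the unobserved loadings $\boldsymbol{\mu}_i$ eliminated (this is why assumptions (i), (ii), (iv) of the proposition are exactly what Lemma~\ref{lem_aba} needs). The plan is to bound the absolute value of each term separately by the triangle inequality and recombine. Write the first term as $\tfrac{1}{n_1}\boldsymbol{P}\sum_{i=n_0+1}^{N}\bigl(\sum_{j}\hat w_{ij}\boldsymbol{\epsilon}_{j,pre}-\boldsymbol{\epsilon}_{i,pre}\bigr)$ with $\boldsymbol{P}\equiv\boldsymbol{\lambda}_T(\boldsymbol{\lambda}'_{pre}\boldsymbol{\lambda}_{pre})^{-1}\boldsymbol{\lambda}'_{pre}$, and split it once more into a ``donor-error'' piece and a ``treated-error'' piece; together with the second (post-treatment) term of Lemma~\ref{lem_aba}, these are the three pieces that will produce the three summands of \eqref{CSC_bou}. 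Assumption (iv) of the proposition guarantees $\boldsymbol{P}$ is well defined, and assumption (v) lets us use Euclidean operator norms throughout.

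Next I would bound the deterministic factors. By submultiplicativity, $\|\boldsymbol{P}\|\le\|\boldsymbol{\lambda}_T\|\,\|(\boldsymbol{\lambda}'_{pre}\boldsymbol{\lambda}_{pre})^{-1}\|\,\|\boldsymbol{\lambda}'_{pre}\|$; bounding $\|\boldsymbol{\lambda}_T\|$ and $\|\boldsymbol{\lambda}'_{pre}\|$ entrywise in terms of $F$ and $\lambda_{max}$, and $\|(\boldsymbol{\lambda}'_{pre}\boldsymbol{\lambda}_{pre})^{-1}\|$ by $1/(T_0\phi_{min})$ via the definition of $\phi_{min}$, yields an operator-norm bound of order $F\lambda_{max}^2/(T_0\phi_{min})$ relevant for the donor-error piece; carrying out the analogous manipulation on the treated-error piece is where the coefficient $F\lambda_{min}^2/\phi_{max}^2$ comes out.

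Then comes the stochastic part, which is the crux. The weights $\hat w_{ij}$ are random and may depend on the $\epsilon$'s, so a weighted sum of sub-Gaussians need not itself be sub-Gaussian; the way around this is to use that each row $(\hat w_{i\cdot})$ is a probability vector, so by convexity of the squared norm $\|\sum_j\hat w_{ij}\boldsymbol{\epsilon}_{j,pre}\|^2\le\sum_{j=1}^{n_0}\|\boldsymbol{\epsilon}_{j,pre}\|^2=\sum_{j,t}\epsilon_{jt}^2$, reducing the relevant aggregate to an unweighted sum of squares of i.i.d.\ \texttt{subG}$(\sigma^2)$ variables, with mean $n_0T_0\sigma^2$ and a sub-exponential tail. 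A Bernstein/sub-exponential deviation inequality then gives, on an event of probability at least $1-\exp(-0.25h^2)$, a ``mean $+$ deviation'' bound which, after the factors of $n_1$ cancel and a square root is taken, produces the $\sqrt{2\sigma^2 n_0/T_0+h\sigma/T_0^{1.5}}$ factor. For the second term of Lemma~\ref{lem_aba}, $\tfrac{1}{n_1}\sum_i(\epsilon_{iT}-\sum_j\hat w_{ij}\epsilon_{jT})$, one uses that the total coefficient mass on the donor errors is $\tfrac{1}{n_1}\sum_j\sum_i\hat w_{ij}=1$ (columns of $\boldsymbol{W}$ sum to one) and $\sum_j(\sum_i\hat w_{ij})^2\le(\sum_{i,j}\hat w_{ij})^2$ since the weights are nonnegative, so this term is \texttt{subG}$(2\sigma^2)$ and deviates by at most $h\sigma\sqrt2$ with probability at least $1-\exp(-0.25h^2)$. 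A union bound over the three tail events gives the stated confidence level $1-3/\exp(0.25h^2)$.

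The main obstacle I anticipate is precisely this stochastic-weights step together with the bookkeeping of spectral constants: one has to be careful that the convexity/nonnegativity reductions hold simultaneously for every treated $i$, so that no extra $\log n_0$ maximal-inequality factor is introduced, and that the correct extreme factor ($\lambda_{max}$ vs.\ $\lambda_{min}$) and eigenvalue ($\phi_{min}$ vs.\ $\phi_{max}$) is matched to each operator-norm factor so that the coefficients $F\lambda_{max}^2/\phi_{min}$, $F\lambda_{min}^2/\phi_{max}^2$ and $\sqrt2$ emerge exactly as in \eqref{CSC_bou}. The concentration estimates themselves are standard once the problem has been reduced to sums of i.i.d.\ sub-Gaussian squares and sub-Gaussian linear forms.
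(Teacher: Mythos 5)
Your proposal is correct and follows essentially the same route as the paper's proof: Abadie's representation from Lemma~\ref{lem_aba}, a three-way split of the error, spectral bounds producing the $F\lambda_{max}^2/\phi_{min}$ and $F\lambda_{min}^2/\phi_{max}^2$ coefficients, elimination of the data-dependent weights via the simplex constraints, a Hanson--Wright-type concentration bound for the donor-error quadratic form, sub-Gaussian tail bounds for the remaining two pieces, and a union (Fr\'echet) bound for the final probability. The only differences are cosmetic --- you use convexity of $\|\cdot\|^2$ over the simplex where the paper uses Cauchy--Schwarz with $\|\hat{\boldsymbol{w}}_i\|_2\le 1$, and a Bernstein sub-exponential inequality where the paper invokes a corollary of Hsu, Kakade and Zhang's version of Hanson--Wright.
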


\begin{proof}
    See Appendix \ref{sec_proof_prop_CSC_est_err}.
\end{proof}
The strategy for the proof is to use \underline{Abadie's representation} for the estimation error. Then, we use various properties of \texttt{subG} variables to bound each quantity in the new expression for the estimation error. Note that when proving the theoretical result we focus on CSC \textit{without} an intercept in the objective function for simplicity.

Let us now interpret the expression for the upper bound of CSC (\ref{CSC_bou}). Firstly, the parameter $h$ controls the probability with which the bound obtains. The bigger $h$, the less precise our bound, but the higher the probability it holds. On the other hand, when we increase $T_0$ or decrease $F$, our estimate of $\tau$ becomes better. This makes intuitive sense, as adding more observations or decreasing the complexity of the DGP should allow us to estimate the weights more precisely. In particular, the first term in the bound will disappear as $T_0 \to \infty$. Furthermore, the upper bound increases with $\sigma^2$ which is approximately the variance of the strictly exogenous error terms in the DGP.\footnote{Approximately, because we assume the $\epsilon_{it}$ are \texttt{SubG}($\sigma^2$) and so their second moment can very well be smaller.} One reason why this might be happening is that with fixed $N$ and $T_0$ our SCs can match well the  part $ \boldsymbol{\theta}_t \boldsymbol{x}_i +  \boldsymbol{\lambda}_t \boldsymbol{\mu}_i$ of the outcomes $y_{it}$ but with bigger $\sigma^2$ it gets difficult to match the idiosyncratic shocks $\boldsymbol{\epsilon}_{it}$. 

The role of $n_0$ in the bound is ambiguous: one may expect that the more donors we have, the better our SCs will be. However, given how we have constructed the proof, the \textit{upper} bound increases in $n_0$, so that the more donors we have, the bigger the estimation error.  We inspect further the importance of $n_0$ in the simulation in the next section. Next, let us consider the importance of the common factors $\lambda_{tf}$. It is clear that we do not want the biggest common factor $\lambda_{max} = \max_{t \in (1,\dots, T_0, T), f \in (1, \dots, F)}|\lambda_{tf}|$ to be too large. This requires that no particular time period $t$ experienced a very large shock as reflected in the common factor. In applications, empirical researchers can leverage their domain knowledge to check if this is the case. 

Lastly, the second term in the bound $\left( \frac{h\sigma F \underaccent{\tilde}{\lambda}^2}{\phi_{max} ^2}\right)$ will usually be quite small, as we are dividing the smallest common factor by the largest eigenvalue. It is, thus, less of a concern than controlling the other two terms.

\subsection{Estimation Error of DiD} \label{sec_DiD_bou}
Although the previous section found an \textit{upper} bound for the estimation error of $\hat{\tau}^{CSC}$, in the case of DiD we can find an exact analytical solution for the estimation error. The reason is that we estimate via OLS the two-way fixed effects model $y_{it} = \rho + \gamma_i + \delta_t + D_{it} \tau + u_{it}$ and so the optimisation problem that is solved to obtain $\hat{\tau}^{DiD}$  has a closed-form solution. Proposition \ref{prop_DiD_est_err} gives an exact asymptotic expression for the estimation error of $\tau^{DiD}$, as the number of treated units $n_1$ goes to infinity:

\begin{restatable}[\underline{DiD Estimation Error}]{proposition}{propDiDesterr}
Suppose:
\begin{enumerate}[(i)]
    \item DGP is given by the interactive fixed effects model in (\ref{dgp_sim}) with Assumption \ref{ass_th}.
    \item We estimate via OLS the model $y_{it} = \rho + \gamma_i + \delta_t + D_{it} \tau + u_{it}$ to get $\hat{\tau}^{DiD}$
\end{enumerate}
Then, as $n_1 \to \infty$ the estimation error for DiD is:
\begin{gather}
    |\hat{\tau}^{DiD} - \tau | \stackrel{p}{
\to} \left| \frac{(\bar{\boldsymbol{\lambda}}_{pre} -  \boldsymbol{\lambda}_T) (\boldsymbol{\bar{\mu} }_{don} - E[\boldsymbol{\mu}_i|D_{it}=1])}{n_0} \right|
    \label{DiD_bou}
\end{gather}  
where $\bar{\boldsymbol{\lambda}}_{pre}$ is $(1\times F)$ vector of the average value of the common factors in the pre-treatment period, $\boldsymbol{\lambda}_T$ is $(1\times F)$ vector of the common factors in the single post-treatmnet period, the $(F \times 1)$ vector $\boldsymbol{\bar{\mu} }_{don}$ is the average of the factor loadings for the donors and $E[\boldsymbol{\mu}_i|D_{it}=1]$ is the $(F \times 1)$ vector with expected value of the factor loadings in the treatment group, .
\label{prop_DiD_est_err}
\end{restatable}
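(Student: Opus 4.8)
The plan is to exploit the fact that, with a single post-treatment period and a common adoption date, the OLS estimator of $\tau$ in the two-way fixed effects model $y_{it}=\rho+\gamma_i+\delta_t+D_{it}\tau+u_{it}$ has a closed form. By the Frisch--Waugh--Lovell theorem, $\hat{\tau}^{DiD}$ is the coefficient from regressing the within-transformed (unit- and time-demeaned) outcome on the within-transformed treatment indicator, and for this treatment design that normal equation collapses to the familiar difference-in-differences contrast of group--time means. So the first step is to show
\begin{gather*}
\hat{\tau}^{DiD} = \left( \bar{y}^{\,post}_{n_1} - \bar{y}^{\,pre}_{n_1} \right) - \left( \bar{y}^{\,post}_{n_0} - \bar{y}^{\,pre}_{n_0} \right),
\end{gather*}
where $\bar{y}^{\,post}_{n_1},\bar{y}^{\,pre}_{n_1}$ are the treated group's mean outcomes after and before $T_0$ and $\bar{y}^{\,post}_{n_0},\bar{y}^{\,pre}_{n_0}$ are the donors' analogues. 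Substituting $D_{it}\tau$ out and using $T=T_0+1$ turns this into an exact expression for $\hat{\tau}^{DiD}-\tau$ that involves only the non-$D$ part of the DGP.

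Second, I would plug the interactive fixed effects DGP $y_{it}=\boldsymbol{\theta}_t\boldsymbol{x}'_i+D_{it}\tau+\boldsymbol{\lambda}_t\boldsymbol{\mu}_i+\epsilon_{it}$ into that expression and decompose $\hat{\tau}^{DiD}-\tau$ into three additive pieces: a covariate piece proportional to $(\boldsymbol{\theta}_T-\bar{\boldsymbol{\theta}}_{pre})(\bar{\boldsymbol{x}}_{n_1}-\bar{\boldsymbol{x}}_{n_0})'$; a factor piece $(\boldsymbol{\lambda}_T-\bar{\boldsymbol{\lambda}}_{pre})(\bar{\boldsymbol{\mu}}_{n_1}-\bar{\boldsymbol{\mu}}_{don})$; and an idiosyncratic piece built from the pre- and post-treatment averages of $\epsilon_{it}$ over each group. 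The time-average objects $\bar{\boldsymbol{\theta}}_{pre}$ and $\bar{\boldsymbol{\lambda}}_{pre}$ appear because, with a single post-period, the grand time averages are pinned down by $\bar{\boldsymbol{\lambda}}_{pre}$ and $\boldsymbol{\lambda}_T$ (and likewise for $\boldsymbol{\theta}$).

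Third, I would let $n_1\to\infty$ with the donor pool held fixed. Since the treated units' $\boldsymbol{x}_i$, $\boldsymbol{\mu}_i$ and $\epsilon_{it}$ are i.i.d.\ draws (Assumption~\ref{ass_th}(ii)--(iv)), the weak law of large numbers gives $\bar{\boldsymbol{x}}_{n_1}\stackrel{p}{\to}E[\boldsymbol{x}_i\mid D_{it}=1]$, $\bar{\boldsymbol{\mu}}_{n_1}\stackrel{p}{\to}E[\boldsymbol{\mu}_i\mid D_{it}=1]$, and the treated-group averages of $\epsilon_{it}$ converge to zero. By Assumption~\ref{ass_th}(iii), $\boldsymbol{x}_i$ is independent of $D_{it}$, so the treated and donor groups are balanced in $\boldsymbol{x}$ and the covariate piece is asymptotically negligible; the treated-side part of the idiosyncratic piece vanishes by the same LLN. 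What remains is the factor piece, and it does \emph{not} vanish: Assumption~\ref{ass_th}(i), $\mathrm{Cov}(D_{it},\boldsymbol{\mu}_i)\neq\boldsymbol{0}_F$, is precisely the statement that $E[\boldsymbol{\mu}_i\mid D_{it}=1]$ differs from the donors' loading average, so the selection-on-unobservables channel persists. Rearranging signs, applying the continuous mapping theorem and taking absolute values then delivers the stated limit.

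The main obstacle is the first step: verifying carefully that the multi-unit, multi-period two-way fixed effects OLS coefficient really reduces to the two-group difference-in-differences contrast, which requires tracking the factors $n_0$, $n_1$ and $T_0$ through the double-demeaning and the normal equations; and then being precise about which remainder terms are $o_p(1)$ as $n_1\to\infty$. In particular the covariate term and the donor-side idiosyncratic term are negligible only because the donor block is held fixed (equivalently, treated as large) relative to the exploding treated block, and stating this cleanly is where the care is needed. Once the closed form is pinned down, the rest is a routine application of the law of large numbers and the continuous mapping theorem.
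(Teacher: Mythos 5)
Your route is genuinely different from the paper's. The paper also starts from Frisch--Waugh--Lovell, but then grinds out the annihilator matrix $\boldsymbol{M}=\boldsymbol{I}_{NT}-\boldsymbol{Z}(\boldsymbol{Z}'\boldsymbol{Z})^{-1}\boldsymbol{Z}'$ explicitly, inverting $\boldsymbol{Z}'\boldsymbol{Z}$ by the block matrix-inversion and Sherman--Morrison--Woodbury lemmas and then evaluating $\boldsymbol{D}'\boldsymbol{M}\boldsymbol{D}$ and $\boldsymbol{D}'\boldsymbol{M}\,Vec(\boldsymbol{\lambda\mu}')$ entry by entry. Your first step instead uses the fact that, for a common adoption date in a balanced panel, the double-demeaned regressor factorises as $\ddot{D}_{it}=(G_i-\bar G)(P_t-\bar P)$, which collapses the TWFE coefficient to the group--time means contrast. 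That is much cleaner and, done carefully, is a correct identity here (with $\bar y^{pre}$ the average over all $T_0$ pre-periods); it buys you a short, transparent derivation in place of several pages of block-matrix algebra.

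There are, however, two concrete gaps. First, your claim that the covariate piece and the donor-side idiosyncratic piece are $o_p(1)$ fails under the asymptotics you (and the proposition) adopt: $n_1\to\infty$ with $n_0$ \emph{fixed}. The treated-side averages $\bar{\boldsymbol{x}}_{n_1}$, $\bar{\boldsymbol{\mu}}_{n_1}$ and $\bar\epsilon_{n_1,t}$ do converge by the LLN, but $\bar{\boldsymbol{x}}_{n_0}$ and $\bar\epsilon_{n_0,T}-\bar\epsilon^{pre}_{n_0}$ are averages over a finite, fixed donor pool and do not converge to $E[\boldsymbol{x}_i]$ and $0$; so the term $(\boldsymbol{\theta}_T-\bar{\boldsymbol{\theta}}_{pre})(E[\boldsymbol{x}_i]-\bar{\boldsymbol{x}}_{n_0})'$ and the donor noise survive in the limit. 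Independence of $\boldsymbol{x}_i$ from $D_{it}$ equalises \emph{population} means, not the fixed donor sample mean, and your parenthetical ``held fixed (equivalently, treated as large)'' conflates two incompatible regimes -- you must pick one and the proposition's is the fixed-$n_0$ one (note the target limit itself retains the finite-sample object $\bar{\boldsymbol{\mu}}_{don}$ rather than a donor population mean). Second, your decomposition, carried through correctly, yields the factor piece $(\bar{\boldsymbol{\lambda}}_{pre}-\boldsymbol{\lambda}_T)(\bar{\boldsymbol{\mu}}_{don}-E[\boldsymbol{\mu}_i|D_{it}=1])$ \emph{without} the division by $n_0$ that appears in the stated display: the normalising constants $\frac{n_1n_0}{N}\cdot\frac{T_0}{T}$ cancel exactly between numerator and denominator of the FWL ratio. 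So your argument, as proposed, does not arrive at the proposition as written; you would need either to locate where a factor $1/n_0$ enters (it does not, on your route) or to reconcile your answer with the paper's normalisation of $\boldsymbol{D}'\boldsymbol{M}\boldsymbol{D}$ and $\boldsymbol{D}'\boldsymbol{M}\,Vec(\boldsymbol{\lambda\mu}')$, which is where the paper's $1/n_0$ originates.
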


\begin{proof}
    The proof is in Appendix \ref{proof_bound_DiD}.
\end{proof}
The main strategy for the proof is to apply the Frisch-Waugh-Lovell Theorem to obtain an expression for $\hat{\tau}^{DiD}$. Essentially, (\ref{DiD_bou}) gives the expression for the asymptotic bias of $\hat{\tau}^{DiD}$ under an interactive fixed effects models. In a sense, the interactive fixed effects assumption generalises the parallel trends assumption which is the main assumption required for DiD to work.\footnote{We shall return to the parallel trends assumption in the next section.}  In any case, the main thing that (\ref{DiD_bou}) is telling us is that if there are difference in unobservables $\boldsymbol{\mu}_i$ between the treatment group and the control group, then $\hat{\tau}^{DiD}$ will be biased asymptotically. This would be even more problematic, if we suspect that the common factors experience a structural break between the pre-treatment and post-treatment period. Note that the reason why we have $\bar{\boldsymbol{\mu}}_{don}$ as an average and $E[\boldsymbol{\mu}|D_{it}=1] $ as an expectation is because we let $n_1 \to \infty$ and hold $n_0$ constant.

Expression (\ref{DiD_bou}) raises the question whether the estimation error would  disappear if we let $n_0 \to \infty$. If we assume $n_1, n_0 \to \infty$ but $\frac{n_1}{n_0} \to c$, then it can be shown that the estimation error will disappear.\footnote{See the proof to Proposition \ref{prop_DiD_est_err} and in particular equations (\ref{pr_lim_large_n0_n1}) and  (\ref{prob_lim_d_lam_mu}). The quantity in the first equation (\ref{pr_lim_large_n0_n1}) would go to infinity whereas (\ref{prob_lim_d_lam_mu}) would go to some constant. Then, the estimation error would go to 0. } In applications, however, justifying such asymptotics with respect to both $n_1$ and $n_0$ might often be unreasonable: in the German cities example from Section \ref{sec_mot} we have $n_1 > n_0$ but both are relatively small ($n_1=20$ and $n_0 = 99$) whereas in the Mariel Boatlift example we have $n_1 = 41$ and $n_0 = 1039$. Moreover, if we only let $n_0 \to \infty$ and hold $n_1$ fixed, we will get a similar result to (\ref{DiD_bou}), meaning that there will be some estimation error. Thus, in many applications, it is likely that the bias will not disappear.

\subsection{Discussion}

Let us now return to the main question of this section: when should CSC be used over DiD. Firstly, CSC would have a smaller estimation error for $\hat{\tau}$ when the treatment assignment $D_{it}$ is strongly correlated with the unobserved factor loadings $\boldsymbol{\mu}_i$. The reason is that even the conservative bound for CSC is independent of this correlation. This points to one of the biggest advantages of SC methods more generally. We construct synthetic individuals with similar characteristics to the treated units and we ``throw away" bad control units that are very different for our treated individuals. However, unless we apply some additional correction to DiD, these bad control units will bias our estimate of $\hat{\tau}^{DiD}$, as they will form a part of our control group. 
For instance, in the empirical application to the Mariel Boatlift (Section \ref{sec_emp}), we are interested in constructing SCs for low-skilled Miami workers. Since Florida is one of the bottom 20\% of US states in terms of median wages, we are probably looking at low-wage workers at a low-wage state. It is likely that without any further restrictions on the donor pool DiD will be biased due to a correlation between treatment-assignment and unobserved characteristics of Miami workers: we are including high-wage workers from high-wage states which are not relevant for the estimation but remain a part of the donor pool. In this particular case, CSC should be preferred over DiD. 

On the other hand, we can see certain similarities between the two expressions for the estimation errors. For example, if the common factors in $\boldsymbol{\lambda}$ are very volatile, both $\hat{\tau}^{DiD}$ and $\hat{\tau}^{CSC}$ can have a big estimation error. So, perhaps, if we suspect that the common factors have experienced structural break, it would be more appropriate to time-series models. There are situations, in which both DiD and CSC will do badly.

\section{Simulations} \label{sec_sim}
In this section, we perform a simulation exercise to continue our comparison between CSC and DiD. In addition, we also contrast their properties against the PSC of \citet{lho20} and we propose the \textit{infeasible} DiD (iDiD) estimator which is a version of DiD that estimates the parameters of interest consistently but requires knowledge of unobserved components. Using the iDiD as a benchmark, we shall see under what conditions CSC approximates its performance. We also provide evidence that the framework in Section \ref{subsec_setup} which turns the causal inference problem into a prediction problem for the missing values in $\boldsymbol{\Theta}$ is indeed a sensible approach for constructing estimators. 

To that aim, we begin by specifying a DGP for the outcomes in $\boldsymbol{\Theta}$ follows an interactive fixed effects models:
\begin{equation}
    \begin{aligned}
      y_{it} =  \boldsymbol{\beta} \boldsymbol{x'_i} + D_{it} \tau + \boldsymbol{\lambda_t} \boldsymbol{\mu_i} + \epsilon_{it}
    \end{aligned}
    \label{dgp_sim}
\end{equation}
that was described in the previous section. The only substantial difference that we make relative to Section \ref{sec_the} is that now the coefficients on $\boldsymbol{x_i}$ are \textit{not} time-varying but are time-invariant, i.e.\ $\forall t: \ \boldsymbol{\beta} = \boldsymbol{\beta}_t$. This modification is helpful, as it allows us to easily introduce iDiD. We also make the same \textbf{Assumption \ref{ass_th}}, as in Section \ref{sec_the}, except for having just one post-treatment period ($T=T_0+1$): the simulation allows for many post-treatment periods.

\subsection{Bias of DiD}\label{subsec_DiD_con}
Before proceeding with the simulation, we can show why DiD will provide a biased estimate of $\tau$ under the DGP, supplementing Proposition \ref{prop_DiD_est_err} that proves the inconsistency of $\hat{\tau}^{DiD}$. The reason for DiD's bias is that it fails the parallel trends assumption which is necessary for the identification of the ATT. Clarifying DiD's assumptions is of vital importance for applied researchers because DiD is often preferred in empirical applications with many treated units \citep[p.2]{ark20}.\footnote{I would like to thank Barbara Petrongolo and Frank DiTraglia for convincing me of the importance to discuss this questions.} More formally, the parallel trends assumption states  \citep{woo21}:
\begin{gather*}
    E[y_{it}(0) - y_{i1}(0)|D_{is}=1] = E[y_{it}(0) - y_{i1}(0)|D_{is}=0] 
\end{gather*}
where $y_{it}(0)$ is a potential outcome of individual $i$ at time $t$ without treatment, $t$ indicates either a pre-treatment or post-treatment period ($t \in \{1,2, \dots, T_0, \dots, T\}$) and $D_{is}$ indicates treatment assignment in some post-treatment period $s$ where $s \in \{T_0+1, T_0 +1, T\}$. Intuitively, we need the trend in the treatment group to be the same as in the control group. Next, we plug-in the two-way fixed effects model $y_{it}(0) = \rho + \gamma_i + \delta_t  + u_{it}$ which we assume when estimating the ATT with DiD into the assumption:
\begin{gather}
    \cancel{E[\delta_t - \delta_0|D_{is} = 1]} + E[u_{it} - u_{i0}|D_{is} = 1] = \cancel{E[\delta_t - \delta_0|D_{is} = 0]} + E[u_{it} - u_{i0}|D_{is} = 0]
    \label{par_tre_as_2}
\end{gather}
where in the first row the expressions with $\delta$ cancel as they are made up of fixed expressions independent of treatment assignment. Thus, we need the last expression to hold true, in order to identify  $\hat{\tau}^{DiD}$.

We can now check if (\ref{par_tre_as_2}) holds with DGP from an interactive fixed effect model (as given by (\ref{dgp_sim_2})) and Assumption \ref{ass_th}. We begin by adding and subtracting the mean of factor loadings $\boldsymbol{\mu_i}$ and common factors $\boldsymbol{\lambda_t}$, i.e., demeaning them:
\begin{align}
        y_{it} =& \boldsymbol{\beta x_i } + D_{it} \tau + (\boldsymbol{\lambda_t} - \bar{\boldsymbol{\lambda}} + \bar{\boldsymbol{\lambda}} )(\boldsymbol{\mu_i} - E[\boldsymbol{\mu_i}] + E[\boldsymbol{\mu_i}])  + \epsilon_{it} \quad \quad \iff \nonumber \\
        y_{it} =& \underbrace{(\bar{\boldsymbol{\lambda}} E[\boldsymbol{\mu_i}])}_{\rho} + \underbrace{(\boldsymbol{\beta x_i} + \boldsymbol{\bar{\lambda} \mu_i })}_{\gamma_i} + \underbrace{(\boldsymbol{\lambda}_t E[\boldsymbol{\mu_i}])}_{\delta_t} +  \tau D_{it} + \underbrace{ \tilde{\boldsymbol{\mu}}_i   \tilde{\boldsymbol{\lambda} }_t+  \epsilon_{it}}_{u_{it}} \label{DiD_int_fixef}
\end{align}
where we define the $(F \times 1)$ vector $\tilde{\boldsymbol{\mu}}_i = (\boldsymbol{\mu_i} - E[\boldsymbol{\mu_i}] )$ as the demeaned factor loadings and $\boldsymbol{\tilde{\lambda}}_t = (\boldsymbol{\lambda}_t - \boldsymbol{\bar{\lambda}} )$ are the demeaned common factors with $ \boldsymbol{\bar{\lambda}}$ being the $(1 \times F)$ vector of means and $\boldsymbol{\tilde{\mu}_i} - E[\boldsymbol{\mu_i}]$ are the demeaned factor loadings with $E[\boldsymbol{\mu_i}]$ being the $(F \times 1)$ vector of means. Note that the second expression illustrates how the potential outcome $y_{it}$ can be represented as the additive fixed effects structure assumed by the parallel trends assumption of DiD. 

After substituting (\ref{DiD_int_fixef}) in (\ref{par_tre_as_2}) and applying Assumption \ref{ass_th}, the Parallel Trends condition reduces to:\footnote{For more theoretical results on DiD with an interactive fixed effects model, please refer to Proposition \ref{prop_DiD_est_err} for an explicit expression for the asymptotic bias of $\hat{\tau}^{DiD}$ relative to the true $\tau$.}
$$ E[\boldsymbol{\mu}_i |D_{is} = 0](\bar{\boldsymbol{\lambda}}_t - \bar{\boldsymbol{\lambda}}_0 ) = E[\boldsymbol{\mu}_i |D_{is} = 1](\bar{\boldsymbol{\lambda}}_t - \bar{\boldsymbol{\lambda}}_0 ) $$
However, since the demeaned factor loadings $\tilde{\mu}_i$ are not independent from treatment assignment, the last equality will not hold in general. Thus, the parallel trends assumption fails and $\tau^{DiD}$ is biased.

\subsection{Infeasible DiD}
DiD is both inconsistent\footnote{This follows from Proposition \ref{prop_DiD_est_err}. We can also show this more informally in the following way. Estimating $y_{it} = \rho + \gamma_i + \delta_t + \tau D_{it} + u_{it}$ will yield an inconsistent estimate of $\tau$, since $Cov(D_{it}, u_{it}) \neq 0$, meaning that $D_{it}$ is endogenous.  Appendix \ref{subsec_DiD_incons} discusses this point further.} and biased because $\boldsymbol{\mu}_i$ enters the error term $u_{it}$, generating a non-zero correlation between the error term $u_{it}$ and $D_{it}$. However, what if we can control for this non-zero correlation? Then $\hat{\tau}^{DiD}$ would be consistent and unbiased. In particular, if we substract the demeaned interactive fixed effects $\tilde{\boldsymbol{\mu}}_i   \tilde{\boldsymbol{\lambda} }_t$ from both sides  of (\ref{DiD_int_fixef}), then DiD should be consistent. This is clearly infeasible, as we do not observe $\tilde{\boldsymbol{\mu}}_i   \tilde{\boldsymbol{\lambda} }_t$. Nevertheless, since we are running a simulation, we can define:

\begin{definition}[\underline{iDiD}]
The infeasible $\hat{\tau}^{iDiD}$ is given by the OLS estimate of $\tau$ in model:
$$ y_{it} - \boldsymbol{\tilde{\mu_i}\tilde{\lambda}_t} = \alpha + \delta_t + \gamma_i + D_{it} \tau + u_{it}  $$ after subtracting the demeaned interactive fixed effects $\boldsymbol{\tilde{\mu_i}\tilde{\lambda}_t}$ from the outcome variables. \label{def_iDiD}
\end{definition}
\vspace{-20pt}
Given this definition, it is possible to show that $\hat{\tau}^{iDiD}$ will estimate consistently the true quantity of interest $\tau$.

\begin{restatable}[\underline{Consistency of iDiD}]{proposition}{propiDiDcons}
Suppose that:
\begin{enumerate}[(i)]
    \item Data is generated by $y_{it} = \boldsymbol{\beta} \boldsymbol{x'_i}  + \boldsymbol{\lambda_t}  \boldsymbol{\mu_i}  + D_{it} \tau + \epsilon_{it} $ under Assumption \ref{ass_th}.
    \item $\hat{\tau}^{iDiD}$ is given by Definition \ref{def_iDiD}. \underline{Infeasible DiD}
\end{enumerate}
Then, DiD will estimate ATT consistently: $\hat{\tau}^{iDiD} \stackrel{p}{\to} \tau$ 
\label{prop_iDiD_cons}
\end{restatable}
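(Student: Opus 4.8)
The plan is to show that subtracting the demeaned interactive term converts the DGP into an ordinary two-way fixed effects model in which the idiosyncratic error is strictly exogenous with respect to $D_{it}$, so that the within (two-way demeaned) OLS estimator of $\tau$ is consistent by a law of large numbers.

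First I would reuse the exact demeaning decomposition already recorded in (\ref{DiD_int_fixef}): writing $\boldsymbol{\mu}_i = E[\boldsymbol{\mu}_i] + \tilde{\boldsymbol{\mu}}_i$ and $\boldsymbol{\lambda}_t = \bar{\boldsymbol{\lambda}} + \tilde{\boldsymbol{\lambda}}_t$ and multiplying out, the DGP becomes
\[
y_{it} = \underbrace{\bar{\boldsymbol{\lambda}} E[\boldsymbol{\mu}_i]}_{\rho} + \underbrace{(\boldsymbol{\beta}\boldsymbol{x}_i' + \bar{\boldsymbol{\lambda}}\boldsymbol{\mu}_i)}_{\gamma_i} + \underbrace{\boldsymbol{\lambda}_t E[\boldsymbol{\mu}_i]}_{\delta_t} + \tau D_{it} + \tilde{\boldsymbol{\mu}}_i\tilde{\boldsymbol{\lambda}}_t + \epsilon_{it}.
\]
Subtracting $\tilde{\boldsymbol{\mu}}_i\tilde{\boldsymbol{\lambda}}_t$ from both sides — exactly the transformation used to build iDiD in Definition \ref{def_iDiD} — leaves $y_{it} - \tilde{\boldsymbol{\mu}}_i\tilde{\boldsymbol{\lambda}}_t = \rho + \gamma_i + \delta_t + \tau D_{it} + \epsilon_{it}$. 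Hence the iDiD regression is correctly specified with error term exactly $\epsilon_{it}$, and the key point is that, although $\gamma_i$ contains $\boldsymbol{\mu}_i$ (correlated with $D_{it}$ by Assumption \ref{ass_th}.i) and the endogenous $\boldsymbol{x}_i$, this is harmless because the fixed-effects projection annihilates $\gamma_i$ and $\delta_t$.

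Next I would invoke the Frisch--Waugh--Lovell theorem / two-way within transformation. Denoting by $\ddot{z}_{it}$ the residual of $z_{it}$ after partialling out the individual and time dummies, the transformed model is $\ddot{y}_{it} = \tau\ddot{D}_{it} + \ddot{\epsilon}_{it}$, so that
\[
\hat{\tau}^{iDiD} - \tau = \frac{\sum_{i,t}\ddot{D}_{it}\ddot{\epsilon}_{it}}{\sum_{i,t}\ddot{D}_{it}^2}.
\]
By Assumption \ref{ass_th}.ii the $\epsilon_{it}$ are iid, mean zero, with finite second moment and independent of $D_{it}$ (and of the individual/time structure), so $E[\ddot{D}_{it}\ddot{\epsilon}_{it}] = 0$; letting the sample grow (e.g.\ $n_1 \to \infty$ with $n_0$ fixed, as in Proposition \ref{prop_DiD_est_err}) the numerator converges in probability to $0$ by the law of large numbers, while the denominator converges to a strictly positive constant under the usual rank/non-degeneracy condition that $D_{it}$ retains variation after two-way demeaning (guaranteed here since not all units are treated and treatment switches on at $T_0+1$). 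Slutsky's theorem then yields $\hat{\tau}^{iDiD}\stackrel{p}{\to}\tau$.

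The argument is essentially mechanical once (\ref{DiD_int_fixef}) is in hand; the two points needing care are (a) checking that subtracting $\tilde{\boldsymbol{\mu}}_i\tilde{\boldsymbol{\lambda}}_t$ truly leaves $\epsilon_{it}$ and no residual cross term — it does, because the decomposition is an identity — and (b) making the asymptotic regime and the rank condition explicit so the denominator has a positive probability limit and no incidental-parameter problem arises (none does, since the model is linear and the within transformation is exact). This last bookkeeping is where I would be most careful, although it is entirely standard for two-way fixed-effects OLS.
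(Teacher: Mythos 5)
Your proposal is correct and follows essentially the same route as the paper's proof: apply the Frisch--Waugh--Lovell partialling-out of the unit and time dummies to the transformed outcome $\tilde{y}_{it} = y_{it} - \tilde{\boldsymbol{\mu}}_i\tilde{\boldsymbol{\lambda}}_t$, observe that the remaining error is exactly $\epsilon_{it}$, and conclude by the law of large numbers and Slutsky's theorem, with the non-degeneracy of the demeaned $D_{it}$ established as in Proposition \ref{prop_DiD_est_err}. The only cosmetic difference is that you absorb $\boldsymbol{\beta}\boldsymbol{x}_i'$ into $\gamma_i$ so it is annihilated exactly, whereas the paper carries the $Vec(\boldsymbol{\beta X'})$ term explicitly and sends it to zero in probability.
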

\begin{proof}
    The proof which can be found in Appendix \ref{subsec_prop_iDiD_cons} is analogical to the more involved proof for the estimation error of \textit{feasible} DiD in Proposition \ref{prop_DiD_est_err}.
\end{proof}

Our key motivation for including iDiD in the simulation  is that it provides a strong benchmark, against which we can compare other estimators. For instance, if fDiD performs similarly to iDiD, this would be evidence that the bias induced by non-zero $Cov(\boldsymbol{\mu}_i, D_{it})$ is not too serious and so there is not much benefit to using SC techniques. 

\subsection{Set-up of Simulation}
In order to generate data from the interactive fixed effects model under Assumption \ref{ass_th}, we need to make some further assumptions on how the stochastic parameter are determined. For brevity, most details are left in Appendix \ref{subsec_fix_param}. Here we focus on the trickiest parameter to generate: the correlation between treatment assignment $D_{it}$ and unobservables, using the  approach of \citet[p.14]{ark20}. Given the set-up in Section \ref{subsec_setup}, the treatment can only be 1 in the last $T_0$ periods. The treatment indicator is drawn from a Bernoulli distribution with mean $\pi_i$. We introduce a dependence of treatment assignment $D_{it}$ on $\boldsymbol{\mu_i}$ via $\pi_i$. While $\pi_i$ is independent from the observables  $\boldsymbol{x}_i$, it is related to the factor loadings $\boldsymbol{\mu_i}$ via a hierarchical model:
\begin{gather}
    D_{iT}| \boldsymbol{\mu_i}, v_i \sim Ber(\pi_i) \quad \quad  \quad \pi_i = \frac{\exp( \boldsymbol{\mu_i \phi}  + \varepsilon_i)}{1+\exp( \boldsymbol{\mu_i \phi} + \varepsilon_i)}
    \label{tr_not_at_rand}
\end{gather}
where $\varepsilon_i$ are some iid $N(0,1)$ shocks. If $\boldsymbol{\phi} \neq 0$, then $D_{it}$ does not occur at random, as it is correlated with the unobserved $\boldsymbol{\mu_i}$. Our choice of $\boldsymbol{\phi}$ allow us to control the correlation.  This matters because studies assuming random treatment assignment tend to overestimate how well causal inference methods such as DiD estimate the true treatment effect \citep{ark20}, e.g., in terms of estimation error. Moreover, since in practice with observational data treatment is unlikely to be assigned completely at random, it seems reasonable to assume that $D_{it}$ is not independent of  unobserved characteristics, as reflected in $\boldsymbol{\mu_i}$.

After generating the data, the simulation will compare four estimators: the feasible difference-in-difference (fDiD), iDiD, CSC and PSC of Abadie and L`Hour \citeyearpar{lho20}. The motivation for including PSC is that it is the only other estimator known to us that addresses exactly the same set-up as ours by developing a SC estimator. Appendix \ref{subsec_psc_det} provides the formal details on how PSC calculates counterfactuals. Lastly, \textbf{Algorithm 1} summarises the flow of the simulation. 

\RestyleAlgo{boxruled}
\begin{algorithm}[h]
\DontPrintSemicolon
\SetAlgoLined
\For{ $j$ in $1:1000$ }{
    Generate each $y_{it}$(j) in the $(N \times T)$ matrix $\boldsymbol{\Theta}$(j) and $K$ covariates $\boldsymbol{x}_i$(j) \;
    \uIf{treat at random(j) $==$ TRUE}{
        Set $\boldsymbol{\phi} = 0$  \;
    }\Else{
        Set $\boldsymbol{\phi} \neq 0$  \;
    }
    Divide $\boldsymbol{\Theta}$(j) into pre-treatment data and post-treatment data \;
    Fit \textit{CSC}, \textit{feasible DiD}, \textit{infeasible DiD} and \textit{PSC} in pre-treatment dataset \;
    Estimate performance metrics in post-treatment period \;
}
\caption{Pseudo-code for the flow of our simulation}
\end{algorithm} 

Each of the 1000 replications begins by simulating the matrix $\boldsymbol{\Theta}$. Next, the treatment is assigned either at random or not at random via the choice of $\boldsymbol{\phi}$. After the data is divided into pre-treatment and post-treatment datasets, we implement the four estimators and estimate the ATT $\hat{\tau}$ for each of them. Afterwards, we calculate two performance metrics: 1) the estimation error of the estimated ATT $\hat{\tau}$, i.e., the difference $\hat{\tau}-\tau$; and 2) RMSE between the synthetic $\hat{y}_{it}(0)$ and the true unobserved potential outcome $y_{it}(0)$ for the treatment group, i.e. $\forall i \in \{n_0 + 1, n_0 + 2, \dots, n_0 + n_1 \}$ in the post-treatment period $\forall t \in \{T_0+1, \dots, T\}$. Note that this will not be possible with actual data, as we do not observe the two potential outcomes simultaneously due to the Fundamental Problem of Causal Inference, nor do we know the true ATT. Nevertheless, if indeed the matrix completion framework for matrix $\boldsymbol{\Theta}$ discussed in Section \ref{subsec_setup} is appropriate, then we would expect the results for the point estimate of ATT and the RMSE of the SCs to be similar.

\subsection{Results}
The results for the average estimation error between the true $\tau$ and the estimated $\hat{\tau}$ are presented in Table \ref{tab_sim_att}. Since we report results for the same quantity of interest as in Section \ref{sec_the}, that is $avr. \ est. \ error = \frac{\sum_{r=1}^{1000} (\hat{\tau}_r - \tau) }{1000} $, we can interpret Table \ref{tab_sim_att} in light of the theoretical properties we derived.\footnote{Except that here we calculate the \textit{average} estimation error over 1000 simulated datasets  rather than the \textit{single} estimation error for one dataset.} 

Table \ref{tab_sim_att} begins with a baseline specification where we set $N=100$ and $T=5$ for the dimensions of the matrix $\boldsymbol{\Theta}$, a three-factor model $F=3$, treatment-not-at-random, i.e.\ $Cov(D_{it}, \boldsymbol{\mu}_{i}) \neq 0$, and the average probability of treatment is set at $0.15$, i.e.\ $E[\pi_i]=0.15$ in (\ref{tr_not_at_rand}), so that we expect with $N=100$ on average $n_1 = 15$ treated observations and $n_0=85$ donors. In Table \ref{tab_sim_att}, we progressively change the values of  the five parameters (random treatment allocation, $N$, $\pi_i$, $T$, and $F$) relative to the baseline specification. 

Several intriguing things can be seen from the results in Table \ref{tab_sim_att}.  Firstly, across all  specifications iDiD dominates the other estimators, which is unsurprising given that it is unbiased and consistent. However, across the feasible estimators, CSC tends to outperform fDiD and PSC with PSC usually performing a bit worse than CSC. In particular, an (average) estimation error of $-0.11$ in the baseline case\footnote{For reasons of time, I was unable to rerun the same analysis using the absolute value of the estimation error which would have made the results from the Simulation and the Theoretical Section exactly comparable.} for CSC means that with the true $\tau$ being 1, CSC estimates $\hat{\tau}^{CSC}=0.89$ on average across 1000 simulations. So, CSC is doing quite well and even with a relatiely small sample of just 100 observations it performs similarly to iDiD. Secondly, let us consider what happens once we make treatment less correlated with the unobserved factor loadings. As suggested by Proposition \ref{prop_DiD_est_err}, fDiD indeed does better under random treatment assignment than under non-random treatment assignment and outperforms CSC. However, even under a week correlation between $D_{it}$ and $\boldsymbol{\mu}_i$, fDiD still does not do too well relative to CSC. If we suspect the treatment is allocated even slightly not at random, CSC should be used, confirming our key theoretical result. 

\begin{table}
    \centering
    \caption{\textit{Average estimation error for ATT ($\frac{\sum_{r=1}^{1000} (\hat{\tau}_r - \tau)}{1000}$) under different DGPs} }
    \label{tab_sim_att}
    \begin{threeparttable}
    \begingroup
    \setlength\extrarowheight{1.2pt} 
    \[
    \begin{array}{lccc|c}
    \\[-1.8ex]\hline 
    \hline 
    \mathrm{Change} & \mathrm{CSC} & \mathrm{fDiD} & \mathrm{PSC} & \mathrm{iDiD} \\ 
    \hline 
    \mathrm{Baseline^*} & \textbf{-0.11} & -0.94 & -0.25 & \textbf{-0.01} \\ 
    \mathrm{Less \ Correlation} & \textbf{-0.05} & -1.05 & -0.22 & \textbf{-0.02} \\ 
    \mathrm{Random \ Assignment \ (\phi = 0)} & 0.05 & \textbf{-0.01} & -0.03 & \textbf{-0.01} \\ 
    \hline 
   \mathrm{ Increase} \ N = 150 & \textbf{-0.04} & -0.95 & -0.24 & \textbf{-0.01} \\ 
    \mathrm{ Increase} \ N = 200 & \textbf{-0.03} & -0.96 & -0.18 & \textbf{-0.01} \\ 
    \hline 
    \mathrm{ Increase} \ E[\pi_i] = 0.25 & \textbf{-0.07 }& -0.86 & -0.23 &  \textbf{0.00} \\ 
    \mathrm{ Increase} \ E[\pi_i] = 0.40 & \textbf{-0.12} & -0.79 & -0.36 & \textbf{-0.01} \\ 
    \hline 
    \mathrm{ Decrease} \ T = 4 &  \textbf{2.43} &  3.53 &  4.71 &  \textbf{0.00} \\ 
    \mathrm{ Increase} \ T = 8 &  \textbf{0.09} &  0.50 &  0.31 & \textbf{-0.01} \\ 
    \hline 
    \mathrm{ Decrease} \ F = 2 & \textbf{-0.31} & -2.18 & -0.78 & \textbf{-0.01} \\ 
    \mathrm{ Increase} \ F = 4 & \textbf{-0.05} & -0.94 & -0.30 & \textbf{-0.01} \\ 
    \hline 
    \end{array}
    \]
    \endgroup
    \begin{tablenotes}
      \small
      \item $^{*}$Baseline refers to specifying $\{ \mathrm{Non-Random \ Assign.}, N=100, T = 6, F = 3, E[\pi_i] = 0.15 \}$. All tables were created with \texttt{R} package \texttt{stargazer} \citep{hla18}
    \end{tablenotes}
    \end{threeparttable}
\end{table}

Thirdly, increasing $N$ has little effect on the estimation errors for the four estimators. Perhaps we need to experiment with bigger values of $N$ to see any effects or consider changing the share of treated observations, as done in the next panel of the table. However, increasing the probability of treatment $\pi_i$ also has no substantial effect on our results. This is seemingly surprising in light of Proposition \ref{prop_CSC_est_err} which implies that the bias of CSC is positively related to the number of donors $n_0$. We may expect the estimation error of CSC to fall, as we decrease the proportion of donors, holding total $N$ fixed.  Nevertheless, we can explain this by the fact that what we found in Proposition \ref{prop_CSC_est_err} is an upper bound. So, perhaps, changing the number of donors $n_0$ is not such a bad thing for CSC.

Fourthly, the results for changing $T$ in the case of CSC are as expected: it improves its performance, even though we mostly gain from increasing $T=4$ to $T=6$. The next set of results from the simulation in Table \ref{tab_sim_out} paints a clearer picture of what happens if we increase $T$. Lastly, the results for changing the number of factors $F$ are puzzling. The theoretical properties of fDiD and CSC in Section \ref{sec_the} suggest that the estimation error of both is increasing in $F$ but here we do not find any significant differences when we change $F$.

While the point estimate for the estimation error of $\hat{\tau}^{CSC}$ looks close to 0, its variance might be very big. The reason why the variance is not accounted for in the estimation error is a consequence of our decision to work with the unadjusted difference between the true and estimated ATT. For this reason, Appendix \ref{subsec_rmse_att} reports the results for the $RMSE$ between $\tau$ and $\hat{\tau}$ rather than the pure estimation error. Specifically, RMSE is $\frac{\sum_{r=1}^{1000}(\hat{\tau} - \tau)^2 }{1000}$ and the estimation error is just $\frac{\sum_{r=1}^{1000}(\hat{\tau} - \tau) }{1000}$.  Overall, the results appear largely unchanged.

Besides studying our estimates of the ATT, we may also be interested in how well the SCs resemble the unobserved potential outcomes for the treated units without treatment in the post-treatment period (Table \ref{tab_sim_out}). As in the previous table, we begin with the baseline specification $\{ \mathrm{Non-Random \ Assign.}, N=100, T = 6, F = 3, E[\pi_i] = 0.15 \}$  and progressively change some of the key parameters. The results are very similar to those for the estimation error. For example, while iDiD creates the best possible SC, the best performing estimator among those that are feasible is CSC in all cases when treatment does not occur at random. This suggests that the framework in \ref{subsec_setup} is meaningful: we can indeed turn the causal inference problem into a prediction problem. One other important result from Table \ref{tab_sim_out} is the fact that we can observe more clearly what happens as we increase $T$: all estimators improve significantly and start to approximate the RMSE for iDiD. With $T=8$, CSC and fDiD perform very similarly. While in practice we may often observe shorter time periods, this result confirms that SC methods work well with long panels \citep{aba10}.

\begin{table}[hbtp!]
    \centering
    \caption{ \textit{RMSE between $\hat{y}^0_{it}$ and true $y^0_{it}$ under different DGPs across 1000 simulations}  }
    \label{tab_sim_out}
    \begin{threeparttable}
    \begingroup
    \setlength\extrarowheight{1.2pt} 
    \[
    \begin{array}{lccc|c}
   \\[-1.8ex]\hline 
    \hline 
    \mathrm{ Change} & \mathrm{CSC} & \mathrm{fDiD} & \mathrm{PSC} & \mathrm{iDiD}  \\ 
    \hline 
    \mathrm{ Baseline}^{*} & \textbf{1.80} & 2.28 & 2.51 & \textbf{0.92} \\ 
    \mathrm{ Weak \ Correlation} & \textbf{1.75} & 2.24 & 2.41 & \textbf{0.92} \\ 
     \mathrm{ Random \ Assignment} & \textbf{1.75} & 2.03 & 2.35 & \textbf{0.92} \\ 
     \hline 
    \mathrm{ Increase} \ N = 150 & \textbf{1.80} & 2.28 & 2.36 & \textbf{0.92} \\ 
    \mathrm{ Increase} \ N = 200 & \textbf{1.80} & 2.29 & 2.30 & \textbf{0.92} \\ 
     \hline 
    \mathrm{ Increase} \ E[\pi_i] = 0.25  & \textbf{1.84} & 2.26 & 2.63 & \textbf{0.92} \\ 
    \mathrm{ Increase} \ E[\pi_i] = 0.40 & \textbf{1.90} & 2.24 & 2.75 & \textbf{0.92} \\ 
    \hline
    \mathrm{ Decrease} \ T = 4 & \textbf{5.40} & 5.93 & 9.86 & \textbf{0.88} \\ 
    \mathrm{ Increase} \ T = 8 & \textbf{1.33} & \textbf{1.34} & 2.12 & \textbf{0.94} \\ 
     \hline 
    \mathrm{ Decrease} \ F = 2 & \textbf{1.86} & 2.97 & 2.71 & \textbf{0.92} \\ 
    \mathrm{ Increase} \ F = 4 & \textbf{1.95} & 2.43 & 2.98 & \textbf{0.92} \\
     \hline 
    \end{array}
    \]
    \endgroup
    \begin{tablenotes}
      \small
      \item $^{*}$Baseline refers to specifying $\{ \mathrm{Non-Random \ Assign.}, N=100, T = 6, F = 3, E[\pi_i] = 0.15 \}$
    \end{tablenotes}
    \end{threeparttable}
\end{table}

\section{Empirical Application} \label{sec_emp}
This section illustrates how CSC can be used in practice: we study the effect of immigration on natives' labour market outcomes. We  hope to gain some new insight on this question by revisiting the Mariel Boatlift \citep{car90} and analysing it with a new dataset: PSID.\footnote{Given the small sample size of PSID, we should be careful, when interpreting the results, as their external validity might be questioned.}

\subsection{Mariel Boatlift}
The textbook model of labour markets with perfect competition \citep{car14} postulates that immigration increases labour supply and brings down the wages of natives. This can be seen in Figure \ref{fig_imm_shock}: wages fell from $w$ to $w'$ following the shift of Labour Supply ($LS$). However, testing the hypothesis for a negative effect on wages raises many econometric challenges. Suppose we specify the model:
\vspace{-8pt}
\begin{gather}
    y_{it} = \eta + \tau m_{it} + \boldsymbol{\beta} \boldsymbol{x}_{it} + u_{it}
    \label{imm_reg}
\end{gather}
where $i$ indexes a local labour market,\footnote{We could base $i$ on geography or national education-experience cells, for example.} $y_{it}$ are log wages (although it could be another labour market outcome), $\eta$ is an intercept, $m_{it}$ is the number of immigrants, $\boldsymbol{x}_{it}$ are time varying characteristics and $u_{it}$ are error terms. The parameter of interest is $\tau$ which is the effect of immigration on wages and theory predicts that it should be negative. Arguably the biggest econometric challenge is the fact that immigrants self-select into areas with greater economic opportunities, implying that $Corr(m_{it}, u_{it}) >0$ which makes $m_{it}$ endogenous and so the OLS estimate $\hat{\tau}$ inconsistent.\footnote{Other challenges stem from how we define local labour markets. If we define $i$ to be geographical local labour markets such as counties, we can be concerned about displacement effects \citep{bor17}. On the other hand, if we use national education-experience cells, we should still think about what instrument to use to tackle the selection effects \citep{dus16}.}

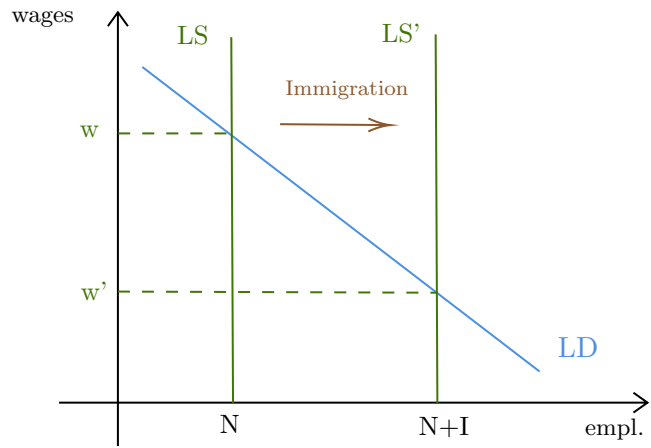
\begin{figure}
\centering
\tikzset{every picture/.style={line width=0.75pt}} 
\begin{tikzpicture}[x=0.75pt,y=0.75pt,yscale=-1,xscale=1]

\draw  (169.28,215.7) -- (463,215.7)(198.65,19.34) -- (198.65,237.51) (456,210.7) -- (463,215.7) -- (456,220.7) (193.65,26.34) -- (198.65,19.34) -- (203.65,26.34)  ;
\draw [color={rgb, 255:red, 74; green, 144; blue, 226 }  ,draw opacity=1 ]   (210.83,46.85) -- (409.13,200.04) ;
\draw [color={rgb, 255:red, 65; green, 117; blue, 5 }  ,draw opacity=1 ]   (256.11,215.83) -- (255.25,31.88) ;
\draw [color={rgb, 255:red, 65; green, 117; blue, 5 }  ,draw opacity=1 ]   (358.12,215.83) -- (357.26,30.49) ;
\draw [color={rgb, 255:red, 139; green, 87; blue, 42 }  ,draw opacity=1 ]   (279.6,75.64) -- (332.62,76.08) ;
\draw [shift={(334.62,76.1)}, rotate = 180.48] [color={rgb, 255:red, 139; green, 87; blue, 42 }  ,draw opacity=1 ][line width=0.75]    (10.93,-3.29) .. controls (6.95,-1.4) and (3.31,-0.3) .. (0,0) .. controls (3.31,0.3) and (6.95,1.4) .. (10.93,3.29)   ;
\draw [color={rgb, 255:red, 65; green, 117; blue, 5 }  ,draw opacity=1 ] [dash pattern={on 4.5pt off 4.5pt}]  (198.8,80.28) -- (254.96,80.28) ;
\draw [color={rgb, 255:red, 65; green, 117; blue, 5 }  ,draw opacity=1 ] [dash pattern={on 4.5pt off 4.5pt}]  (198.65,159.83) -- (358.41,160.41) ;

\draw (416.82,181.21) node [anchor=north west][inner sep=0.75pt]  [color={rgb, 255:red, 74; green, 144; blue, 226 }  ,opacity=1 ] [align=left] {LD};
\draw (226.69,24.54) node [anchor=north west][inner sep=0.75pt]  [font=\small,color={rgb, 255:red, 65; green, 117; blue, 5 }  ,opacity=1 ] [align=left] {LS};
\draw (328.85,23.61) node [anchor=north west][inner sep=0.75pt]  [font=\small,color={rgb, 255:red, 65; green, 117; blue, 5 }  ,opacity=1 ] [align=left] {LS'};
\draw (280.76,52.4) node [anchor=north west][inner sep=0.75pt]   [align=left] {{\scriptsize \textcolor[rgb]{0.55,0.34,0.16}{Immigration}}};
\draw (430.45, 222.03) node [anchor=north west][inner sep=0.75pt]   [align=left] {{\footnotesize empl.}};
\draw (144.05,17.64) node [anchor=north west][inner sep=0.75pt]  [font=\footnotesize] [align=left] {wages};
\draw (178.18,74.67) node [anchor=north west][inner sep=0.75pt]  [font=\small,color={rgb, 255:red, 65; green, 117; blue, 5 }  ,opacity=1 ] [align=left] {w};
\draw (178.61,153.12) node [anchor=north west][inner sep=0.75pt]  [font=\small,color={rgb, 255:red, 65; green, 117; blue, 5 }  ,opacity=1 ] [align=left] {w'};
\draw (248.1,220.43) node [anchor=north west][inner sep=0.75pt]  [font=\small] [align=left] {N};
\draw (347.34,221.59) node [anchor=north west][inner sep=0.75pt]  [font=\small] [align=left] {N+I};

   ;
\end{tikzpicture}
\caption{ \footnotesize{The effect of immigration in the classical labour market model with inelastic labour supply and flexible wages. \textit{Note:} We have inelastic LS, due to perfect competition between firms: if a firm decreases its wages, then all of its workers can leave and immediately find a job, meaning that wage elasticity of labour supply is infinite \citep{man06}.} }
\label{fig_imm_shock}
\end{figure}

Card's Mariel Boatlift study \citeyearpar{car90} overcomes this identification challenge by exploiting the effect of a large exogenous shock to labour supply. On 20th April 1980, Fidel Castro announced that Cubans would be allowed to leave the country and migrate to the US from the Mariel port. As a result, almost 125,000 Cubans fled their home and arrived in the US before the policy was reversed in late September. Many of them settled in Florida (and Miami in particular), because previous Cuban immigrants have established themselves there and not because of the economic opportunities there. Thus, the Mariel Boatlift provides an exogenous increase of 8\% to labour supply in Miami and so an excellent opportunity to test the model in Figure \ref{fig_imm_shock}. 

Card uses DiD to estimate the effects of immigration. He explores how wages and unemployment of natives evolved in Miami compared to a control group of cities in the pre-treatment period and post-treatment periods. He finds that while wages in Miami fell for white, black and Hispanic native workers between 1979 and 1983, they also fell in the control group by a similar amount. Thus, he concludes that there was no negative effect of the increase in labour supply to wages, contrary to the prediction of the model in Figure \ref{fig_imm_shock}.

Over the past five years, there has been a revival\footnote{For an excellent review (much clearer than mine), click \href{https://twitter.com/EGirlMonetarism/status/1376683452662185985?fbclid=IwAR103_8CTVBRA7JeZL0q6TxsvgxovGbfEZIDn2vCWLgjEuqzg-ZkNM8ActA}{here}. } in interest in evaluating the Mariel Boatlift \citep{bor17, per18}. These recent papers have found markedly different results: \citet{bor17} claims  that the increase in immigration had a negative effect on natives' wages whereas \citet{per18} find no effects. The recent literature has identified several important limitations of Card's study: 
\begin{enumerate}
    \item Firstly, the choice of control group consisting of Atalanta, Los Angeles, Houston and Tampa-St.\ Peterbourg  is not well justified \citep{bor17}. Since the publication of the paper, better techniques for constructing counterfactuals such as SC have been developed.
    \item Secondly, and perhaps most fundamentally, the Miami Current Population Survey (CPS) datasets, versions of which are used in \citet{car90}, \citet{bor17}, and \citet{per18}, have undergone significant compositional changes that were not related to the Mariel Boatlift \citep{cle19}. In particular, the number of low-skilled black workers in Miami with very low wages surveyed have been increased substantially in the 1980s, thereby giving a greater survey weight to this group in early 1980s relative to late 1970s. This renders CPS inappropriate for comparison across different years, especially given that the treatment occurs at 1980, given the important changes in survey weights.
    \item Thirdly, since the CPS data is a repeated cross-section, it has to be aggregate on a city-level in order to find the trend in wages in Miami. This can mask important individual-level differences in labour supply and wages that are lost when aggregating. 
\end{enumerate}

\subsection{PSID}
Based on Problem 2., it seems that using CPS data should be avoided and this is the reason why we revisit the Mariel Boatlift using PSID. When we combine PSID with CSC, we can also automate the selection of a control group (thus, tackling Problem 1.) and exploit PSID's panel structure to capture individual-level effects (thus, tackling Problem 3.). 

The PSID is the longest running panel survey of households, stemming from late 1960s to the present day. It contains data on roughly 5000 families, living in the US. PSID's composition did not change considerably which allows us to tackle Problem 2.\ above. Moreover, the fact that it is panel study opens the door to using CSC which allows us to construct good counterfacturals for every treated worker and which tackles Problem 1. The PSID sample that we use consists of 11 waves (1974-1984) and is restricted to heads of households who are male. The justification for these restrictions and the variable selected can be found in Appendix \ref{psid_var}. Most importantly, we focus on two outcomes of interest: wages and hours worked. While the classical labour markets model in Figure \ref{fig_imm_shock} does not allow any labour supply effects,\footnote{as the LS is assumed to be inelastic} some labour supply effects can be expected \citep{dus16}, given the strong evidence for sticky wages. 

Unfortunately, using PSID comes at a cost: our final sample contains only 42 treated workers. However, while our results may not be generalisable for the overall effect of the Mariel Boatlift, we can still interpret them as yielding useful insights for our particular sample and worry about their external validity separately. Moreover, the sample sizes used in \citet{bor17} and \citet{per18} also contain only respectively around 20 and 65 people per year, suggesting that small sample sizes is a persistent issue when estimating the effects of immigration.\footnote{However, note that they define low-skilled workers as workers without a high school degree whereas we include workers with a high school degree into low-skilled workers, implying that their sample sizes could have been greater if they included the latter group. }

Another issue with PSID is that it does not include metropolitan area indicators but only state indicators in its public release. So, we use as treatment group not residents of Miami but those of the state of Florida. This is problematic, as the impact of the Mariel Boatlift was most strongly felt in Miami. Nevertheless, there is some evidence that some Mariel Cubans also settled in other areas in Florida such as Tempa and West Palm Beach\footnote{At the time, Palm Beach was not considered a part of Miami Metropolitan Area \citep[p.30]{usdc82}. Since this is the definition that Card's data uses \citeyearpar[footnote]{car90}, he did not include Mariel immigrants located there.}, as shown by \citet{sko01}. Previous Cuban immigrants (e.g. Golden Exiles in 1960-64 and Freedom Flight refugees in 1965-1974) were mostly white and had higher socio-economic status than immigrants in 1980 \citep[p.457]{mcc85}. In contrast, Mariel Cubans were more likely to be non-white and more generally had different locational preferences, regarding where to settle \citep{sko01}. Quantitatively, Miami did host the majority of Cuban immigrants, but there were spillovers of migrants to other cities in Florida. Specifically, 1990 US Census data shows that more than 10\% of Mariel Cubans based in Florida were not located in Miami but in the early 1980s the number was higher due to  within-Florida migration to Miami in the 1980s \citep[p.464]{sko01}. Lastly, if we are able to find a negative effect for low-skilled workers in Florida, then it is likely that this effect would be even stronger for the same workers in Miami. Overall, then, the lack of metropolitan area identifiers does not invalidate our analysis. 

\subsection{PSC vs CSC}\label{subsec_psc_vs_CSC}

The first set of results with PSID that we present compares CSC and PSC using a cross-validation approach \citep{aba15}. We begin by dividing the data into post-treatment data (1980-1984) and pre-treatment data (1975-1979) which is further divided into (pre-treatment) training data and (pre-treatment) testing data. In all specifications, the testing period consists of the last two pre-treatment years, namely 1978 and 1979, but we vary the length of the training period. Then, PSC and CSC are fit on the training data and we evaluate their performance by calculating the Root Mean Squared Error (RMSE) between the predicted labour market outcome (either wages or hours worked) and the true outcomes. Lastly, we calculate the individual treatment effects and the overall ATT.\footnote{When fitting the models, we use the covariates for occupation, industry, education, being white, being married and having being ill a lot that were coded as discrete variables, so that CSC can accommodate them, as discussed by the restriction in Section \ref{subsec_est}. }

The results from this exercise are reported in Table \ref{tab_comp}. Overall, CSC performs slightly better than PSC for $T_{train}=3$ and $T_{train}=2$, as it predicts labour supply more accurately in the training period (1979-1980). In light of the simulation results in Section \ref{sec_sim}, this could reflect the fact treatment assignment is not too strongly correlated with unobserved components.\footnote{However, note that \textit{some} correlation is expected, given that Florida is in the bottom quantile in terms of median wage and that we are interested primarily in the effect on wages of the low-skilled workers. See the discussion in the end of Section \ref{sec_DiD_bou} for more details} 

\begin{table}[!bhtp] \centering 
  \caption{RMSE in testing period for different lengths of training period }  \label{tab_comp} 
  \begin{threeparttable} 
  \[ 
  \begin{array}{@{\extracolsep{3pt}} l c cc | cc} 
    \\ \hline \hline 
          &  & \multicolumn{2}{c}{Lab. \ Supply} & \multicolumn{2}{c}{Log \ wages^*} \\
          \cline{2-6} 
          & Year & CSC & PSC & CSC & PSC \\ 
        \hline 
        T_{train}=4 & 1978 & \textbf{569.92} & 666.43 & 0.63 & \textbf{0.57} \\ 
         & 1979 & \textbf{592.62} & 670.27 & 0.86 & \textbf{0.76} \\ 
         & Total & \textbf{581.38} & 668.35 & 0.75 & \textbf{0.67} \\ 
        \hline 
        T_{train}=3 & 1978 & \textbf{616.17} & 663.19 &\textbf{0.56} & \textbf{0.57} \\ 
         & 1979 & \textbf{638.05} & 687.91 & 0.77 & \textbf{0.73} \\ 
         & Total & \textbf{627.21} & 675.67 & \textbf{0.67} & \textbf{0.65} \\ 
        \hline 
        T_{train}=2 & 1978 & \textbf{637.04} & 655.14 & \textbf{0.49} & 0.60 \\ 
         & 1979 & \textbf{677.68} & 725.55 & \textbf{0.69} & \textbf{0.68} \\ 
         & Total & \textbf{657.67} & 691.24 & \textbf{0.60} & 0.64 \\
        \hline 
        T_{train}=1 & 1978 & 739.45 & \textbf{671.06} & \textbf{0.40} & 0.60 \\ 
         & 1979 & 851.64 & \textbf{687.11} & \textbf{0.59} & 0.68 \\ 
         & Total & 797.52 & \textbf{679.13} & \textbf{0.50} & 0.64 \\ 
        \hline \\
    \end{array} 
\]
\begin{tablenotes}
      \small
      \item $^*$ We applied transformation $\log(1+w_{it})$ to wages because some individuals had $w_{it} < 1$ so  $\log(w_{it})$ yields a non-positive value. In future work, we hope to use instead the inverse hyperbolic sine transformation.
    \end{tablenotes}
\end{threeparttable}
\end{table}

Interestingly, as we increase $T_{train}$, CSC becomes better at predicting total hours worked but worse at predicting (log) wages. This could perhaps be explained by the fact that wages are more volatile than hours worked which exhibits stronger autocorrelation. Note that our best prediction for log wages involves using $T_{train}=1$ and for hours worked using $T_{train}=4$ and that both use CSC rather than PSC. Moreover, this result points to the fact that using more lags of the outcome variable is helpful only if it is informative for future values. Note, however, that this is not the case for PSC where the results seem largely invariant to the number of lags we include. Overall, we can conclude that CSC definitely does not do worse than PSC and our best performers in terms of predicting wages and hours worked both utilise CSC but use a different number of lags.

\subsection{Heterogeneous Treatment Effects}\label{subsec_hte}
In Figure \ref{fig_haat}, we present the treatment effects of the Mariel Boatlift on labour supply and log wages for low-skilled and high-skilled native workers. To get the individual treatment effects, we use the SCs created by CSC in the previous section with $T_{train} = 4$ for labour supply and with $T_{train} = 1$ for wages.\footnote{These choices of $T_{train}$ seem the best option based on the results in Table \ref{tab_comp}} Results for other choices of $T_{train}$ are provided in the Appendix \ref{subsec_htt_diff_spec} and are quite similar.

Figure \ref{fig_haat} plots the evolution of the 95\% Confidence Interval for the heterogeneous treatment effects on low-skilled workers and high-skilled workers. Following \citet{ace11}, we define low-skilled workers as those individuals who do not have a college degree.  We can see that there are no effects on high-skilled workers' labour market outcomes. This is not surprising, given that many Mariels were classified as being low-skilled and so were not competing with high-skilled natives for jobs \citep{sko01}. The lack of labour supply effects also implies that the $LS$ curve in Figure \ref{fig_imm_shock} is probably quite inelastic, albeit not perfectly inelastic, for both type of workers. 

However, there is some decrease in wages for low-skilled workers. While this interpretation makes sense in light of the competitive labour market model, our sample sizes are very small. We observe just 25 low-skilled and 17 high-skilled workers. At most, what we can say is that for the selected sample we observe a negative effect on wages of low-skilled workers. If PSID is to be used for future evaluations of the Mariel Boatlift, the researchers should consider how to increase sample size, e.g., include women. 

\begin{figure}
    \centering
    \includegraphics{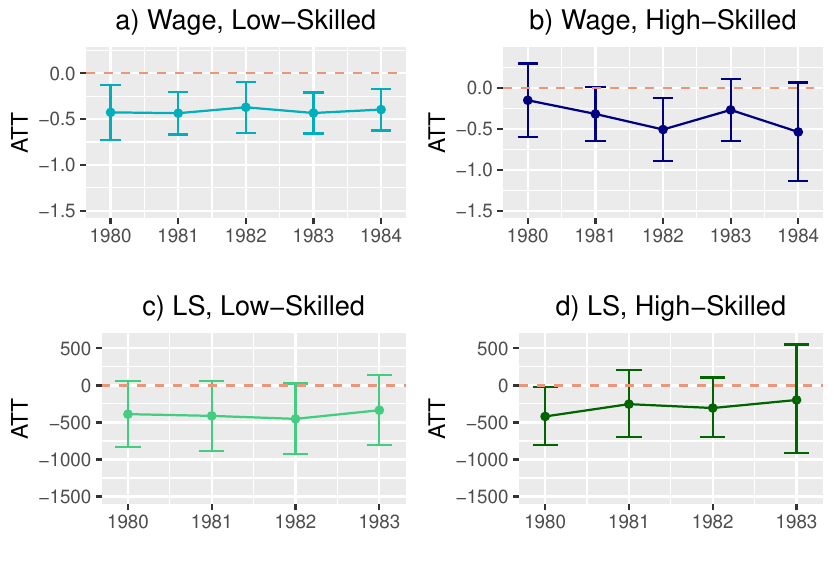}
    \caption{95\% CI for Mariel Boatlift's effect on low-skilled and high-skilled workers.}
    \label{fig_haat}
\end{figure}

Lastly, the confidence intervals in Figure \ref{fig_haat} could in reality be wider. The reason is that the treatment effects do not incorporate the uncertainty connected with the weights of the SC. In general, when doing causal inference with any SC method, there are (at least) two sources of uncertainty: from the weights of the SCs and from the estimates of the individual treatment effects $\tau_i$. The confidence intervals above only account for the second source of uncertainty and not for the uncertainty connected to the weights themselves. So, overall inference is likely to be unreliable in this case. At present, accounting for both sources of uncertainty is an open question.

\section{Conclusion}\label{sec_concl}
The main purpose of this paper was to develop an estimator from the family of SC for the case of many treated units observed for a short time period. CSC creates synthetic counterfactuals that are correlated across treated individuals similar in terms of observables. We show that CSC has good theoretical properties when treatment assignment is correlated with unobservable characteristics. In such cases, it even dominates the DiD estimator which is the most widely used technique in the setting we consider. We confirmed this theoretical result in our simulation: making treatment correlated with unobservables has little effect for estimates of treatment effects from CSC but estimates from DiD become much less precise. Moreover, we discussed two applications taken from economic geography and labour economics, in which CSC should be preferred over other causal inference techniques.

With that said, CSC currently has some important limitations. At present, it only allows for time-invariant discrete characteristics. This key limitation stems partly from the fact that we have constrained the (correlated) random coefficients model of the weights to allow only for time-invariant predictors. What happens if we relax this restriction? Theoretically, the optimisation problem solved by CSC to find the weights can accommodate time-varying discrete covariates in the weights (e.g., changing occupation of treated workers).\footnote{See restriction (\ref{anders_res}). With time-varying covariates we still need for it to hold. However, it does not preclude the possibility for $x_{it}^k$ to be time-varying.} This is a fascinating possibility, as it will allow the weights on donors to be evolving over time, according to the time-varying covariates of a given treated individual. For instance, if an individual loses her job, we can use different donors for her SC before and after this event. 

On the other hand, the conventional $(SC \ Constraints)$ that the weights sum up to 1 and are non-negative is sometimes useful, as we can interpret the weights as probabilities. However, our results from the simulation show good support for the framework introduced in Section \ref{sec_est_cons} that turns the causal inference problem into a prediction problem. So, perhaps, if prediction is the end game, this restriction can be relaxed. Maybe we can impose other constraints that find sparse solutions but do not restrict the set of weights as much, e.g., Lasso.

Further limitations include the fact that we have not used donors' covariates. In particular, when specifying the correlated random coefficients structure of the weights, we only used treated individuals' covariates. This is not entirely satisfactory, as we are losing an important source of information. In any case, there are many open questions related to CSC and other SC estimators that this paper could not address. However, as noted by Guido Imbens in the Sargan lecture last month \citeyearpar{imb21}, SC-like methods are growing into an important part of the toolkit of the empirical researchers. Therefore, pursuing such open questions is a fruitful area of research. 

\newpage 
\bibliography{main} 

\newpage 
\appendix
\addcontentsline{toc}{section}{Appendices}
\section*{Appendix}

\section{Supplementary results and examples} \label{ap_oth_res}
\addtocontents{toc}{\protect\setcounter{tocdepth}{1}}

\subsection{ Matching covariates versus matching outcomes}\label{ap_cov_vs_out}
As discussed in the main text, the model specified by SC methods for the weights is:
$$ y_{it} = \sum_{i=1}^{n_0} \hat{w}_j y_{jt} + e_{jt}  $$
which looks similar to a linear regression model, except for the constraint that the weights sum up to 1 and are greater than 0. Based on this intuition, there have emerged two contrasting approaches in the literature emphasising the matching of SCs on different things: 1) matching primarily on \textit{covariates} advocated by Alberto Abadie and coauthors \citep{aba10, lho20, aba20} versus 2) matching primarily on \textit{outcomes} advocated by \citet{ ath19, dou18, ben20}. The main reason why this distinction matters is that it leaves open the question of how to use covariates in SC, as illustrated by \citep{bot19}.

To understand the differences, suppose that $n_1 = 1$. It is useful to consider a linear transformation of the stacked row vector of outcomes and covariates $(\boldsymbol{y}^{pre}_{n_0+1}, \boldsymbol{x}_{n_0+1})'$  via the $(H \times (T_0 + k))$ matrix $\boldsymbol{M}$, given by the multiplication $\boldsymbol{M}(\boldsymbol{y}^{pre}_{n_0+1}, \boldsymbol{x}_{n_0+1})'$. The two approaches differ is in the choice of $\boldsymbol{M}$ and its dimension $H$. As common in the literature \citep[e.g.][]{fer19}, 
suppose that $\Delta$ denotes the $n_0$-dimensional standard simplex $\Delta \equiv \{ (w_1, \dots, w_{n_0})' \in \R^{n_0}| \sum_{j=1}^{n_0} w_j = 1 \ and \ w_j \geq 0 \}$. Simplifying slightly,\footnote{We are simplifying because in practice Alberto Abadie \citeyearpar{aba20} advocates that the weights $\boldsymbol{V}$ and $\boldsymbol{W}$ should be estimated in two optimisation problems.} both approaches can be reduced to solving a version of the constrained quadratic optimisation problem in matrix form:
\begin{gather}
    \min_{\boldsymbol{w} \in \Delta} \left( \boldsymbol{M} 
    \begin{pmatrix}
    \boldsymbol{y}^{pre}_{n_0+1} \\
    \boldsymbol{x}_{n_0+1}
    \end{pmatrix} - 
    \boldsymbol{M}  \begin{pmatrix}
    (\boldsymbol{Y}^{pre}_{n_0})' \\
    \boldsymbol{X}_{n_0}
    \end{pmatrix} \boldsymbol{w}\right)' \boldsymbol{V}
     \left(\boldsymbol{M} \begin{pmatrix}
    \boldsymbol{y}^{pre}_{n_0+1} \\
    \boldsymbol{x}_{n_0+1}
    \end{pmatrix} - \boldsymbol{M}  \begin{pmatrix}
    (\boldsymbol{Y}^{pre}_{n_0})' \\
    \boldsymbol{X}_{n_0}
    \end{pmatrix} \boldsymbol{w}\right)
    \label{gen_scm_pro}
\end{gather}
where $\boldsymbol{w} = (w_1, \dots, w_{n_0})'$ is the $(n_0 \times 1)$ vector of weights and $\boldsymbol{X}_{n_0}$ is the $n_0 \times K$ matrix of time-invariant covariates for the donors.\footnote{We need to take the transpose  of $\boldsymbol{Y}^{pre}_{n_0}$  to achieve consistency with the matrix $\boldsymbol{\Theta}$ above} The $(H \times H)$ matrix $\boldsymbol{V}$ is a diagonal matrix with different \textit{predictors'} weights (that are different from \textit{donors'} weights $\boldsymbol{w}$) on the linear combination of outcomes and covariates that we use. 

\textit{Matching on outcomes} approaches consider predicting covariates a secondary concern and instead aims at imputing the missing elements of $\boldsymbol{\Theta}$. As a result, the matrix $\boldsymbol{M}$ is chosen so that the covariates are excluded and we solve the problem:
\begin{gather}
    \min_{\boldsymbol{w} \in \Delta} \left( 
    \boldsymbol{y}^{pre}_{n_0+1} -
    \boldsymbol{Y}^{pre}_{n_0}  \boldsymbol{w}\right)' \left( 
    \boldsymbol{y}^{pre}_{n_0+1}  -  
    \boldsymbol{Y}^{pre}_{n_0}  \boldsymbol{w}\right)
    \label{sta_scm_pro}
\end{gather}
For this approach, imputing the matrix $\boldsymbol{\Theta}$ is the most important factor when constructing SCs. If our SCs match well the time-series of the outcome in the pre-treatment period, then we should not be too worried about balance on covariates. In this case,  $\boldsymbol{V}$ is often chosen to be the identity matrix. 

Instead of matching on the full set of outcomes separately $\boldsymbol{y}^{pre}_{n_0+1}$, \textit{matching on covariates} takes an average of the pretreatment outcomes $\bar{y}_{n_0+1, pre} = \sum_{t=1}^{T_0} y_{n_0+1, t} / T_0$ and stack this average on top of the covariates. By choosing an appropriate $\boldsymbol{M}$, the general optimisation problem in  (\ref{gen_scm_pro}) reduces to:
\begin{gather}
    \min_{\boldsymbol{w} \in \Delta} \left( \begin{pmatrix}
    \bar{y}^{pre}_{n_0+1} \\
    \boldsymbol{x}_{n_0+1}
    \end{pmatrix} - \begin{pmatrix}
    \boldsymbol{\bar{y}}^{pre}_{n_0} \\
    \boldsymbol{X}_{n_0}
    \end{pmatrix} \boldsymbol{w} \right)' \boldsymbol{V} \left( \begin{pmatrix}
    \bar{y}^{pre}_{n_0+1} \\
    \boldsymbol{x}_{n_0+1}
    \end{pmatrix} - \begin{pmatrix}
    \boldsymbol{\bar{y}}^{pre}_{n_0} \\
    \boldsymbol{X}_{n_0}
    \end{pmatrix} \boldsymbol{w} \right)
\end{gather}
where $\boldsymbol{\bar{y}}^{pre}_{n_0}$ is a row vector of averages for the donors. The diagonal $(K+1 \times K+1)$ matrix $\boldsymbol{V}$ can be chosen to ensure we have the same scale on all covariates and the average outcome or can optimise some other criteria, e.g. predictors are proportional to their importance in explaining the outcome variable \citep{aba20}. 

So, should we be matching outcomes or covariates?  \citet{kau18} warns that if we decide to ignore covariates, we may get misleading results in certain cases even though our SC follows the true time series better in the pre-treatment period. Nevertheless, \citet{bot19} show that SC methods have good theoretical properties in cases when we fail to match on covariates. Furthermore, unless $T_0$ is very long,  there seem to be ample possibilities for specification-searching of counterfactuals under both matching on covariates and outcomes \citep{pos20}. Therefore, the place of covariates in SC methods is ambiguous at present. 

\newpage

\subsection{Details on Estimation}\label{sec_est_det}
This Appendix provides details on estimating  the weights given by the correlated random coefficients model:
\begin{gather}
  y_{it}(0) = \eta_i + \sum_{j=1}^{n_0} w_{ij} y_{jt}(0) + e_{it} \quad s.t.  \label{CSC_mod_2} \\
  w_{ij} = \underbrace{\omega_j}_{ind.-invariant} + \underbrace{\boldsymbol{x}_i \boldsymbol{\alpha}^{j}}_{ind.-specific} \quad \quad (Random \ Coef.) \nonumber \\
    \sum_{j=1}^{n_0} w_{ij} = 1 \quad \quad  \forall j: w_{ij} \geq 0 \quad \quad (SC \ Constraints)  \nonumber 
\end{gather}
where we also allow for an intercept $\eta_i$, following suggestions in \citet{fer19} and \citet{dou18}. Proposition \ref{prop_optim_prob_full} gives us two equivalent formulations of the same constrained quadratic optimisation problem from (\ref{CSC_mod_2}): in scalar form (\ref{opt_pr_scal}) that was discussed in the main  body of the paper and in matrix form (\ref{opt_pr_matr}).

To understand the latter form in Proposition \ref{prop_optim_prob_full}, let us define $\otimes$ to be the Kronecker product of two matrices and $\boldsymbol{\iota}_{n_1}$ to be a ($n_1 \times 1$) column  vector of $1$-s. We also introduce the operator $Vec(.)$ for vectorising some matrix $\boldsymbol{A}$. In particular, $Vec(A)$ takes the columns of $\boldsymbol{A}$ and stack them on top of each other.\footnote{For example, $Vec\begin{pmatrix} 1 & 2 \\ 3 & 4 \end{pmatrix} = (1,3,2,4)' $}
\begin{restatable}{proposition}{propoptimprobfull}
    The parameters $\boldsymbol{\alpha}$ and $\boldsymbol{\omega}$ can be estimated by solving the optimisation problem: 
     \begin{equation} 
        \begin{gathered}
            \max_{\alpha_j^k, \omega_j} \sum_{i=1}^{n_1} \sum_{t=1}^{T_0} \left( y_{it} - \eta_i - \sum_{j=1}^{n_0} y_{jt} \omega_j - \sum_{j=1}^{n_0} \sum_{k=1}^K \alpha^{k}_j y_{jt} x_i^k  \right)^2 \quad \quad s.t. \\
     \quad  \forall i: \  \sum_{j=1}^{n_0} \left(\omega_j + \sum_{k=1} x_{i}^k \alpha^k_j\right) = 1 \quad \quad \quad  \forall (i,j): \ \omega_j + \sum_{k=1}^K x_{i}^k \alpha^k_j \geq 0
            \label{opt_pr_scal}
        \end{gathered}
    \end{equation}  
    where $\boldsymbol{x}_i$ and $\boldsymbol{\alpha}_j$ are $(1 \times K)$ \underline{row} vectors.  Equivalently, in matrix notation, we have: 
\begin{equation}
    \begin{gathered}
    \max_{\boldsymbol{\omega}, \boldsymbol{\alpha}} \quad  Vec\left((\boldsymbol{Y}^{pre}_{n1})' -     \boldsymbol{\iota}_{T_0} \otimes \boldsymbol{\eta}'  - (\boldsymbol{Y}^{pre}_{n_0})' \boldsymbol{\omega} \otimes \boldsymbol{\iota}'_{n_1} -   (\boldsymbol{Y}^{pre}_{n_0})' \boldsymbol{\alpha} \otimes \boldsymbol{X}'_{n_1}\right)'\\ 
    Vec\left((\boldsymbol{Y}^{pre}_{n1})' -     \boldsymbol{\iota}_{T_0} \otimes \boldsymbol{\eta}'  - (\boldsymbol{Y}^{pre}_{n_0})' \boldsymbol{\omega} \otimes \boldsymbol{\iota}'_{n_1} -   (\boldsymbol{Y}^{pre}_{n_0})' \boldsymbol{\alpha} \otimes \boldsymbol{X}'_{n_1} \right) \quad \quad s.t. \\ 
    (\boldsymbol{\iota}_{n_1}  \otimes \boldsymbol{\omega'} + \boldsymbol{\boldsymbol{X}}_{n_1} \boldsymbol{\alpha}' ) \boldsymbol{\iota}_{n_0}  = \boldsymbol{\iota}_{n_1} \quad \quad 
    \boldsymbol{\iota}_{n_1} \otimes \boldsymbol{\omega'} + \boldsymbol{\boldsymbol{X}}_{n_1} \boldsymbol{\alpha}' \geq \boldsymbol{O}_{(n_1 \times n_0)}
    \label{opt_pr_matr}
    \end{gathered}
\end{equation}
where as discussed $\boldsymbol{Y}^{pre}_{n_1}$ is the observed $(n_1 \times T_0)$ matrix of pre-treatment outcomes for the treated group, $\boldsymbol \eta$ is a $(n_1 \times 1)$ column vector of intercepts, $\boldsymbol{Y}^{pre}_{n_0}$ is the observed $(n_0 \times T_0)$ matrix of pre-treatment outcomes for the donors, $\boldsymbol{\omega}$ is the $(n_0 \times 1)$ vector of individual invariant weights, $\boldsymbol{\alpha}$ is $(n_0 \times K)$ matrix of coefficients and $ \boldsymbol{O}_{(n_1 \times n_0)}$ is a $(n_1 \times n_0)$ matrix of zeros.
\label{prop_optim_prob_full}
\end{restatable}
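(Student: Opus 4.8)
The plan is to establish Proposition~\ref{prop_optim_prob_full} in two stages: (a) obtain the scalar programme (\ref{opt_pr_scal}) as the constrained least-squares estimator attached to the correlated random coefficients model (\ref{CSC_mod_2}); and (b) show that the matrix programme (\ref{opt_pr_matr}) is merely a re-encoding of (\ref{opt_pr_scal}) obtained by stacking, so that the two programmes have the same feasible set and the same objective, hence the same solutions. There is no analytic content here; the whole proof is bookkeeping with the block structure of $\boldsymbol{\Theta}$, the $Vec$ operator and Kronecker/outer products.

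For stage~(a), I would start from the model equation $y_{it}(0)=\eta_i+\sum_{j=1}^{n_0}w_{ij}y_{jt}(0)+e_{it}$, restricted to treated units $i\in\{1,\dots,n_1\}$ and pre-treatment periods $t\in\{1,\dots,T_0\}$, where every untreated outcome that appears is observed (these are the entries of $\boldsymbol{Y}^{pre}_{n_0}$ and $\boldsymbol{Y}^{pre}_{n_1}$). Estimating this model by constrained least squares means minimising $\sum_{i=1}^{n_1}\sum_{t=1}^{T_0}e_{it}^2$ subject to the $(SC \ Constraints)$. Inserting the $(Random \ Coef.)$ restriction $w_{ij}=\omega_j+\boldsymbol{x}_i\boldsymbol{\alpha}^{j}=\omega_j+\sum_{k=1}^{K}x_i^{(k)}\alpha_j^{(k)}$ rewrites each $e_{it}$ as $y_{it}-\eta_i-\sum_j\omega_jy_{jt}-\sum_j\sum_k\alpha_j^{(k)}y_{jt}x_i^{(k)}$, which is exactly the summand of (\ref{opt_pr_scal}), and rewrites $\sum_jw_{ij}=1$, $w_{ij}\ge 0$ as the two displayed constraints. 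This recovers the programme already written as (\ref{opt_pr_scal_2}) in the main text, so this stage amounts to unpacking what ``estimating the model'' means.

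For stage~(b), the organising identity is that the $(n_1\times n_0)$ weight matrix $\boldsymbol{W}=(w_{ij})$ equals $\boldsymbol{\iota}_{n_1}\boldsymbol{\omega}'+\boldsymbol{X}_{n_1}\boldsymbol{\alpha}'$: the first term places $\omega_j$ in column $j$ of every row, and the $(i,j)$ entry of $\boldsymbol{X}_{n_1}\boldsymbol{\alpha}'$ is $\sum_k x_i^{(k)}\alpha_j^{(k)}$. The constraint block of (\ref{opt_pr_matr}) is then immediate: $\boldsymbol{W}\boldsymbol{\iota}_{n_0}=\boldsymbol{\iota}_{n_1}$ is the stack over $i$ of $\sum_jw_{ij}=1$, and $\boldsymbol{W}\ge\boldsymbol{O}_{(n_1\times n_0)}$ is the stack of the non-negativity conditions. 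For the objective I would form the $(T_0\times n_1)$ residual matrix $\boldsymbol{R}$ whose $i$-th column is the pre-treatment residual vector of treated unit $i$, namely the $i$-th column of $(\boldsymbol{Y}^{pre}_{n_1})'$ minus $\eta_i\boldsymbol{\iota}_{T_0}$, minus $(\boldsymbol{Y}^{pre}_{n_0})'\boldsymbol{\omega}$, minus $(\boldsymbol{Y}^{pre}_{n_0})'\boldsymbol{\alpha}\boldsymbol{x}_i'$; its $(t,i)$ entry is then exactly the summand of (\ref{opt_pr_scal}). Collecting these columns gives $(\boldsymbol{Y}^{pre}_{n_1})'-\boldsymbol{\iota}_{T_0}\boldsymbol{\eta}'-(\boldsymbol{Y}^{pre}_{n_0})'\boldsymbol{\omega}\boldsymbol{\iota}'_{n_1}-(\boldsymbol{Y}^{pre}_{n_0})'\boldsymbol{\alpha}\boldsymbol{X}'_{n_1}$, which is the matrix sitting inside $Vec(\cdot)$ in (\ref{opt_pr_matr}); and since the scalar objective equals $\sum_i\sum_tR_{ti}^2=\|\boldsymbol{R}\|_F^2=Vec(\boldsymbol{R})'Vec(\boldsymbol{R})$, the two objectives coincide on the common feasible set, which finishes the equivalence.

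The part that needs care — rather than a genuine obstacle — is the index bookkeeping in stage~(b): one must fix the convention that $Vec$ stacks columns, so the vectorised residual runs over time within each treated unit and then over treated units; keep straight which factor in each product merely replicates a quantity across all treated units (the $\boldsymbol{\iota}'_{n_1}$ attached to the $\boldsymbol{\omega}$ and $\boldsymbol{\eta}$ terms) versus which one interacts donor outcomes with the treated covariates (the $\boldsymbol{X}'_{n_1}$ attached to the $\boldsymbol{\alpha}$ term); and read each $\otimes$ in (\ref{opt_pr_matr}) as the conformable product that returns the required $(T_0\times n_1)$ block in the objective and $(n_1\times n_0)$ block in the constraints, so that dimensions match throughout. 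Once these conventions are pinned down, the remaining verification is a routine expansion with nothing left to prove.
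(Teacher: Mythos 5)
Your proposal is correct and follows essentially the same route as the paper's own proof: substitute the $(Random \ Coef.)$ restriction to obtain the scalar programme, then stack over $t$ and $i$ and apply $Vec(\cdot)$ to recover the matrix form, with the constraints encoded via $\boldsymbol{W}=\boldsymbol{\iota}_{n_1}\boldsymbol{\omega}'+\boldsymbol{X}_{n_1}\boldsymbol{\alpha}'$. Your explicit remark that each $\otimes$ must be read as the conformable product (since a literal Kronecker product of $(\boldsymbol{Y}^{pre}_{n_0})'\boldsymbol{\alpha}$ with $\boldsymbol{X}'_{n_1}$ would not have the stated $(T_0\times n_1)$ dimension) correctly resolves the one notational ambiguity the paper leaves implicit.
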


\begin{proof}
 See Appendix \ref{sec_proof_prop_optim_prob} for the proof of this Proposition.
\end{proof}

The main intuition for the proof is that we are stacking (\ref{CSC_mod_2}) horizontally over $i$ and vertically over $t$. This allows us to obtain the matrix inside the $Vec(.)$ operator. Note also that even though the constraints in the two formulations look different at first, they represent the same expressions.

Let us now explore further the matrix representation in (\ref{opt_pr_matr}).  While (\ref{opt_pr_matr}) may look daunting at first given these definitions, it is extremely useful when we would like to implement the estimator in practice via some optimisation software. The reason is that it is written in matrices and vectors that are directly observed in the data. For example, $\boldsymbol{Y}^{pre}_{n1}$ and $\boldsymbol{Y}^{pre}_{n_0}$ can be directly taken from $\boldsymbol{\Theta}$ - the matrix of outcomes. As a result, we can apply some relatively simply transformations on the data such as finding transposes and vectorising matrices in order to achieve the desired formulation. 

Lastly, regarding the implementation of CSC, we code the optimisation problem (\ref{opt_pr_matr}) in \texttt{R} using the package \texttt{CVXR}. The key advantage of \texttt{CVXR} relative to other quadratic programming solvers is that it allows us to formulate the problem in a natural mathematical language rather than the restrictive formulations that are required by other solvers \citep{fu19}. Moreover, since \texttt{CVXR} is a wrapper around other solvers, we can directly compare the performance of different solvers in terms of how accurately and how quickly they find the weights.

\newpage
\subsection{Example of multiplicity of solutions} \label{ap_multiplicity}
\begin{example}[Multiplicity of solutions]
Suppose that we have the dataset in Table \ref{ex_tab_multip} and we would like to construct a SC for Tobi. Normal SC that matches on outcomes only will be infeasible, as we have two weighted averages that match perfectly Tobi's pretreatment wage in 1978 and 1979 and his education:
\begin{gather*}
    Tobi = \frac{1}{2} Max + \frac{1}{2} Micol \\
    Tobi = \frac{2}{3} Dirk + \frac{1}{3} Yi
\end{gather*}
Moreover, for any $\lambda \in [0,1]$ we have that $\lambda(\frac{1}{2} Max + \frac{1}{2} Micol ) + (1-\lambda)(\frac{2}{3} Dirk + \frac{1}{3} Yi)$ will also work. Thus, we have an infinite number of solutions.  The question then become how to choose one SC among the infinite number of potential SC. \citet{lho20} suggest selecting the SC with the most similar outcomes and covariates. In this case, this will be Max and Micol, as the difference between their outcomes and covariates and Tobi's outcomes and covariates is smaller in absolute terms than the same difference but for Dirk and Yi. 

\begin{table}[!htbp] \centering 
\begin{threeparttable}
  \caption{ Pseudo Individual-level sample for the Mariel Boatlift} 
    \label{ex_tab_multip}
\begin{tabular}{@{\extracolsep{5pt}} lccccc} 
\\[-3ex]\hline 
\hline \\[-3ex] 
Person & City  & wage 1978 & wage 1979 & education \\ 
\hline \\[-3ex] 
Tobi & Miami   & $12$ & $15$ & $11$ \\ 
Mobarak & Miami   & $21$ & $23$ & $12$ \\ 
\dots &   \dots &  & \dots \\ 
Max & Nashville  & $11$ & $13$ & $12$ \\
Dirk & Nashville  & $9$ & $10$ & $9$ \\ 
\dots &   \dots &  & \dots \\
Micol & Chicago  & $13$ & $17$ & $10$ \\
Yi & Chicago  & $18$ & $30$ & $15$ \\ 
\hline \\[-3ex] 
\end{tabular} 
\end{threeparttable}

\end{table} 
\end{example}
\newpage 

\subsection{City-level vs Pooled SC}\label{ap_equiv}
This section is aimed at giving some intuition towards why the pooled SC nests the city-level SC as a special case. We can prove the following proposition:

\begin{restatable}{proposition}{propcityscm}
Suppose that:
\begin{enumerate}[(i)]
    \item We have a panel which is balanced across cities, e.g., for every treated and untreated city we observe the same number of individuals $n$.
    \item We have no covariates, so that we only observe $\boldsymbol{\Theta}$ and city membership
    \item We have a total of $C$ cities, of which the first $c_0$ cities are untreated and the rest $c_1$  cities are treated
\end{enumerate}
Then, if we impose the additional requirement $w_{dj} = \frac{w_d}{n}$, the pooled SC estimator and the city-level SC estimator solve the same optimisation problem.
\label{prop_city_scm}
\end{restatable}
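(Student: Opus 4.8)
\medskip

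\noindent\textbf{Proof plan.} The plan is to write both optimisation problems out explicitly, substitute the restriction $w_{dj}=w_d/n$ into the pooled problem, and show that the restricted pooled objective equals the city-level objective up to a positive multiplicative constant and an additive term that does not involve the weights, while the constraint sets coincide.

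First I would fix notation. Index each donor individual by a pair $(d,l)$ with $d\in\{1,\dots,c_0\}$ the donor city and $l\in\{1,\dots,n\}$ the individual within it, so that $n_0=c_0 n$; likewise index each treated individual by $(c,j)$ with $c\in\{c_0+1,\dots,C\}$ and $j\in\{1,\dots,n\}$. Write $\bar y_{ct}=\frac1n\sum_{j=1}^n y_{(c,j),t}$ for the city average outcome. The pooled SC estimator chooses donor-individual weights $\{w_{(d,l)}\}$ to minimise $\sum_{c}\sum_{j}\sum_{t=1}^{T_0}\bigl(y_{(c,j),t}-\sum_{d,l}w_{(d,l)}y_{(d,l),t}\bigr)^2$ subject to $\sum_{d,l}w_{(d,l)}=1$ and $w_{(d,l)}\ge 0$, while the city-level SC estimator chooses city weights $\{w_d\}$ to minimise $\sum_{c}\sum_{t=1}^{T_0}\bigl(\bar y_{ct}-\sum_{d}w_d\bar y_{dt}\bigr)^2$ subject to $\sum_d w_d=1$ and $w_d\ge 0$.

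Next I would impose $w_{(d,l)}=w_d/n$ and simplify. On the donor side the synthetic outcome collapses, $\sum_{d,l}(w_d/n)y_{(d,l),t}=\sum_d w_d\bar y_{dt}=:z_t$, and the constraints collapse as well: $\sum_{d,l}w_d/n=\sum_d w_d$, and $w_{(d,l)}\ge 0\iff w_d\ge 0$ since $n>0$, so the feasible set for $\{w_d\}$ is exactly the simplex used by the city-level problem. The only genuine computation is the within-city decomposition on the treated side: for each treated city $c$ and period $t$,
\[
\sum_{j=1}^n\bigl(y_{(c,j),t}-z_t\bigr)^2 = n\bigl(\bar y_{ct}-z_t\bigr)^2+\sum_{j=1}^n\bigl(y_{(c,j),t}-\bar y_{ct}\bigr)^2,
\]
which follows by adding and subtracting $\bar y_{ct}$ inside the square, the cross term vanishing by definition of $\bar y_{ct}$. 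The last sum is independent of the weights, so summing over $c$ and $t$ the restricted pooled objective equals $n$ times the city-level objective plus a constant; hence the two problems have identical minimisers.

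The main obstacle — really the only non-routine step — is recognising and justifying this bias--variance-type split of the individual fit error into a between-city component, which the synthetic control can influence, and a within-city component, which it cannot; once that identity is in place the rest is bookkeeping with the explicit sums. A secondary point to handle carefully is pinning down what ``city-level SC'' means when $c_1>1$: I would take it to be the city-level analogue of pooled SC (a single weight vector matching every treated city's average), which is the natural reading and the one under which the equivalence is exact. I would also remark that the argument is unaffected by allowing a common intercept, and that assumption~(i) — balanced $n$ across cities — is precisely what makes the scaling factor $1/n$ uniform and the collapse clean.
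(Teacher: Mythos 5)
Your proof is correct and follows essentially the same route as the paper's: substitute $w_{dj}=w_d/n$, expand the squared fit error, and observe that the restricted pooled objective and the city-level objective differ only by a positive multiplicative factor and additive terms that do not involve the weights. Your within-/between-city (ANOVA) decomposition is just a tidier packaging of the paper's ``open the squares and drop the irrelevant terms'' step, and you are in fact more careful than the paper on two points it glosses over: the coincidence of the constraint sets under the restriction, and the factor of $n$ by which the two objectives differ.
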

\begin{proof}
        The proof is in Appendix \ref{subsec_prop_city_scm}.
\end{proof}

While the balance of the panel across cities may seem like a restrictive condition, we should be able to relax it by imposing in the pooled SC that each city has an equal weight overall, even though different cities contain a different number of observations. For instance, if we have many individuals within two cities Miami and Atlanta with Miami having 100 individuals and Atlanta just 50, then we can impose that each individual in Miami gets a weight $\frac{50}{150}$ and each individual in Atlanta gets a weight of $\frac{100}{150}$. For reasons of time, however, I was unable to show this more general case more formally. Nevertheless, the main result still stands: the pooled SC nests the city-level SC.

\newpage

\newpage

\subsection{DiD Inconsistency}\label{subsec_DiD_incons}
Here we sketch another argument in addition to Proposition \ref{DiD_bou} for showing that $\hat{\tau}^{DiD}$ will be inconsistent in light of the discussion in Section \ref{subsec_DiD_con}. Note that estimating $y_{it} = \rho + \gamma_i + \delta_t + \tau D_{it} + u_{it}$ will yield an inconsistent estimate of the true $\tau$, since $Cov(D_{it}, u_{it}) \neq 0$, meaning that $D_{it}$ is endogenous. Naturally, in the spirit of the Arellano-Bond estimator, the question arises if we can apply some form of transformation to the data (e.g. first differences) and instrument $D_{it}$ with some of its lagged values  $D_{is}$ in order to deal with the endogeneity issue. Unfortunately, this will not work, due to the interactive fixed effects structure. Note that we also have a further problem with consistency: $Cov(\gamma_i, u_{it}) \neq 0$ as factor loadings $\boldsymbol{\mu}_i$ enter into both (reduced-form) terms, complicating further consistency of $\hat{\tau}^{DiD}$ in (\ref{DiD_int_fixef}).

\newpage 

\subsection{Fixed parameters for simulation}\label{subsec_fix_param}

The full DGP for the simulation is specified as:
\begin{equation}
    \begin{aligned}
        (Outcome)& \quad \quad \quad y_{it} = \boldsymbol{\beta x_i'} + \boldsymbol{ \lambda_t\mu_i} + D_{it}\tau + \epsilon_{it} \\
        (Common \ Factor)& \quad \quad \quad  \boldsymbol{ \mu_i} = \boldsymbol{ \gamma}  \boldsymbol{x}'_i + \boldsymbol{v}_i \\
        (Shock \ to \ \mu_i)& \quad \quad \quad  \boldsymbol{v}_i \stackrel{iid}{\sim} N_F(\boldsymbol{\mu},\boldsymbol{I}_F) \\
        (Idiosyncratic \ shock)& \quad \quad \quad \epsilon_{it} \stackrel{iid}{\sim} N(0,1) \\
        (Categorical \ Predictor)& \quad \quad \quad x^{(1)}_i \stackrel{iid}{\sim} U[{1,2,\dots, K}] \\
        (Continuous \ Predictor)& \quad \quad \quad x^{(2)}_i \stackrel{iid}{\sim} N(0,1)
    \end{aligned}
    \label{full_dgp}
\end{equation}
Let us now detail what this DGP suggests. For $\boldsymbol{\Theta}$, we assume that every outcome follows an interactive fixed effects model, given by $(Outcome)$. In addition, the common factors $\boldsymbol{\mu}_i$ are allowed to be correlated with the time-invariant covariates, as required by Assumption \ref{ass_th}.(iii). Moreover, $\boldsymbol{ \gamma}$ is a fixed $1\times K$ vector controlling the correlation between the factor loadings and the time-invariant covariates, e.g., if we set $\boldsymbol{\gamma} = 0$ then there is no correlation. We allow $\boldsymbol{\mu}_i$ to have an expectation different from $0$ since the mean of $v$ is not 0, as $\boldsymbol{v}$'s mean is actually $\boldsymbol{\mu}$ in expectation. The idiosyncratic shocks $\epsilon_{it}$ are drawn from a standard normal distribution. On the other hand, we assume that there are two covariates in the DGP: a categorical covariate  $x^{(1)}_i$ and a continuous predictor $x^{(2)}_i$.\footnote{Both of them are time-invariant in order to avoid the need to make transformations such as taking the average of $$x^{(2)}_i$$ which is required by SC-like methods. Integrating time-varying predictors in SC methods and more generally the place of covariates in SC is still work in progress \citep{bot19}.} The motivation for including both continuous and discrete covariates stemmed from the restriction on CSC that only allows for discrete predictors.\footnote{See Section \ref{subsec_est}} So, including $x^{(2)}_i$ allows us to explore what happens to CSC when we simply recode a continuous predictor as a binary one, given that CSC cannot handle categorical predictors (see Section \ref{subsec_est}). On the other hand, we assume that $x^{(1)}_i$ is drawn from a discrete uniform distribution with $K$ categories, meaning that every observation is equally likely to belong to any category in $\{1, 2, \dots, K\}$.  Lastly, we assume $v_i$, $u_{it}$, $x^{1}_i$ and $x^{2}_i$ are all uncorrelated with each other.

The parameters $\boldsymbol{\beta}$, $\boldsymbol{\lambda}$, $\boldsymbol{\gamma}$ and $\boldsymbol{\phi}$ are treated as fixed in the assumptions of the model. So, they are constant across all simulations. Firstly for $\boldsymbol{\beta}$, we set the vector $\boldsymbol{\beta}$ with coefficients for $x^{(1)}_i$ and the $H=5$ categories of $x^{(2)}_i$ as:
$$ \boldsymbol{\beta} = (1, 0.4, 0.6, 0.8, 1, 1.2) $$
We need to fix the parameters for three matrices: the common factors $\boldsymbol{\lambda}$, the partial correlation between unobesrvables and covariates $\boldsymbol{\gamma}$ and the partial correlation between unobservables and treatment assignment $\boldsymbol{\phi}$.

Secondly, the $(T \times F)$ matrix of common factors $\boldsymbol{\lambda}$ was independently drawn once from $N(3,2)$ in (\ref{tab_lambda}). The rows are the time periods and the columns are the factors. 
\begin{gather}
    \begin{pmatrix}
        1.79 & 2.44 & 2.49 & 2.31 \\ 
        3.27 & 2.11 & 2.09 & 1.55 \\ 
        4.08 & 2.52 & 2.16 & 3.58 \\ 
        0.65 & 2.00 & 5.41 & 1.98 \\ 
        3.43 & 2.22 & 3.13 & 2.99 \\ 
        3.51 & 3.06 & 2.51 & 2.06 \\ 
        2.43 & 3.96 & 2.56 & 4.10 \\ 
        2.45 & 2.89 & 3.46 & 2.52 \\ 
    \end{pmatrix}
    \label{tab_lambda}
\end{gather}

Thirdly, we give the fixed values of the $K \times F$ matrix $\boldsymbol{\gamma}$ in (\ref{tab_gamma}). These are the values, controlling the (partial) correlation between the time-invariant covariates $\boldsymbol{x}_i$ and the unobserved factor loadings $\boldsymbol{\mu}$ from (\ref{full_dgp}). We draw each member of $\boldsymbol{\gamma}$ independently from a standard normal $N(1,0)$ distribution:

\begin{gather}
    \begin{pmatrix}
            -1.48 & -0.32 & -0.78 & 0.51 \\ 
            1.58 & -0.63 & 0.01 & -0.29 \\ 
            -0.96 & -0.11 & -0.15 & 0.22 \\ 
            -0.92 & 0.43 & -0.70 & 2.01 \\ 
            -2.00 & -0.78 & 1.19 & 1.01 \\ 
            -0.27 & -1.29 & 0.34 & -0.30 \\ 
    \end{pmatrix}
    \label{tab_gamma}
\end{gather}

Lastly, the parameter$\boldsymbol{\phi}$  determines the degree of correlation between the factor loadings and treatment assignment (See  \ref{tr_not_at_rand}). In particular, we set the parameter as 
$$  \boldsymbol{\phi} = (-1.12, -0.46,  3.12,  0.14)' $$
Note that we hold it constant across all of our simulations, even though we generated it initially from a $N(0,1)$ distribution.

After generating the initial values for these parameters, we hold them to be the same across all of our simulations and for all estimators that we compare in order to avoid making the performance of the different estimators dependent on the particular values being generated in each case.\footnote{For instance, suppose we decide to generate different values these parameters when simulating the data for different estimators. However, we may get very unlucky with our draws of values for iDiD but very lucky with our values of CSC, so that in the end it may seem that CSC performs better but in reality this is simply due to the randomness.}

\newpage
\subsection{More details on PSC}\label{subsec_psc_det}
How does PSC \citep{lho20} work? As mentioned in Section \ref{subsec_man}, it creates separate SCs for every treated observation in a way that tackles multiplicity of solutions. Out of all possible combinations of donors that exactly fit the time-series for treated observation $i$, PSC choses the one combination, in which the donors are most similar to $i$ in terms of observables in addition to matching the weighted average of unit $i$. We can show this more formally via the optimisation problem solved by PSC. For every treated unit $i$ in $\{n_0+1, n_0+2, \dots, N\}$, we have in scalar form\footnote{For the purpose of clarity, we diverge from the widely used notation in Section \ref{sec_est_cons}}:
\begin{gather*}
    \min_{\boldsymbol{w}_i \in \Delta} \underbrace{\sum_{t=1}^{T_0} \left(y_{it} - \sum_{j=1}^{n_0} y_{jt}w_{ij} \right)^2 + \sum_{k=1}^{K} \left(x^{(k)}_{i} - \sum_{j=1}^{n_0} x^{(k)}_{j}w_{ij} \right)^2}_{Synthetic \ Control} + \\ 
    \underbrace{\lambda \sum_{j=1}^{n_0} w_{ij} \left( \sum_{t=1}^{T_0} (y_{it} - y_{jt})^2 + \sum_{k=1}^K (x^{(k)}_{i} - x^{(k)}_{j})^2 \right)}_{Penalty}
\end{gather*}
where $\lambda$ is a tuning parameter that can be chosen via cross-validation.\footnote{When we have data on outcomes $y_{it}$ for $T_0 \geq 2$, we can calculate $n_1$ (penalised) SCs using data on the time-invariant covariates $x_i^k$ and the first $T_0-1$ outcomes $y_1$. We can then evaluate the performance of the $n_1$ different SC at time $T_0$.} The SC part of the objective function ensures that the weighted average matches the treated unit's outcomes and covariates. However, on its own, it is not sufficient to select a unique solution. So, the penalty part ensures that we select the set of donors that are also similar to the treated unit in terms of observable characteristics (and not just with a similar weighted average), i.e., PSC prefers similar donors rather than different donors. A numerical example of how PSC works is presented in Appendix \ref{ap_multiplicity}.\footnote{As in the case of CSC, we coded the optimisation problem in \texttt{R} using \texttt{CVXR} package \citep{fu19}.} 

\newpage

\subsection{Further Simulation Results } \label{subsec_rmse_att}
Table \ref{tab_sim_mse_att} gives the results from the same simulation in \textbf{Algorithm 1} that we used in the main body of the test but not for the estimation error of $\hat{\tau}$ but rather its RMSE. The general features of the results are the same, as discussed in Section \ref{sec_sim}. One interesting feature of the results, however, is that once we take account of the variance we can see even more clearly how iDiD does a better job at predicting the ATT relative to the case where we study the pure estimation error.

\begin{table}[hbtp!]
    \centering
    \caption{ \textit{RMSE between $\hat{\tau}$ and true $\tau$ under different DGPs across 1000 simulations} }
    \label{tab_sim_mse_att}
    \begingroup
    \setlength\extrarowheight{1.2pt} 
    \[
    \begin{array}{lccc|c}
   \\[-1.8ex]\hline 
    \hline 
    \mathrm{Change} & \mathrm{CSC} & \mathrm{fDiD} & \mathrm{PSC} & \mathrm{iDiD}  \\ 
    \hline 
     \mathrm{Baseline} & \textbf{1.91} & 2.40 &  2.68 & \textbf{0.98} \\ 
    \mathrm{Less \ Correlation} & \textbf{1.87} & 2.35 &  2.56 & \textbf{0.99} \\ 
   \mathrm{Random \ Assignment} & \textbf{1.88} & 2.15 &  2.51 & \textbf{1.00} \\ 
    \hline 
    N = 150 & \textbf{1.88} & 2.37 &  2.45 & \textbf{0.98} \\ 
    N = 200 & \textbf{1.88} & 2.37 &  2.37 & \textbf{0.97} \\ 
    \hline 
    T = 4 & \textbf{5.70} & 6.35 & 10.29 & \textbf{0.94} \\ 
    T = 8 & \textbf{1.42} & \textbf{1.40} &  2.28 & \textbf{1.01} \\
    \hline 
    F = 2 & \textbf{1.98} & 3.08 &  3.05 & \textbf{0.99}\\ 
    F = 4 & \textbf{2.06} & 2.55 &  3.15 & \textbf{0.99} \\ 
    \hline 
    E[\pi_i] = 0.25 & \textbf{1.92} & 2.35 &  2.76 & \textbf{0.98} \\ 
    E[\pi_i] = 0.40 & \textbf{1.97} & 2.30 &  2.86 & \textbf{0.97}\\ 
    \hline
    \end{array}
    \]
    \endgroup
\end{table}

\newpage

\subsection{Variables from PSID} \label{psid_var}
\subsubsection{Sample restrictions}
The PSID sample that we use consists of 11 waves (1974-1984) and is restricted to male heads of households.  Until recently, the heads of households in PSID were considered to be the husband in a traditional heterosexual marriage. This formulation has been controversial and was justifiably described as ``anachronistic" \citep{car18} and so it has been changed in recent years but not for the time period studied here. As a result, our sample is restricted to males of working age (19-64), for whom we have data on all 11 years. 

Excluding women can be justified in two ways: i) the data for them in PSID is very sparse for the period we studied and  ii) controlling for the general increase in women's participation in the labour force at the time is challenging given that we work with time-invariant covariates \citep{bor17}. However, further work on the Mariel Boatlift with PSID should strive to include women in the analysis. 

\subsubsection{Variable Selection}
Turning to the variables selected in the sample, the main outcomes of interest are total hours worked per year as an indicator of labour supply and hourly (log) wages. While previous studies have found no effect on aggregate unemployment, \citet{dus16} has recently stated the importance of relaxing the assumption of perfectly inelastic labour supply and studying labour supply effect in response to immigration. This would amount to making $LS$ in Figure \ref{fig_imm_shock} downward sloping rather than a vertical line. In addition, we also include various covariates on occupation, industry, health, marriage status, age and education. 

Table \ref{tab_var_psid} below presents the variables that we include in the analysis of the empirical application in the paper. It also discusses any recoding that we have used to ensure that the variables can be meaningfully included in the analysis. For example, days being ill has been recoded to a binary variable, indicating being one of the top 25\% people who spent the longest period of time being ill. More importantly,  we turn the covariates below into time-invariant variables by taking their average over the pre-treatment period. In addition, we give the name of the variable only in 1975. The reason is that variable names differ across years but from PSID's Variable Search engine one can identify which are the comparable variables across years.

Next, it should be remarked that the original sample we used was from 1973 to 1985. However, we need to impose restrictions, due to the fact that one of the outcomes (hours worked) was retrospective for the past year and the other outcome (wages) was recorded for the week of interview. Lastly, since PSID's website is a bit hard to navigate, we used \texttt{R} package \texttt{psidR} to download the data \citep{osw20}.

\begin{table}[htbp]
    \centering
    \caption{ Variables used in the analysis for the empirical application}
    \label{tab_var_psid}
    \begin{tabular}{llcc}
       \\[-1.8ex]\hline 
        \hline 
        1975 Var. & Interpretation & Use & Recoding and Notes \\
        \hline 
        V3823 & Tot. Hours Worked & Outcome & Hours worked in \textit{previous} year \\
        V4093 & Hourly Wage & Outcome & Hourly wage in  \textit{current} year \\
        \hline
        V3803 & State Residence & Treatment & Used to find who's in Florida \\
        V4204 & Race & Covariate & Coded as white vs non-white \\
        V4093 & Education & Covariate & Coded as college vs no college \\
        V4194 & Married  & Covariate & Married or not \\
        V3969A & Head Industry & Covariate & Coded as dummy variables \\
        V3968A & Head Occupation & Covariate & Coded as dummy variables \\
        V3921 & Age & Covariate & Used to restrict sample \\
        V3825 & Hours Ill & Covariate & Coded as binary variable \\
        \hline
    \end{tabular}
\end{table}

\newpage

\subsection{HTT for other Specifications}\label{subsec_htt_diff_spec}
As discussed in Section \ref{subsec_hte}, we also calculated heterogeneous treatment effects for other lengths of the training period. The overall impression remains, however, that the only variable for which we consistently find a significant negative effects is wages of low-skilled workers. For other combinations (e.g. labour supply for high-skilled workers), we cannot reject the null hypothesis of no effect. The results can be found in  Figure \ref{fig_hte_2}, Figure \ref{fig_hte_3} and  Figure \ref{fig_hte_4}.

\begin{figure}[hbtp!]
    \centering
    \includegraphics{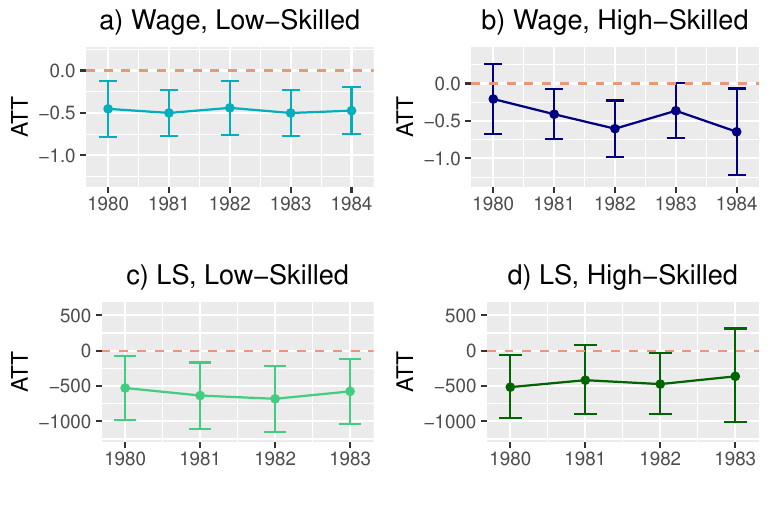}
    \caption{Treatment Effects when $T_{train} = 2$}
    \label{fig_hte_2}
\end{figure}

\begin{figure}
    \centering
    \includegraphics{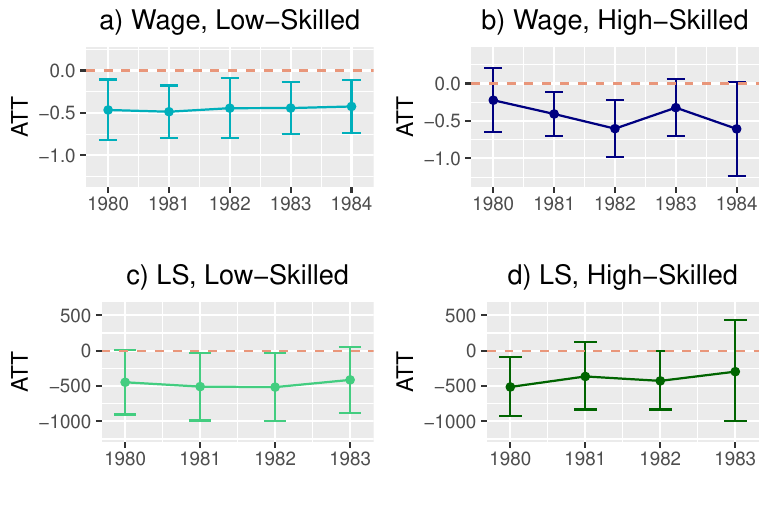}
    \caption{Treatment Effects when $T_{train} = 3$}
    \label{fig_hte_3}
\end{figure}

\begin{figure}
    \centering
    \includegraphics{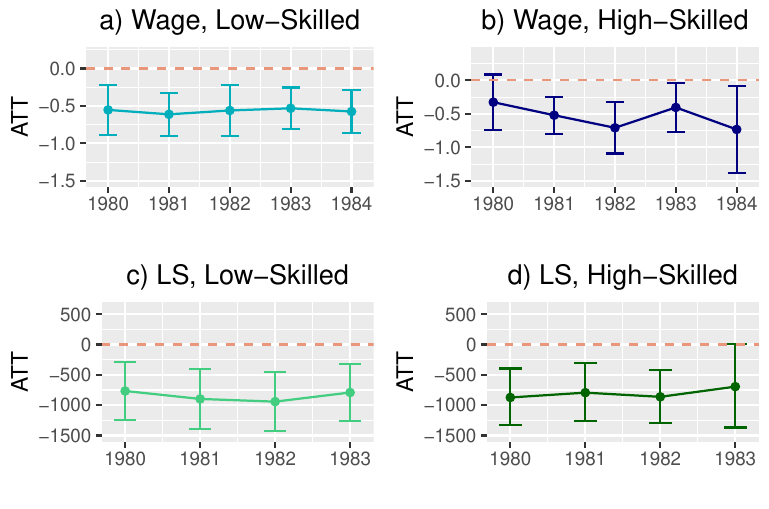}
    \caption{Treatment Effects when $T_{train} = 4$ for wages and $T_{train} = 1$ for LS }
    \label{fig_hte_4}
\end{figure}

\newpage
\section{Proofs of main results}\label{sec_proof}

\subsection{Proof of Lemma \ref{lem_aba} }\label{sec_proof_lem_aba}
\lemaba*

\begin{proof}
The proof is a generalisation for $n_1 > 1$ of a result in \citet[][503-504]{aba10}. Define first the unobserved true ATT for all pre-treatment and post-treatment $t \in \{1,2, \dots, T_0,  T \}$ as:
$$ \tau_t := \frac{\sum_{i=n_0+1}^N [y_{it}(1) - y_{it}(0) ]}{n_1} $$
Next, we can define the estimated ATT for $\forall t \in \{1,2, \dots, T_0,  T \}$ as:
$$ \hat{\tau}_t := \frac{\sum_{i=n_0+1}^N [y_{it}(1) - \hat{y}_{it}(0) ]}{n_1} $$
where we note that $\hat{\tau}_T = \hat{\tau}^{CSC}$ and $\tau_T = \tau$ which form the estimation error in the proposition. We begin by substituting the two quantities for all pre-treatment and post-treatment $t \in \{1,2, \dots, T_0,  T \}$:
\begin{align*}
    \widehat{\tau}_t - \tau_t  =& \frac{\sum_{i=n_0+1}^N [y_{it}(1) - \hat{y}_{it}(0) ]}{n_1} -  \frac{\sum_{i=n_0+1}^N [y_{it}(1) - y_{it}(0) ]}{n_1} = \\ 
    =& \frac{\sum_{i=n_0+1}^N [y_{it}(0) - \hat{y}_{it}(0)] }{n_1} = \\
    =& \frac{1}{n_1} \sum_{i=n_0+1}^{n_1} \left(y_{it}(0) - \sum_{j=1}^{n_0} \hat{w}_{ij}y_{jt}(0) \right) 
\end{align*} 
where the first row substitutes the definitions of $\hat{\tau}$ and $\tau$, the second row cancels out the sums of $y_{it}(0)$ over $i$ and $t$ and in the last row we use the \underline{Exact Fit} assumption that $\hat{y}_{it}(0) = \sum_{j=1}^{n_0} \hat{w}_{ij}y_{jt}(0)$. We have for all $t \in \{1,2, \dots, T_0, \dots, T \}$ that:
\begin{gather*}
    \frac{1}{n_1} \sum_{i=n_0+1}^N (y_{it}(0) - \sum_{j=1}^{n_0} \hat{w}_{ij} y_{jt}(0) ) = \\
    \frac{1}{n_1} \sum_{i=n_0+1}^N \left( \boldsymbol{\theta_t} \boldsymbol{x}_i + \boldsymbol{\lambda}_t \boldsymbol{\mu}_i + \epsilon_{it} - \left(\sum_{j=1}^{n_0} \hat{w}_{ij}(\boldsymbol{\theta}_t \boldsymbol{x}_j +  \boldsymbol{\lambda}_t \boldsymbol{\mu}_j + \epsilon_{it} ) \right) \right) = \\
    = \frac{1}{n_1} \sum_{i=n_0+1}^N\left( \boldsymbol{\theta}_t \left( \boldsymbol x_i - \sum_{j=1}^{n_0} \widehat{w}_{ij}\boldsymbol x_j \right) + \boldsymbol{\lambda}_t \left( \boldsymbol{\mu}_i - \sum_{j=1}^{n_0} \hat{w}_{ij} \boldsymbol{\mu}_j \right) + \left( \epsilon_{it} - \sum_{j=1}^{n_0} \hat{w}_{ij} \epsilon_{jt} \right) \right) 
 \end{gather*}
where the second follows from the assumption for the DGP (namely, $y_{it}(0) = \boldsymbol{\theta}_t \boldsymbol{x}_j +  \boldsymbol{\lambda}_t \boldsymbol{\mu}_j + \epsilon_{it}$) and the third row simply rearranges the second row. We can then apply \textbf{Assumption \ref{ex_fit_as}} \underline{Exact Fit} for the covariates $\boldsymbol x_i $ in the last expression, so that it can be rewritten as:
\begin{gather*}
    \frac{1}{n_1} \sum_{i=n_0+1}^N (y_{it}(0) - \sum_{j=1}^{n_0} \hat{w}_{ij}y_{jt}(0))  =  \frac{1}{n_1} \sum_{i=n_0+1}^N\left( \boldsymbol{\lambda}_t \left( \boldsymbol{\mu}_i - \sum_{j=1}^{n_0} \hat{w}_{ij} \boldsymbol{\mu}_j \right) + \left( \epsilon_{it} - \sum_{j=1}^{n_0} \hat{w}_{ij} \epsilon_{jt} \right) \right) \quad \quad (*)
 \end{gather*}
Next, following \citet{aba10}, we will eliminate from $(*)$ the expression involving $\boldsymbol{\mu}$-s. So, we stack over $t \in \{1,2, \dots, T_0\}$ for the pretreatment period:
\begin{gather*}
    \sum_{i=n_0+1}^N \left( \boldsymbol{y}_{i,pre}(0) - \sum_{j=1}^{n_0}  \widehat{w}_{ij} \boldsymbol{y}_{j,pre}(0) \right) =  \sum_{i=n_0+1}^N \left[ \boldsymbol{\lambda}_{pre}  \left( \boldsymbol{\mu}_i - \sum_{j=1}^{n_1} \hat{w}_{ij} \boldsymbol{\mu}_j \right) + \left( \boldsymbol{\epsilon}_{i,pre} - \sum_{j=1}^{n_1} \hat{w}_{ij} \boldsymbol{\epsilon}_{j,pre}  \right) \right] \\
    \sum_{i=n_0+1}^N \left(  \sum_{j=1}^{n_1} \hat{w}_{ij} \boldsymbol{\epsilon}_{j,pre} - \boldsymbol{\epsilon}_{i,pre} \right) = \boldsymbol{\lambda}_{pre} \sum_{i=n_0+1}^N    \left( \boldsymbol{\mu}_i - \sum_{j=1}^{n_1} \hat{w}_{ij} \boldsymbol{\mu}_j \right)  \\
   (\boldsymbol{\lambda}'_{pre}\boldsymbol{\lambda}_{pre})^{-1} \boldsymbol{\lambda}'_{pre}  \sum_{i=n_0+1}^N \left(  \sum_{j=1}^{n_1} \hat{w}_{ij}  \boldsymbol{\epsilon}_{j,pre} - \boldsymbol{\epsilon}_{i,pre} \right) = \sum_{i=n_0+1}^N    \left( \boldsymbol{\mu}_i - \sum_{j=1}^{n_1} \hat{w}_{ij} \boldsymbol{\mu}_j \right) \quad \quad (**)
\end{gather*}
where the second row follows from the first due to \textbf{Assumption \ref{ex_fit_as}} \underline{Exact Fit} for the outcomes in the pre-treatment period and the third row follows due to assumption of invertible $\boldsymbol{\lambda}'_{pre}\boldsymbol{\lambda}_{pre}$ and after multiplying both sides by $(\boldsymbol{\lambda}'_{pre}\boldsymbol{\lambda}_{pre})^{-1} \boldsymbol{\lambda}'_{pre}$. Next, we substitute the last expression back into $(*)$ to obtain for $T=T_0+1$, i.e. in the single post-treatment period: 
\begin{gather*}
     \hat{\tau}^{CSC} - \tau = \frac{1}{n_1} \sum_{i=n_0+1}^N (y_{iT}(0) - \sum_{j=1}^{n_0} \hat{w}_{ij} y_{jT}(0) ) = \\
     = \frac{1}{n_1} \boldsymbol{\lambda}_T \sum_{i=n_0+1}^N\left(  (\boldsymbol{\lambda}'_{pre}\boldsymbol{\lambda}_{pre})^{-1} \boldsymbol{\lambda}'_{pre}  \left(  \sum_{j=1}^{n_1} \hat{w}_{ij}  \boldsymbol{\epsilon}_{j,pre} - \boldsymbol{\epsilon}_{i,pre} \right) \right) + \frac{1}{n_1} \left[ \sum_{i=n_0+1}^N  \left( \epsilon_{iT} - \sum_{j=1}^{n_0} \hat{w}_{ij} \epsilon_{jT} \right)  \right] \\
     = \frac{1}{n_1}   \left[ \boldsymbol{\lambda}_t (\boldsymbol{\lambda}'_{pre}\boldsymbol{\lambda}_{pre})^{-1} \boldsymbol{\lambda}'_{pre} \sum_{i=n_0+1}^N   \left(  \sum_{j=1}^{n_1} \hat{w}_{ij}  \boldsymbol{\epsilon}_{j,pre} - \boldsymbol{\epsilon}_{i,pre} \right)\right] + \frac{1}{n_1} \left[ \sum_{i=n_0+1}^N  \left( \epsilon_{iT} - \sum_{j=1}^{n_0} \hat{w}_{ij} \epsilon_{jT} \right)  \right]
\end{gather*}
where the second row follows after substituting  $(**)$ into $(*)$ and after separating the summation over $i$. The third row follows by simply rearranging the second row.
\end{proof}

\newpage

\subsection{Proof of Lemma 2}

\begin{lemma}
    Suppose $Z_1, Z_2, \dots Z_n$ are all \texttt{subG}$(\sigma^2_i)$. Then, their average satisfies:
    $$ \frac{\sum_{i=1}^n Z_i}{n} \sim SubG\left(\frac{1}{n^2} \left(\sum_{i=1}^n \sigma_i \right)^2 \right) $$
    \label{lem_subg}
\end{lemma}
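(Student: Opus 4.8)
The plan is to work directly from the moment generating function characterisation of sub-Gaussianity: $Z \sim \texttt{subG}(\sigma^2)$ means $E[e^{sZ}] \le e^{\sigma^2 s^2/2}$ for every $s \in \R$. The argument then splits into two elementary observations plus one genuinely useful inequality.

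First I would record the scaling property: if $X \sim \texttt{subG}(\sigma^2)$ and $c \in \R$, then $E[e^{s(cX)}] = E[e^{(cs)X}] \le e^{\sigma^2 c^2 s^2/2}$, so $cX \sim \texttt{subG}(c^2\sigma^2)$. Taking $c = 1/n$ reduces the claim to showing that the \emph{sum} $S := \sum_{i=1}^n Z_i$ is $\texttt{subG}\big((\sum_{i=1}^n \sigma_i)^2\big)$; from that, $\tfrac1n S \sim \texttt{subG}\big(\tfrac1{n^2}(\sum_i\sigma_i)^2\big)$, which is exactly the assertion.

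The core step is bounding $E[e^{sS}] = E\big[\prod_{i=1}^n e^{sZ_i}\big]$ without an independence hypothesis. I would apply the generalised Hölder inequality with exponents $p_i := (\sum_{j=1}^n\sigma_j)/\sigma_i$, which satisfy $\sum_i 1/p_i = 1$ by construction. This yields $E[e^{sS}] \le \prod_i \big(E[e^{sp_iZ_i}]\big)^{1/p_i} \le \prod_i \big(e^{\sigma_i^2(sp_i)^2/2}\big)^{1/p_i} = \exp\big(\tfrac{s^2}{2}\sum_i \sigma_i^2 p_i\big)$, and the choice of $p_i$ makes $\sum_i \sigma_i^2 p_i = \sum_i \sigma_i \sum_j \sigma_j = (\sum_i\sigma_i)^2$. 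Hence $S \sim \texttt{subG}\big((\sum_i\sigma_i)^2\big)$, and combining with the scaling step closes the proof. If instead one is content to assume the $Z_i$ are independent — which is the case in the intended application, where the $Z_i$ are linear combinations of the iid errors $\epsilon_{it}$ — the Hölder step can be replaced by simply factoring the MGF, $E[e^{sS}] = \prod_i E[e^{sZ_i}] \le e^{(\sum_i \sigma_i^2)s^2/2}$, which gives the even sharper parameter $\sum_i\sigma_i^2 \le (\sum_i\sigma_i)^2$.

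The only mildly delicate point is the choice of Hölder exponents; everything else is bookkeeping. One edge case deserves a line: if some $\sigma_i = 0$, then $Z_i$ is almost surely constant and contributes nothing, so we may discard it, and if all $\sigma_i = 0$ the claim is trivial; thus we may assume $\sum_i \sigma_i > 0$, making every $p_i$ well-defined and finite.
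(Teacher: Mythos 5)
Your proof is correct, and it rests on exactly the same key tool as the paper's: Hölder's inequality applied to the moment generating function so as to decouple the $Z_i$ without assuming independence. The only organizational difference is that you invoke the $n$-fold generalised Hölder inequality with exponents $p_i = \bigl(\sum_j \sigma_j\bigr)/\sigma_i$ in one shot, whereas the paper inducts on $n$ using the two-term Hölder inequality at each step (with exponents $\frac{\sigma+\sigma_k}{\sigma}$ and $\frac{\sigma+\sigma_k}{\sigma_k}$, where $\sigma=\sum_{i<k}\sigma_i$); since the generalised Hölder inequality is itself usually proved by that induction, the two arguments are the same proof in different packaging, and yours is the cleaner write-up. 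One small correction to your closing aside: in the intended application (Claim 4 of the CSC bound) the variables $z_i=\epsilon_{is}-\sum_j \hat{w}_{ij}\epsilon_{js}$ are \emph{not} independent across $i$ --- the paper explicitly computes $E[z_i z_k]>0$ because the treated units share the same donor errors --- so the dependence-free Hölder route is genuinely needed there and the sharper parameter $\sum_i\sigma_i^2$ is not available.
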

\begin{proof}
We will show this by induction. The expression holds trivially for $n=1$, as if $Z_1 \sim$ \texttt{SubG}$(\sigma^2)$, then  $aZ_1 \sim$ \texttt{SubG}$(a^2\sigma^2)$. Suppose for $k-1$, it holds that:
$$ \sum_{i=1}^{k-1} Z_i \sim SubG\left( \left(\sum_{i=1}^{k-1} \sigma_i \right)^2 \right) \equiv SubG(\sigma^2) $$
Then, for $k$ we have:
\begin{gather*}
    E\left[ e^{((h\sum_{i=1}^{k-1} Z_i) + Z_k)} \right] \leq \\
    \leq \left[ E\left( \exp\left(h \frac{\sum_{i}Z_i(\sigma + \sigma_{k+1})}{\sigma} \right) \right) \right]^{\frac{\sigma}{\sigma + \sigma_k}} \left[ E\left(h \exp\left( \frac{Z_k(\sigma + \sigma_{k+1})}{\sigma_{k}} \right) \right) \right]^{\frac{\sigma_k}{\sigma + \sigma_k}} \leq \\
    \leq \left[ \exp\left( \frac{h^2 (\sigma + \sigma_{k})^2}{2\sigma^2}\sigma^2 \right) \right]^{\frac{\sigma}{\sigma + \sigma_k}} \left[ \exp\left(  \frac{h^2(\sigma + \sigma_{k})^2 }{2\sigma^2_k} \sigma^2_k \right) \right]^{\frac{\sigma_k}{\sigma + \sigma_k}} =\\
    = \left[ \exp\left(\frac{h^2\sigma(\sigma+ \sigma_{k})}{2} \right) \right] \left[ \exp \left( \frac{h^2\sigma_k(\sigma+ \sigma_{k})}{2} \right) \right] = \exp\left( \frac{(h(\sigma+\sigma_{k}))^2}{2} \right)
\end{gather*}
where the second row follows due to Holder's inequality for $p=\frac{\sum_{i}Z_i(\sigma + \sigma_{k+1})}{\sigma}$ and $q = \frac{\sum_{k}Z_k(\sigma + \sigma_{k+1})}{\sigma}$. The third row follows due to the induction step for $k-1$ and the definition of \texttt{subG}. The last expression gives the necessary result for the induction: 
$$ \sum_{i=1}^{k} Z_i \sim SubG\left( \left(\sum_{i=1}^{k} \sigma_i \right)^2 \right)  $$
Next, we multiply by $1/k$ to obtain: 
$$ \frac{\sum_{i=1}^{k} Z_i}{k} \sim SubG\left( \left( \frac{1}{k}\sum_{i=1}^{k} \sigma_i \right)^2 \right)  $$
\end{proof}

\newpage
\subsection{Proof of Proposition \ref{prop_CSC_est_err}} \label{sec_proof_prop_CSC_est_err}
\propCSCesterr*

\begin{proof}
The strategy for proof is related to the result in \citet{aba10} and Theorem 2 in \citet{ben21}. However, the final bound is different. We proceed in four steps where we prove the claims:
\begin{itemize}
    \item We begin by proving that \begin{claim}
    The estimation error for $\widehat{\tau}_s$ for post-treatment period $s > T_0$ is:
    \begin{gather}
        \widehat{\tau}_s - \tau_s = \\  
        \frac{1}{n_1} \left[ \boldsymbol{\lambda}_t (\boldsymbol{\lambda}'_{pre}\boldsymbol{\lambda}_{pre})^{-1} \boldsymbol{\lambda}'_{pre} \sum_{i=n_0+1}^N   \left(  \sum_{j=1}^{n_1} \hat{w}_{ij}  \boldsymbol{\epsilon}_{j,pre}  \right)\right] - \label{att_cl_eq_1} \\ 
        \frac{1}{n_1} \left[ \boldsymbol{\lambda}_t (\boldsymbol{\lambda}'_{pre}\boldsymbol{\lambda}_{pre})^{-1} \boldsymbol{\lambda}'_{pre} \sum_{i=n_0+1}^N  \boldsymbol{\epsilon}_{i,pre} \right] +  \label{att_cl_eq_2} \\ 
        \frac{1}{n_1} \sum_{i=n_0+1}^N \left[ \epsilon_{is} - \sum_{j=1}^{n_0} \hat{w}_{ij} \epsilon_{js} \right]
        \label{att_cl_eq_3}
    \end{gather}
    \label{er_prop_claim_1}
    \end{claim}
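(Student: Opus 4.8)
The plan is to follow, essentially verbatim, the argument that establishes Lemma~\ref{lem_aba} (\underline{Abadie's representation}): that lemma already produces exactly this decomposition for the single post-treatment period $T=T_0+1$, and the only new features of Claim~\ref{er_prop_claim_1} are (i) that it is stated for a generic post-treatment date $s>T_0$, and (ii) that the pre-treatment error term $\sum_{j}\hat{w}_{ij}\boldsymbol{\epsilon}_{j,pre}-\boldsymbol{\epsilon}_{i,pre}$ is split into its two summands, which is immediate from linearity of the outer sum over $i$. So the substantive content is merely to observe that the chain of equalities in the proof of Lemma~\ref{lem_aba} never invokes Assumption~\ref{ass_th}.vi and therefore goes through for any $s$, simultaneously for all $s\in\{T_0+1,\dots,T\}$.

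Concretely, I would first recall the starting identity: by the definitions of $\hat{\tau}_s$ and $\tau_s$ the observed treated outcomes $y_{is}(1)$ cancel, so
\[
 \hat{\tau}_s-\tau_s=\frac{1}{n_1}\sum_{i=n_0+1}^{N}\bigl(y_{is}(0)-\hat{y}_{is}(0)\bigr)=\frac{1}{n_1}\sum_{i=n_0+1}^{N}\Bigl(y_{is}(0)-\sum_{j=1}^{n_0}\hat{w}_{ij}\,y_{js}(0)\Bigr),
\]
where the last step is the construction of the CSC counterfactual. Substituting the interactive fixed effects DGP $y_{is}(0)=\boldsymbol{\theta}_s\boldsymbol{x}_i'+\boldsymbol{\lambda}_s\boldsymbol{\mu}_i+\epsilon_{is}$ and then applying the covariate part of Assumption~\ref{ex_fit_as} (\underline{Exact Fit}), which annihilates the term $\boldsymbol{\theta}_s(\boldsymbol{x}_i-\sum_j\hat{w}_{ij}\boldsymbol{x}_j)$, yields the analogue of equation $(*)$ in the proof of Lemma~\ref{lem_aba}:
\[
 \hat{\tau}_s-\tau_s=\frac{1}{n_1}\sum_{i=n_0+1}^{N}\Bigl[\boldsymbol{\lambda}_s\bigl(\boldsymbol{\mu}_i-\textstyle\sum_{j=1}^{n_0}\hat{w}_{ij}\boldsymbol{\mu}_j\bigr)+\bigl(\epsilon_{is}-\sum_{j=1}^{n_0}\hat{w}_{ij}\epsilon_{js}\bigr)\Bigr].
\]

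Next I would eliminate the unobserved $\boldsymbol{\mu}$'s exactly as in Lemma~\ref{lem_aba}: stack the \underline{Exact Fit} equations for the outcome over the pre-treatment periods $t=1,\dots,T_0$ to get $\boldsymbol{\lambda}_{pre}\sum_{i}(\boldsymbol{\mu}_i-\sum_j\hat{w}_{ij}\boldsymbol{\mu}_j)=\sum_{i}(\sum_j\hat{w}_{ij}\boldsymbol{\epsilon}_{j,pre}-\boldsymbol{\epsilon}_{i,pre})$, and pre-multiply by $(\boldsymbol{\lambda}'_{pre}\boldsymbol{\lambda}_{pre})^{-1}\boldsymbol{\lambda}'_{pre}$, which is legitimate precisely because $\boldsymbol{\lambda}'_{pre}\boldsymbol{\lambda}_{pre}$ is assumed invertible (requiring $T_0>F$), to solve for $\sum_i(\boldsymbol{\mu}_i-\sum_j\hat{w}_{ij}\boldsymbol{\mu}_j)$. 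Substituting this back into the preceding display with $t$ replaced by $s$, and then writing $\sum_i(\sum_j\hat{w}_{ij}\boldsymbol{\epsilon}_{j,pre}-\boldsymbol{\epsilon}_{i,pre})$ as $\sum_i\sum_j\hat{w}_{ij}\boldsymbol{\epsilon}_{j,pre}$ minus $\sum_i\boldsymbol{\epsilon}_{i,pre}$, produces exactly the three lines \eqref{att_cl_eq_1}–\eqref{att_cl_eq_3}.

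I do not expect a genuine obstacle here — this claim is a restatement of Lemma~\ref{lem_aba} and the work is the same purely algebraic substitution, which is why it serves as the warm-up step of the proof. The only things to watch are bookkeeping: keeping the matrix dimensions straight through the stacking ($\boldsymbol{\lambda}_{pre}$ is $T_0\times F$, each $\boldsymbol{\mu}_i$ is $F\times1$, $\boldsymbol{\epsilon}_{i,pre}$ is $T_0\times1$), and being explicit that nothing in the derivation used the single-post-period assumption, so the representation is valid for every $s>T_0$ at once, which is what the later steps of the proof of Proposition~\ref{prop_CSC_est_err} will need.
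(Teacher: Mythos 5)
Your proposal is correct and follows essentially the same route as the paper: the paper disposes of this claim in one line by citing Lemma~\ref{lem_aba} and splitting the term $\sum_{j}\hat{w}_{ij}\boldsymbol{\epsilon}_{j,pre}-\boldsymbol{\epsilon}_{i,pre}$ by linearity, and your re-derivation of that lemma (cancel $y_{is}(1)$, substitute the DGP, kill the covariate term via \underline{Exact Fit}, stack the pre-treatment outcome equations and premultiply by $(\boldsymbol{\lambda}'_{pre}\boldsymbol{\lambda}_{pre})^{-1}\boldsymbol{\lambda}'_{pre}$ to eliminate the $\boldsymbol{\mu}_i$) is exactly the paper's argument. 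Your added observation that nothing in the chain uses Assumption~\ref{ass_th}.vi, so the representation holds for every $s>T_0$, is a correct and worthwhile clarification.
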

   
    \item We will bound each  of (\ref{att_cl_eq_1}), (\ref{att_cl_eq_2})  and (\ref{att_cl_eq_3}) in the last expression, similarly to \citet{ben21}. Firstly, for \ref{att_cl_eq_1} we find an upper bound: 
    \begin{claim}
    With probability $1-\exp(-h^2/4)$, it holds that $$
        \frac{1}{n_1} \sum_{i=n_0+1}^{n_1} \boldsymbol{\lambda}_s (\boldsymbol{\lambda'_{pre} \lambda_{pre}})^{-1} \boldsymbol \lambda'_{pre} \left[ \sum_{j=1}^{n_0} \hat{w}_{ij}   \boldsymbol{e}_{j,pre} \right] \leq \frac{F\tilde{\lambda}^2}{\phi_{min}\left(\frac{1}{T_0}\boldsymbol{\lambda'_{pre} \lambda_{pre}}\right) } \sqrt{\frac{2\sigma^2n_0}{T_0} + \frac{h\sigma}{T_0^{1.5}}} $$
    \label{er_prop_claim_4}
    \end{claim}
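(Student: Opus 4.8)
The plan is to take the term~(\ref{att_cl_eq_1}) from Abadie's representation (Claim~\ref{er_prop_claim_1}) as the object to be bounded and to split it into a deterministic matrix factor times a stochastic vector. Writing $A := \frac{1}{n_1}\,\boldsymbol{\lambda}_s(\boldsymbol{\lambda}'_{pre}\boldsymbol{\lambda}_{pre})^{-1}\boldsymbol{\lambda}'_{pre}\sum_{i=n_0+1}^{N}\sum_{j=1}^{n_0}\hat w_{ij}\boldsymbol{\epsilon}_{j,pre}$ for this (scalar) quantity, set $\boldsymbol{M} := \boldsymbol{\lambda}_s(\boldsymbol{\lambda}'_{pre}\boldsymbol{\lambda}_{pre})^{-1}\boldsymbol{\lambda}'_{pre}$, a $1\times T_0$ deterministic row vector since the $\boldsymbol{\lambda}$'s are fixed (Assumption~\ref{ass_th}), and $\bar{\boldsymbol{\epsilon}} := \frac{1}{n_1}\sum_{i=n_0+1}^{N}\sum_{j=1}^{n_0}\hat w_{ij}\boldsymbol{\epsilon}_{j,pre}$, a $T_0\times 1$ random vector. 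Because the claim is a one-sided inequality with non-negative right-hand side it is enough to bound $|A|$, and since we work in a Euclidean space I would use $|A| = |\boldsymbol{M}\bar{\boldsymbol{\epsilon}}| \le \|\boldsymbol{M}\|_{op}\,\|\bar{\boldsymbol{\epsilon}}\|$ and treat the two factors in turn.

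First, for the matrix factor I would apply submultiplicativity of the operator norm, $\|\boldsymbol{M}\|_{op} \le \|\boldsymbol{\lambda}_s\|\cdot\|(\boldsymbol{\lambda}'_{pre}\boldsymbol{\lambda}_{pre})^{-1}\|_{op}\cdot\|\boldsymbol{\lambda}'_{pre}\|_{op}$, and bound each piece by elementary estimates: $\|\boldsymbol{\lambda}_s\| \le \sqrt{F}\,\tilde{\lambda}$ because each of the $F$ coordinates is at most $\tilde{\lambda}$ in modulus; $\|(\boldsymbol{\lambda}'_{pre}\boldsymbol{\lambda}_{pre})^{-1}\|_{op} = 1/\phi_{min}(\boldsymbol{\lambda}'_{pre}\boldsymbol{\lambda}_{pre}) = 1/\big(T_0\,\phi_{min}(\tfrac{1}{T_0}\boldsymbol{\lambda}'_{pre}\boldsymbol{\lambda}_{pre})\big)$, using invertibility and symmetry of that matrix; and $\|\boldsymbol{\lambda}'_{pre}\|_{op} \le \|\boldsymbol{\lambda}'_{pre}\|_F \le \sqrt{T_0 F}\,\tilde{\lambda}$. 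Multiplying these collapses the matrix factor to something of order $F\tilde{\lambda}^2/\big(\sqrt{T_0}\,\phi_{min}(\tfrac{1}{T_0}\boldsymbol{\lambda}'_{pre}\boldsymbol{\lambda}_{pre})\big)$, which is exactly the prefactor in the claim up to the residual $1/\sqrt{T_0}$ that I will absorb into the error factor.

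The heart of the proof --- and the main obstacle --- is to control $\|\bar{\boldsymbol{\epsilon}}\|$ on a high-probability event, the difficulty being that $\hat w_{ij}$ is estimated from the same data as the $\boldsymbol{\epsilon}_{j,pre}$, so I may not condition on $\hat w$ and apply a concentration bound naively. The device here is to use only the feasibility constraints that $\hat w$ necessarily satisfies, $\hat w_{ij}\ge 0$ and $\sum_j\hat w_{ij}=1$, and thereby dominate $\|\bar{\boldsymbol{\epsilon}}\|$ by a quantity free of $\hat w$ altogether: two applications of Jensen's inequality (over $i$, then over the convex coefficients $\hat w_{i\cdot}$, using $\sum_j\hat w_{ij}^2 \le \sum_j\hat w_{ij} = 1$) give $\|\bar{\boldsymbol{\epsilon}}\|^2 \le \frac{1}{n_1}\sum_i\sum_j\hat w_{ij}\|\boldsymbol{\epsilon}_{j,pre}\|^2 \le \max_{j\le n_0}\|\boldsymbol{\epsilon}_{j,pre}\|^2$, or a coordinatewise Cauchy--Schwarz gives a bound in terms of $\sum_{j=1}^{n_0}\sum_{t=1}^{T_0}\epsilon_{jt}^2$; either way the remaining object is a sum or maximum of squares of i.i.d.\ $\texttt{subG}(\sigma^2)$ errors (Assumption~\ref{subg_err}), hence sub-exponential, and a Bernstein-type tail inequality for it --- invoking Lemma~\ref{lem_subg} at the point where an average of sub-Gaussians must be re-assembled, and calibrating the variance proxy so the exponent reads $h^2/4$ --- yields, with probability at least $1-\exp(-h^2/4)$, a bound of precisely the form that, once multiplied by the matrix factor and combined with the leftover $\sqrt{T_0}$, reproduces
\[\frac{F\tilde{\lambda}^2}{\phi_{min}\!\left(\frac{1}{T_0}\boldsymbol{\lambda}'_{pre}\boldsymbol{\lambda}_{pre}\right)}\sqrt{\frac{2\sigma^2 n_0}{T_0}+\frac{h\sigma}{T_0^{1.5}}}.\]
The genuinely delicate parts are all in this last step: cleanly severing the dependence between the estimated weights and the errors (which is why one works only through the simplex constraints and never through the explicit solution $\hat w$), and then extracting from the sub-exponential concentration exactly the stated failure probability together with the precise powers of $T_0$ and $\sigma$; by comparison the matrix-norm bookkeeping is routine, the only thing to watch being to carry the normalized matrix $\tfrac{1}{T_0}\boldsymbol{\lambda}'_{pre}\boldsymbol{\lambda}_{pre}$ through so that $\phi_{min}$ appears as in the statement.
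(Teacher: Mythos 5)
Your overall architecture (a deterministic $\boldsymbol{\lambda}$-factor times a stochastic term, with the dependence between $\hat{w}_{ij}$ and $\boldsymbol{\epsilon}_{j,pre}$ severed purely through the simplex constraints) matches the paper's in spirit, and the operator-norm bookkeeping for the prefactor is fine. But the step you flag as ``the heart of the proof'' contains a genuine gap: neither of your two dominations of $\|\bar{\boldsymbol{\epsilon}}\|$ can reproduce the claimed rate. The $T_0$-scaling in the bound, $\sqrt{2\sigma^2 n_0/T_0 + h\sigma/T_0^{1.5}}$, comes from the fact that the errors are summed over $t$ \emph{before} any norm or square is taken, so that the $T_0$ independent mean-zero terms partially cancel and $\sum_{t}\epsilon_{jt}$ is $\texttt{subG}(T_0\sigma^2)$ rather than of size $T_0\sigma$. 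Your Cauchy--Schwarz route bounds $\|\bar{\boldsymbol{\epsilon}}\|^2$ by $\sum_{j}\sum_{t}\epsilon_{jt}^2$, a sum of squares with no cancellation, which concentrates around $n_0T_0\sigma^2$; combined with your prefactor $F\tilde{\lambda}^2/(\sqrt{T_0}\,\phi_{min})$ this yields a bound of order $F\tilde{\lambda}^2\sigma\sqrt{n_0}/\phi_{min}$, i.e.\ larger than the claim by a factor of $\sqrt{T_0}$, so it does not prove the statement. Your Jensen route gives $\max_j\|\boldsymbol{\epsilon}_{j,pre}\|^2\approx T_0\sigma^2$ and hence a bound with no $\sqrt{n_0/T_0}$ factor at all --- a genuinely different expression that is smaller than the claim only when $n_0>T_0$ and in any case does not ``reproduce'' the stated right-hand side, contrary to the assertion in your final paragraph. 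The promise that calibrating a Bernstein bound will deliver ``precisely the stated failure probability together with the precise powers of $T_0$ and $\sigma$'' is exactly the step that cannot be completed as described.

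The paper's proof avoids this by ordering the inequalities differently: it first pulls out $1/\phi_{min}(\tfrac{1}{T_0}\boldsymbol{\lambda}'_{pre}\boldsymbol{\lambda}_{pre})$, then applies Cauchy--Schwarz over the \emph{donor} index $j$ only, using $\|\hat{\boldsymbol{w}}_i\|_2\le 1$, so that the random object left to control is the $n_0$-vector whose $j$-th entry is the fully time-aggregated quantity $\boldsymbol{\lambda}_s\boldsymbol{\lambda}'_{pre}\boldsymbol{\epsilon}_{j,pre}$. After bounding the common factors by $F\tilde{\lambda}^2$, the squared Euclidean norm of this vector, $\bigl\|\sum_{t=1}^{T_0}\boldsymbol{\epsilon}_{n_0,t}\bigr\|_2^2$, is a quadratic form in sub-Gaussian variables to which the corollary of \citet{hsu11}'s Theorem 1 (a Hanson--Wright-type inequality with $\boldsymbol{A}=\boldsymbol{I}_{n_0}$ and variance proxy $T_0\sigma^2$) applies; it is this single concentration step that simultaneously produces the $2T_0\sigma^2 n_0 + h\sqrt{T_0}\sigma$ expression and the $1-\exp(-h^2/4)$ probability. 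To repair your argument you would need to keep the inner product over $t$ intact inside the Euclidean norm over $j$ --- i.e.\ perform the Cauchy--Schwarz over donors before splitting off any time-direction norm --- rather than bounding $\|\boldsymbol{M}\|_{op}$ and $\|\bar{\boldsymbol{\epsilon}}\|$ separately.
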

    \item Secondly, for (\ref{att_cl_eq_2}) we can found a lower bound, as it is subtracted:
    \begin{claim}
    With probability $1-\exp(-h^2/2)$ we have:
    $$ \frac{1}{n_1} \left[ \boldsymbol{\lambda}_t (\boldsymbol{\lambda}'_{pre}\boldsymbol{\lambda}_{pre})^{-1} \boldsymbol{\lambda}'_{pre} \sum_{i=n_0+1}^N  \boldsymbol{\epsilon}_{i,pre} \right] > \frac{-hF^2 \underaccent{\tilde}{\lambda}^2 \sigma^2}{\phi_{max} \left(\frac{1}{T_0}\boldsymbol{\lambda'_{pre} \lambda_{pre}}\right)} $$
    \label{er_prop_claim_2}
    \end{claim}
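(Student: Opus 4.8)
The plan is to treat the quantity in \eqref{att_cl_eq_2} as a single scalar that is a fixed linear functional of the averaged pre-treatment noise, to show that this scalar is sub-Gaussian, and then to read off its lower tail. Writing $\boldsymbol{P} := \boldsymbol{\lambda}_T (\boldsymbol{\lambda}'_{pre}\boldsymbol{\lambda}_{pre})^{-1}\boldsymbol{\lambda}'_{pre}$ for the deterministic $(1\times T_0)$ row vector (it depends only on the fixed common factors $\boldsymbol{\lambda}$) and $\bar{\boldsymbol{\epsilon}}_{pre} := \frac{1}{n_1}\sum_{i=n_0+1}^N \boldsymbol{\epsilon}_{i,pre}$ for the $(T_0 \times 1)$ average of the treated units' pre-treatment error vectors, the term in \eqref{att_cl_eq_2} is exactly the scalar $\boldsymbol{P}\,\bar{\boldsymbol{\epsilon}}_{pre}$. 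Since $\boldsymbol{P}$ is non-random, all randomness sits in $\bar{\boldsymbol{\epsilon}}_{pre}$ and the problem reduces to a one-dimensional sub-Gaussian lower-tail bound.

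First I would establish sub-Gaussianity of the averaged errors. Under Assumption \ref{subg_err} each $\epsilon_{it}$ is $\mathrm{subG}(\sigma^2)$, so for each fixed $t$, Lemma \ref{lem_subg} applied to $\{\epsilon_{it}\}_{i=n_0+1}^N$ (all with the same parameter $\sigma$) gives that $\bar\epsilon_{t,pre}$ is $\mathrm{subG}(\sigma^2)$, because $\frac{1}{n_1^2}(n_1\sigma)^2=\sigma^2$. Then, writing $\boldsymbol{P}\bar{\boldsymbol{\epsilon}}_{pre} = \sum_{t=1}^{T_0} P_t\,\bar\epsilon_{t,pre}$, each summand is $\mathrm{subG}(P_t^2\sigma^2)$, and the summation step inside the proof of Lemma \ref{lem_subg} yields that $\boldsymbol{P}\bar{\boldsymbol{\epsilon}}_{pre}$ is $\mathrm{subG}(\tau^2)$ with variance proxy $\tau^2 = \sigma^2\|\boldsymbol{P}\|_1^2$. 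Note that this route, via the Hölder-based Lemma \ref{lem_subg}, deliberately does not exploit the independence of the $\epsilon_{it}$ across $i$ and $t$ granted by Assumption \ref{ass_th}.ii; using independence would shave factors of $n_1$ and $T_0$ off $\tau^2$ but would not reproduce the form of the stated bound.

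Next I would invoke the lower tail of a mean-zero sub-Gaussian, $\mathbb{P}(\boldsymbol{P}\bar{\boldsymbol{\epsilon}}_{pre} < -t) \le \exp(-t^2/(2\tau^2))$, and set $t=h\tau$. This gives, with probability at least $1-\exp(-h^2/2)$, the inequality $\boldsymbol{P}\bar{\boldsymbol{\epsilon}}_{pre} > -h\tau = -h\sigma\|\boldsymbol{P}\|_1$, which already delivers the confidence level stated in Claim \ref{er_prop_claim_2}. It then remains to bound $\|\boldsymbol{P}\|_1$ by the spectral quantities in the statement: pass from $\ell_1$ to $\ell_2$ via $\|\boldsymbol{P}\|_1 \le \sqrt{T_0}\,\|\boldsymbol{P}\|_2$, compute $\|\boldsymbol{P}\|_2^2 = \boldsymbol{\lambda}_T(\boldsymbol{\lambda}'_{pre}\boldsymbol{\lambda}_{pre})^{-1}\boldsymbol{\lambda}_T'$, bound this quadratic form by $\|\boldsymbol{\lambda}_T\|_2^2 / \lambda_{\min}(\boldsymbol{\lambda}'_{pre}\boldsymbol{\lambda}_{pre})$, and finally substitute $\lambda_{\min}(\boldsymbol{\lambda}'_{pre}\boldsymbol{\lambda}_{pre}) = T_0\,\phi_{min}$ together with $\|\boldsymbol{\lambda}_T\|_2^2 \le F\lambda_{max}^2$. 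Working on a Euclidean space (assumption (v) of Proposition \ref{prop_CSC_est_err}) is what makes these operator-norm and eigenvalue manipulations available.

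The main obstacle is matching the precise constant rather than the concentration argument itself. My natural chain of inequalities produces a bound governed by $\lambda_{max}$ and the \emph{smallest} eigenvalue $\phi_{min}$, whereas the statement is written with $\underaccent{\tilde}{\lambda}^2$ (the smallest factor) and $\phi_{max}$ in the denominator; reconciling these requires pinning down exactly which crude norm relaxations the author intends (for instance using the looser $\|\boldsymbol{\lambda}_T\|_1 \le F\lambda_{max}$ to manufacture the $F^2$, and watching the direction of the eigenvalue inequality). The stated constant is moreover not fully consistent with the second summand $h\sigma F\lambda_{min}^2/\phi_{max}^2$ of the final bound \eqref{CSC_bou} — the powers of $\sigma$, $F$ and $\phi_{max}$ differ — so part of the task is simply fixing the bookkeeping to a single self-consistent constant. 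Conceptually none of this affects the argument: the term in \eqref{att_cl_eq_2} is a deterministic linear image of sub-Gaussian noise, hence concentrates at the Gaussian rate, and its lower tail is controlled at level $1-\exp(-h^2/2)$ by $h$ times a spectral multiple of $\sigma$, which is all the downstream union bound for \eqref{CSC_bou} requires.
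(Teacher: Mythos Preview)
Your proposal is conceptually sound and reaches the stated probability level $1-\exp(-h^2/2)$ by the standard route: the term is a deterministic linear functional of mean-zero sub-Gaussian noise, so it is itself sub-Gaussian with a computable variance proxy, and a Chernoff lower-tail bound at $t=h\tau$ finishes. The paper takes a slightly different route. Rather than bounding the variance proxy of the linear functional $\boldsymbol{P}\bar{\boldsymbol{\epsilon}}_{pre}$ directly, it first replaces the deterministic row vector $\boldsymbol{P}$ entrywise by the scalar $F\underaccent{\tilde}{\lambda}^2/(T_0\phi_{max})$, obtaining the simpler variable $Q=\frac{F\underaccent{\tilde}{\lambda}^2}{T_0 n_1\phi_{max}}\sum_{i,t}\epsilon_{it}$, asserts that the original expression is $\ge Q$ almost surely, and only then runs the Chernoff bound on $Q$. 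Your route is actually cleaner, because the paper's almost-sure inequality is not justified: since the $\epsilon_{it}$ take both signs, replacing $\boldsymbol{P}$ by a constant vector does not yield a one-sided deterministic bound. What saves the argument in either version is that only a high-probability one-sided bound is needed, and that follows from sub-Gaussianity however one chooses to parameterise the variance proxy.

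Your diagnosis of the constants is also correct. The paper's own proof of this claim ends with $-h\sigma F\underaccent{\tilde}{\lambda}^2/\phi_{max}^2$ (linear in $F$ and $\sigma$, with $\phi_{max}^2$ in the denominator), which is exactly the second summand of \eqref{CSC_bou}, not the $-hF^2\underaccent{\tilde}{\lambda}^2\sigma^2/\phi_{max}$ written in the statement of Claim~\ref{er_prop_claim_2}; the claim as stated is a typo relative to both the proof and the proposition. Your instinct that the natural spectral control of $\|\boldsymbol{P}\|$ involves $\lambda_{max}$ and $\phi_{min}$ rather than $\underaccent{\tilde}{\lambda}$ and $\phi_{max}$ is right as well: the paper's use of the smallest factor and largest eigenvalue comes from its attempt at a deterministic lower bound on the random quantity, not from an upper bound on the variance proxy. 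For the downstream union bound in Proposition~\ref{prop_CSC_est_err} only the probability level and the existence of some spectral multiple of $h\sigma$ matter, and your argument delivers both.
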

    \item Thirdly, (\ref{att_cl_eq_3}) an upper bound can be found \begin{claim} With prob. $1-\exp(-h^2/2)$ we have
    $ \frac{1}{n_1} \sum_{i=n_0+1}^N \left[ \epsilon_{is} - \sum_{j=1}^{n_0} \hat{w}^j \epsilon_{js} \right] \leq h\sigma \sqrt{2}$ 
    \label{er_prop_claim_3}
    \end{claim}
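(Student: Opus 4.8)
The plan is to exploit \underline{Abadie's representation} (Lemma \ref{lem_aba}) to rewrite the estimation error as a sum of three terms involving only the idiosyncratic errors, bound each term with a sub-Gaussian tail inequality, and conclude by a union bound over the three resulting high-probability events. This is precisely the four-claim structure begun above, so the proposal amounts to filling in those four claims. Claim \ref{er_prop_claim_1} is essentially a restatement of Lemma \ref{lem_aba}: the bracket $\boldsymbol{\lambda}_T(\boldsymbol{\lambda}'_{pre}\boldsymbol{\lambda}_{pre})^{-1}\boldsymbol{\lambda}'_{pre}\sum_i(\sum_j\hat w_{ij}\boldsymbol{\epsilon}_{j,pre}-\boldsymbol{\epsilon}_{i,pre})$ splits linearly into the $\hat w_{ij}\boldsymbol{\epsilon}_{j,pre}$ piece (\ref{att_cl_eq_1}) and the $-\boldsymbol{\epsilon}_{i,pre}$ piece (\ref{att_cl_eq_2}), and the post-treatment bracket is (\ref{att_cl_eq_3}); no probabilistic argument is needed here.

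For each of the three remaining claims I would first strip off the deterministic matrix factors on the Euclidean space (hypothesis (v)) using operator-norm inequalities — $\|\boldsymbol{\lambda}_T\|_2\le\sqrt{F}\,\lambda_{max}$, together with bounds on $\|\boldsymbol{\lambda}'_{pre}\|_{op}$ and $\|(\boldsymbol{\lambda}'_{pre}\boldsymbol{\lambda}_{pre})^{-1}\|_{op}$ in terms of the extreme eigenvalues $\phi_{min},\phi_{max}$ of $\frac1{T_0}\boldsymbol{\lambda}'\boldsymbol{\lambda}$ (using the invertibility hypothesis (iv)) — leaving a sub-Gaussian average to control. For (\ref{att_cl_eq_1}) (Claim \ref{er_prop_claim_4}) the subtle point is that $\hat w_{ij}$ is estimated from the same data and so is not independent of $\boldsymbol{\epsilon}$; I would get around this by a pathwise convexity bound, $\|\sum_j\hat w_{ij}\boldsymbol{\epsilon}_{j,pre}\|_2^2\le\sum_{j=1}^{n_0}\|\boldsymbol{\epsilon}_{j,pre}\|_2^2=\sum_{j,t}\epsilon_{jt}^2$, valid because the weights are non-negative and sum to one (Assumption \ref{ex_fit_as}); then a Bernstein-type tail bound for the sum of $n_0T_0$ squared \texttt{subG}$(\sigma^2)$ variables, combined with the $1/T_0$ factors from the normalisation of $\boldsymbol{\lambda}_{pre}$, produces the $\sqrt{2\sigma^2 n_0/T_0 + h\sigma/T_0^{1.5}}$ term with probability $1-\exp(-h^2/4)$. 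For (\ref{att_cl_eq_2}) (Claim \ref{er_prop_claim_2}), the average $\frac1{n_1}\sum_i\boldsymbol{\epsilon}_{i,pre}$ is \texttt{subG}$(\sigma^2/n_1)$ coordinate-wise by Lemma \ref{lem_subg}, so the standard one-sided sub-Gaussian tail plus the matrix-norm bounds gives the stated lower bound with probability $1-\exp(-h^2/2)$. For (\ref{att_cl_eq_3}) (Claim \ref{er_prop_claim_3}), both $\frac1{n_1}\sum_i\epsilon_{iT}$ and $\frac1{n_1}\sum_i\sum_j\hat w_{ij}\epsilon_{jT}$ are sub-Gaussian averages (the second again handled pathwise via the convex-combination argument), hence their difference is sub-Gaussian with parameter of order $2\sigma^2$ and is at most $h\sigma\sqrt2$ with probability $1-\exp(-h^2/2)$.

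To finish, I would intersect the three events; since $\exp(-h^2/2)\le\exp(-h^2/4)$, the union bound gives failure probability at most $\exp(-h^2/4)+2\exp(-h^2/2)\le 3\exp(-0.25h^2)=3/\exp(0.25h^2)$, and on the complementary event the triangle inequality applied to the decomposition of Claim \ref{er_prop_claim_1} bounds $|\hat\tau^{CSC}-\tau|$ by the sum of the three individual bounds, i.e.\ by (\ref{CSC_bou}) (after a crude over-bounding of constants so that the pieces take the displayed form).

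The main obstacle is Claim \ref{er_prop_claim_4}. Decoupling the data-dependent weights from the errors forces the crude pathwise bound that is responsible for the (pessimistic) $n_0$ appearing inside the square root rather than a $\sqrt{n_0}$ or a constant; one then has to control the sub-exponential concentration of $\sum_{j,t}\epsilon_{jt}^2$ and carefully track all the powers of $T_0$ that come through the $\boldsymbol{\lambda}_{pre}$ normalisations without losing the right constants. The operator-norm bookkeeping that converts matrix norms into the scalars $\lambda_{min},\lambda_{max},\phi_{min},\phi_{max}$ is the other place where care is needed, though it is essentially mechanical.
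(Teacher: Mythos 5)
Your argument for Claim \ref{er_prop_claim_3} is essentially the paper's: since the weights are built from pre-treatment data they are independent of the time-$T$ errors, so conditionally on them $\epsilon_{iT}-\sum_{j}\hat w_{ij}\epsilon_{jT}$ is \texttt{subG}$\bigl(\sigma^2(1+\sum_j\hat w_{ij}^2)\bigr)\le$ \texttt{subG}$(2\sigma^2)$ by the constraints $\hat w_{ij}\ge 0$, $\sum_j\hat w_{ij}=1$, and the average over $i$ --- which gains nothing from $n_1$ because the summands share the donor errors, so the paper invokes Lemma \ref{lem_subg} for dependent sub-Gaussians --- remains \texttt{subG}$(2\sigma^2)$, giving $h\sigma\sqrt2$ with probability $1-\exp(-h^2/2)$ by a one-sided Chernoff bound. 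The only point to watch is that the $\sqrt2$ comes from adding the variance proxies $\sigma^2$ and $\sigma^2\sum_j\hat w_{ij}^2$ at the per-$i$ level using the independence of $\epsilon_{iT}$ from the donor errors; splitting first into the two averages and adding their sub-Gaussian norms, as your sketch literally suggests, would only yield $2h\sigma$.
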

    \item We finish the proof by combining the last three claims via Frechet’s inequality and obtaining the statement in the initial proposition.
\end{itemize}
Claim \ref{er_prop_claim_1} follows after a slight algebraic manipulation of Abadie's representation in Lemma \ref{lem_aba}.

Now, we will show Claim $\ref{er_prop_claim_4}$. Note that $\hat{w}_{ij}$ are not independent from $\epsilon_{jt}$ for some pretreatment $t < T_0$. Let us now proceed with the proof:
\begin{gather*}
     \frac{1}{n_1} \sum_{i=n_0+1}^{n_1} \boldsymbol{\lambda}_s (\boldsymbol{\lambda'_{pre} \lambda_{pre}})^{-1} \boldsymbol \lambda'_{pre} \left[ \sum_{j=1}^{n_0} \hat{w}_{ij}   \boldsymbol{e}_{j,pre} \right] \stackrel{min \ eigenvalue}{\leq} \\ 
     \frac{1}{n_1T_0} \sum_{i=n_0+1}^{N} \left( \frac{\sum_{j=1}^{n_0} \hat{w}_{ij}(\boldsymbol{\lambda}_s \boldsymbol{\lambda}'_{pre} \boldsymbol{\epsilon}_{j,pre})\ }{\phi_{min}\left(\frac{1}{T_0}\boldsymbol{\lambda'_{pre} \lambda_{pre}}\right)}  \right) \stackrel{Cauchy-Schwarz \ Ineq.}{\leq} \\
     \frac{1}{n_1T_0\phi_{min}\left(\frac{1}{T_0}\boldsymbol{\lambda'_{pre} \lambda_{pre}}\right)} \sum_{i=n_0+1}^{N} ||\hat{w}_i||_2 ||\boldsymbol{\lambda}_s \boldsymbol{\lambda}'_{pre} \boldsymbol{\epsilon}_{n_0, pre}  ||_2 \stackrel{def. \ \hat{\boldsymbol w}}{\leq} \\
     \frac{1}{n_1T_0\phi_{min}\left(\frac{1}{T_0}\boldsymbol{\lambda'_{pre} \lambda_{pre}}\right)} \sum_{i=n_0+1}^N \left|\left| \sum_{f=1}^F \sum_{t=1}^{T_0} \lambda_{sf} \lambda_{tf} \boldsymbol{\epsilon}_{n_0,t} \right|\right|_2 \stackrel{\max_{t,f} |\lambda_{tf}|}{\leq} \\
      \frac{n_1}{n_1T_0\phi_{min}\left(\frac{1}{T_0}\boldsymbol{\lambda'_{pre} \lambda_{pre}}\right)}  \left|\left| F\tilde{\lambda}^2 \sum_{t=1}^{T_0}  \boldsymbol{\epsilon}_{n_0,t} \right|\right|_2 = \\
      \frac{F\tilde{\lambda}^2}{T_0\phi_{min}\left(\frac{1}{T_0}\boldsymbol{\lambda'_{pre} \lambda_{pre}}\right)} \left|\left|  \sum_{t=1}^{T_0}  \boldsymbol{\epsilon}_{n_0,t} \right|\right|_2 \quad \quad \quad (*)
\end{gather*}
where the second inequality follows as we define $\phi_{min}\left(\frac{1}{T_0}\lambda'_{pre} \lambda_{pre} \right)$ as the smallest eigenvalues of the matrix $\frac{1}{T_0}\lambda'_{pre} \lambda_{pre}$. Now, note that each row $\sum_{t=1}^{T_0}  \boldsymbol{\epsilon}_{n_0,t}$ is a sum of $T_0$ iid variables, so that  $\sum_{t=1}^{T_0}  \epsilon_{j,t} \sim SubG(T^2_0\sigma^2)$. However, then $\left|\left|  \sum_{t=1}^{T_0}  \boldsymbol{\epsilon}_{n_0,t} \right|\right|_2$  is the Eucledian norm of $n_0$ $SubG$ variables. Thus, we can use a corollary of Theorem 1 in \citet{hsu11} which is a version of the more general Hanson-Wright inequality \citep{han71}: 
\begin{cor}[to \citet{hsu11}'s Theorem 1]
    Suppose $\boldsymbol{A}$ is a $n_0 \times n_0$ matrix. If $z \in \R^{n_0}$ is \texttt{subG}$_{n_0}(\xi^2)$ vector, then:
    $$ P(\boldsymbol{z'Az} < 2\xi^2 trace(\boldsymbol{A})+h) \geq 1 - \exp\left( - \frac{h^2}{4\xi^2 ||\boldsymbol{A}||_{op}}\right) $$
    where $||\boldsymbol{A}||_{op}$ is the operator norm of matrix $\boldsymbol{A}$.
\end{cor}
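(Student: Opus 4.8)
The plan is to obtain this as a direct specialisation of \citet{hsu11}'s Theorem 1, rather than to reprove a Hanson--Wright-type inequality from scratch. I would first recall their theorem in the form most convenient here: if $\boldsymbol{z}\in\R^{n_0}$ is \texttt{subG}$_{n_0}(\xi^2)$ and $\boldsymbol{\Sigma}$ is positive semi-definite, then for every $t>0$,
$$ P\!\left( \boldsymbol{z}'\boldsymbol{\Sigma}\boldsymbol{z} > \xi^2\Big(\operatorname{tr}(\boldsymbol{\Sigma}) + 2\sqrt{\operatorname{tr}(\boldsymbol{\Sigma}^2)\,t} + 2\|\boldsymbol{\Sigma}\|_{op}\,t\Big)\right) \le e^{-t}. $$
The first task is to bring $\boldsymbol{z}'\boldsymbol{A}\boldsymbol{z}$ into the quadratic-form shape to which this applies. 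Since $\boldsymbol{z}'\boldsymbol{A}\boldsymbol{z}=\boldsymbol{z}'\big(\tfrac{1}{2}(\boldsymbol{A}+\boldsymbol{A}')\big)\boldsymbol{z}$, I would replace $\boldsymbol{A}$ by its symmetric part and, in the relevant application where the matrix is positive semi-definite, write $\boldsymbol{A}=\boldsymbol{A}^{1/2}\boldsymbol{A}^{1/2}$ so that $\boldsymbol{z}'\boldsymbol{A}\boldsymbol{z}=\|\boldsymbol{A}^{1/2}\boldsymbol{z}\|_2^2$. Setting $\boldsymbol{\Sigma}=\boldsymbol{A}$ then makes the theorem directly applicable, with $\operatorname{tr}(\boldsymbol{\Sigma})=\operatorname{tr}(\boldsymbol{A})$ and $\|\boldsymbol{\Sigma}\|_{op}=\|\boldsymbol{A}\|_{op}$.

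The second task is to collapse the three-term bound into the compact form $2\xi^2\operatorname{tr}(\boldsymbol{A})+h$. Here I would use two elementary facts for positive semi-definite $\boldsymbol{A}$: the trace inequality $\operatorname{tr}(\boldsymbol{A}^2)=\sum_f \lambda_f^2 \le \|\boldsymbol{A}\|_{op}\operatorname{tr}(\boldsymbol{A})$, and the arithmetic--geometric mean inequality $2\sqrt{ab}\le a+b$ applied to $a=\operatorname{tr}(\boldsymbol{A}^2)/\|\boldsymbol{A}\|_{op}$ and $b=\|\boldsymbol{A}\|_{op}t$. Together these bound the cross term $2\sqrt{\operatorname{tr}(\boldsymbol{A}^2)\,t}$ by $\operatorname{tr}(\boldsymbol{A})+\|\boldsymbol{A}\|_{op}t$, so that the bracket in the theorem is dominated by $2\operatorname{tr}(\boldsymbol{A})+3\|\boldsymbol{A}\|_{op}t$, isolating a clean leading term $2\xi^2\operatorname{tr}(\boldsymbol{A})$ and a deviation controlled solely through $\|\boldsymbol{A}\|_{op}$ and $t$.

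The final task is the calibration of $t$. I would choose $t$ as a function of $h$ so that the residual deviation equals $h$ and the exponent reads $-h^2/(4\xi^2\|\boldsymbol{A}\|_{op})$; in the sub-Gaussian (small-deviation) regime the cross term dominates the linear term, and matching $h$ against $2\xi^2\sqrt{\operatorname{tr}(\boldsymbol{A}^2)\,t}$ together with $\operatorname{tr}(\boldsymbol{A}^2)\le\|\boldsymbol{A}\|_{op}\operatorname{tr}(\boldsymbol{A})$ yields the stated quadratic exponent. Complementation, $P(\boldsymbol{z}'\boldsymbol{A}\boldsymbol{z}<2\xi^2\operatorname{tr}(\boldsymbol{A})+h)\ge 1-e^{-t}$, then delivers the inequality exactly as stated. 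I expect this calibration to be the main obstacle: \citet{hsu11}'s bound genuinely has two regimes --- a sub-Gaussian tail governed by $\operatorname{tr}(\boldsymbol{A}^2)$ for small $h$ and a sub-exponential tail governed by $\|\boldsymbol{A}\|_{op}$ for large $h$ --- whereas the corollary records only the single Gaussian-type tail $\exp(-h^2/(4\xi^2\|\boldsymbol{A}\|_{op}))$. Making the stated form precise therefore requires pinning down the regime in which it is valid and discharging the cross term via $\operatorname{tr}(\boldsymbol{A}^2)\le\|\boldsymbol{A}\|_{op}\operatorname{tr}(\boldsymbol{A})$; the positive semi-definiteness needed for the $\boldsymbol{A}^{1/2}$ representation is the secondary point to handle, since for a general non-symmetric $\boldsymbol{A}$ one must first pass to its symmetric part.
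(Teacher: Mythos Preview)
The paper does not actually prove this corollary: it is stated without argument as a consequence of Theorem~1 in \citet{hsu11} and then immediately applied with $\boldsymbol{A}=\boldsymbol{I}_{n_0}$. Your derivation therefore supplies more than the paper does.

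Your route---recalling the Hsu--Kakade--Zhang tail bound, passing to the symmetric part of $\boldsymbol{A}$, using $\operatorname{tr}(\boldsymbol{A}^2)\le\|\boldsymbol{A}\|_{op}\operatorname{tr}(\boldsymbol{A})$ together with AM--GM to absorb the cross term, and then calibrating $t$ in terms of $h$---is the natural one and is structurally sound. You are also right to flag the main difficulty: the Hsu et al.\ bound has both a sub-Gaussian regime (governed by $\operatorname{tr}(\boldsymbol{A}^2)$) and a sub-exponential regime (governed by $\|\boldsymbol{A}\|_{op}$), so a single Gaussian-type tail $\exp\bigl(-h^2/(4\xi^2\|\boldsymbol{A}\|_{op})\bigr)$ cannot hold uniformly in $h$ without further restriction. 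If you carry your own simplification through, matching $h=3\xi^2\|\boldsymbol{A}\|_{op}t$ gives an \emph{exponential} tail $\exp\bigl(-h/(3\xi^2\|\boldsymbol{A}\|_{op})\bigr)$, not the quadratic one stated; the quadratic form only emerges if you match against the cross term alone and neglect the linear term, which is legitimate only for small $h$. The paper does not confront this: it invokes the corollary only with $\boldsymbol{A}=\boldsymbol{I}_{n_0}$ and a specific deviation level, so the precise validity of the stated exponent for general $\boldsymbol{A}$ and $h$ is never exercised. Your critical reading here is warranted, and the calibration issue you identify is real rather than merely technical.
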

Let us apply this result with
$$ \boldsymbol{z} = \sum_{t=1}^{T_0}\boldsymbol{\epsilon}_{n_0,t} \quad \quad \xi^2 = T_0\sigma^2 \quad \quad \boldsymbol{A} = \boldsymbol{I}_{n_0} \quad \quad  $$
where we have $trace(\boldsymbol{I}_{n_0})=n_0$ and, since we are on Eucledian space, $||\boldsymbol{I}_{n_0}||_{op} = \phi_{max}(\boldsymbol{I}'_{n_0} \boldsymbol{I}_{n_0}) = 1$, i.e. the operator norm is the largest eigenvalue. Thus, choosing $h = h \sqrt{T_0} \sigma $, we can find:
$$ P\left(\left|\left|  \sum_{t=1}^{T_0}  \boldsymbol{\epsilon}_{n_0,t} \right|\right|_2^2 < 2 T_0 \sigma^2 n_0 + h\sqrt{T_0} \sigma  \right) \leq 1-\exp\left(  - \frac{h^2}{4}\right) $$
Although the last result is with respect to the \textit{squared} Euclidean norm, we have $\left|\left|  \sum_{t=1}^{T_0}  \boldsymbol{\epsilon}_{n_0,t} \right|\right|_2 \geq 0$ and so we can write: 
$$ P\left(\left|\left|  \sum_{t=1}^{T_0}  \boldsymbol{\epsilon}_{n_0,t} \right|\right|_2 < \sqrt{2 T_0 \sigma^2 n_0 + h\sqrt{T_0} \sigma } \right) \leq 1- \exp\left(  - \frac{h^2}{4}\right) $$

Plugging in the last result to $(*)$, we obtain with high probability that: 
\begin{gather*}
    \frac{F\tilde{\lambda}^2}{T_0\phi_{min}\left(\frac{1}{T_0}\boldsymbol{\lambda'_{pre} \lambda_{pre}}\right)} \left|\left|  \sum_{t=1}^{T_0}  \boldsymbol{\epsilon}_{n_0,t} \right|\right|_2 \leq \\ \frac{F\tilde{\lambda}^2}{T_0\phi_{min}\left(\frac{1}{T_0}\boldsymbol{\lambda'_{pre} \lambda_{pre}}\right)} \sqrt{2 T_0 \sigma^2 n_0 + h\sqrt{T_0} \sigma }  = \\
   \frac{F\tilde{\lambda}^2}{\phi_{min}\left(\frac{1}{T_0}\boldsymbol{\lambda'_{pre} \lambda_{pre}}\right) } \sqrt{\frac{2\sigma^2n_0}{T_0} + \frac{h\sigma}{T_0^{1.5}}}
\end{gather*}
as required by Claim \ref{er_prop_claim_4}.

We will proceed to show Claim \ref{er_prop_claim_2}. Note that we are seeking a lower bound for expression (\ref{att_cl_eq_2}). Analogically to the last derivation for an \textit{upper} bound, here we can take the \textit{largest} eigenvalue of $\left(\frac{1}{T_0}\boldsymbol{\lambda}'_{pre} \boldsymbol{\lambda}_{pre} \right)$ and the \textit{smallest} $\underaccent{\tilde}{\lambda}$:
\begin{gather*}
    \frac{1}{n_1} \sum_{i=n_0+1}^{N} \boldsymbol{\lambda}_s \left( \frac{\boldsymbol{\lambda}'_{pre}\boldsymbol{\lambda}_{pre} }{T_0}  \right)^{-1} \boldsymbol{\lambda}'_{pre} \boldsymbol{e}_{i,pre} \geq \\
    \frac{1}{T_0n_1} \sum_{i=n_0+1}^N \sum_{t=1}^{T_0} \frac{F \underaccent{\tilde}{\lambda}^2 e_{it} }{\phi_{max}\left(\frac{1}{T_0}\boldsymbol{\lambda}'_{pre} \boldsymbol{\lambda}_{pre} \right)}= \frac{F \underaccent{\tilde}{\lambda}^2}{T_0n_1\phi_{max}\left(\frac{1}{T_0}\boldsymbol{\lambda}'_{pre} \boldsymbol{\lambda}_{pre} \right)} \underbrace{\sum_{i=n_0+1}^N \sum_{t=1}^{T_0} e_{it}}_{n_1T_0subG(\sigma^2)} \equiv Q
\end{gather*}
where we define the maximum eigenvalue of $\left(\frac{1}{T_0}\boldsymbol{\lambda}'_{pre} \boldsymbol{\lambda}_{pre} \right)$ as $\phi_{max}\left(\frac{1}{T_0}\boldsymbol{\lambda}'_{pre} \boldsymbol{\lambda}_{pre} \right)$. The $subG$-ness of the sum of error follows from their independence. Note that we define $Q$ as the last expression we found. Since $Q$ is a sum of $n_1T$ independent RV with $\sigma^2_Q = \left(\frac{\sigma F \underaccent{\tilde}{\lambda}^2}
{\phi_{max} ^2}\right)^2$, we find a Chernoff bound for any $h$:
\begin{gather*}
    P(Q < -h) = P(-Q > h) = P(e^{-Qv} > e^{hv}) \leq \\ \frac{E[e^{-Qv}]}{e^{hv}} \leq \frac{\exp\left(\frac{\sigma_Q^2v^2}{2}\right)}{\exp(hv)} = \exp\left(\frac{\sigma_Q^2v^2}{2} - hv\right)
\end{gather*}
where the first inequality follows from Markov's inequality and the second inequality is due to the definition of $SubG$ random variables. Then, we find the tightest bound by maximising (????) $\frac{\sigma_Q^2v^2}{2} - hv$. FOC is $\sigma^2_Qv =h $. So, the Chernoff bound is: 
$$
    P(Q < -h)  \leq \exp\left( -\frac{h^2}{2\sigma^2} \right)
$$
Since the above holds for any $h$, take $h\sigma_Q$: 
$$P\left( Q> -\frac{h\sigma F \underaccent{\tilde}{\lambda}^2}
{\phi_{max} ^2}\right) = 1- \exp\left(-h^2/2 \right) $$
as suggested by Claim \ref{er_prop_claim_2}, with probability at least  $1-\exp\left( -\frac{h^2}{2}\right)$ we have:
\begin{gather*}
    \frac{1}{n_1} \sum_{i=n_0+1}^{N} \boldsymbol{\lambda}_s \left( \frac{\boldsymbol{\lambda}'_{pre}\boldsymbol{\lambda}_{pre} }{T_0}  \right)^{-1} \boldsymbol{\lambda}'_{pre} \boldsymbol{e}_{i,pre} \geq Q > -\frac{h\sigma F \underaccent{\tilde}{\lambda}^2}{\phi_{max} ^2}
\end{gather*}


Now, we will show the last Claim $\ref{er_prop_claim_3}$ for (\ref{att_cl_eq_3}). For some given $i$ in the treated group $\{n_0+1, \dots, n_0+n_1\}$, we have that $\epsilon_{is}$ is independent from $\epsilon_{js}$ for the control group, $j \in \{1,2, \dots, n_0\}$. So, we can write for some $h$:
\begin{align*}
    E\left[ e^{h(\epsilon_{is} - \sum_{j=1}^{n_0} \hat{w}_i^j \epsilon_{js})}\right] \stackrel{ind.}{=} E[e^{h\epsilon_{is}}] \prod_{j=1}^{n_0} E\left[e^{-s\hat{w}_{ij} \epsilon_{js}} \right] \\
    \stackrel{subG}{\leq} e^{\frac{\sigma^2h^2}{2}} \prod_{j=1}^{n_0} e^{\frac{\sigma^2(-h\hat{w}_{ij})^2}{2}} = 
    \exp\left(\frac{h^2\sigma^2(1+\sum_{j}\hat{w}_{ij} )}{2}\right)
\end{align*}
which means by definition of \texttt{subG} that:
$$ \epsilon_{is} - \sum_{j=1}^{n_0} \hat{w}_i^j \epsilon_{js} \sim SubG\left(\sigma^2\left(1+\sum_{j=1}^{n_0}\hat{w}^2_{ij} \right) \right) $$
 
However, if we further aggregate over $i$ in the treated group, we have a problem. Define $z_i \equiv \epsilon_{is} - \sum_{j=1}^{n_0} \hat{w}_i^j\epsilon_{js}$. For some $k$ and $i$, we do not have independence of $z_i$ and $z_k$ as for $s>T_0$:
\begin{align*}
    E[z_iz_k] =& E\left[\left(\epsilon_{is} - \sum_{j=1}^{n_0} \hat{w}_{ij}\epsilon_{js}\right) \left(\epsilon_{ks} - \sum_{j=1}^{n_0} \hat{w}_{kj}\epsilon_{js}\right)  \right] = \\
    =& E[\epsilon_{is}\epsilon_{ks}] - E[\epsilon_{is}\sum_{j=1}^{n_0} \hat{w}_{kj}\epsilon_{js}] - E[\epsilon_{ks}\hat{w}_{ij}\epsilon_{js}] + E\left[ \left( \sum_{j=1}^{n_0} \hat{w}_{ij}\epsilon_{js} \right) \left( \sum_{j=1}^{n_0} \hat{w}_{kj}\epsilon_{js} \right) \right] = \\
    = & 0 - 0 - 0 + \sum_{i=1}^{n_0}\hat{w}^2_{ij} E[\epsilon^2_{is}] > 0
\end{align*}
due to independence of $\epsilon_{is}$, $E[\epsilon^2_{is}] \neq 0$ and the fact that every $\hat{w}_{kj}$ is independent of all post-treatment errors $\epsilon_{is}$.

We can still apply Lemma \ref{lem_subg} for the sum of not necessarily independent \texttt{subG} variables, as $z_i=\epsilon_{is} - \sum_{j=1}^{n_0} \hat{w}_i^j \epsilon_{js} $ are  \texttt{subG}.  Thus, we obtain: 
\begin{gather*}
    \frac{1}{n_1}\sum_{i=1}^{n_1}\epsilon_{is} - \sum_{j=1}^{n_0} \hat{w}_i^j \epsilon_{js} \sim SubG\left(\frac{\sigma^2}{n_1^2} \left( \sum_{i=n_0+1}^N\sqrt{1+\sum_{j=1}^{n_0} \hat{w}^2_{ij}} \right)^2 \right)
\end{gather*}
As in the case of Claim \ref{er_prop_claim_2}, we can take a Chernoff bound and use the same argument. So, with probability $1-\exp(-h/2)$:
\begin{gather*}
    \frac{1}{n_1}\sum_{i=1}^{n_1}\epsilon_{is} - \sum_{j=1}^{n_0} \hat{w}_{ij} \epsilon_{js} \leq \frac{h\sigma}{n_1} \left(\sum_{i=n_0+1}^N \sqrt{ 1 + \sum_{j=1}^{n_0} \hat{w}^2_{ij}} \right) \leq h \frac{\sigma}{n_1} n_1 \sqrt{2}
\end{gather*}
and the last inequality follows due to the properties of the weights.

Now, let us call the events as follows:
\begin{gather*}
    E_1 =  \left\{ \left( \frac{1}{n_1} \sum_{i=n_0+1}^{n_1} \boldsymbol{\lambda}_s (\boldsymbol{\lambda'_{pre} \lambda_{pre}})^{-1} \boldsymbol{\lambda}'_{pre} \left( \sum_{j=1}^{n_0} \hat{w}_{ij}   \boldsymbol{e}_{j,pre}\right)   \right) \leq \frac{F\tilde{\lambda}^2}{\phi_{min} } \sqrt{\frac{2\sigma^2n_0}{T_0} + \frac{h\sigma}{T_0^{1.5}} } \right\}  \\
    E_2 =  \left\{ \frac{1}{n_1} \sum_{i=n_0+1}^{N} \boldsymbol{\lambda}_s \left( \frac{\boldsymbol{\lambda}'_{pre}\boldsymbol{\lambda}_{pre} }{T_0}  \right)^{-1} \boldsymbol{\lambda}'_{pre} \boldsymbol{e}_{i,pre}   
    \geq - \frac{h\sigma F \underaccent{\tilde}{\lambda}^2}{\phi_{max} ^2}   \right\}  \\
    E_3 =\left\{ \frac{1}{n_1} \sum_{i=n_0+1}^N \left[ \epsilon_{is} - \sum_{j=1}^{n_0} \hat{w}^j \epsilon_{js} \right] \leq h \sigma \sqrt{2} \right\}
\end{gather*}
where we use the shorthand to denote the minimum eigenvalues $\phi_{min} = \phi_{min} \left(\frac{1}{T_0}\boldsymbol{\lambda'_{pre} \lambda_{pre}}\right)$ and maximum eigenvalue $\phi_{max} = \phi_{max} \left(\frac{1}{T_0}\boldsymbol{\lambda'_{pre} \lambda_{pre}}\right)$

Then, we can use our results from the four claims and apply Frechet's inequality to show that:
\begin{gather*}
    Pr\left[\widehat{\tau}_s - \tau_s < 
    \frac{F\tilde{\lambda}^2}{\phi_{min} } \sqrt{\frac{2\sigma^2n_0}{T_0} + \frac{h\sigma}{T_0^{1.5}} }
    - \left(- \frac{h\sigma F \underaccent{\tilde}{\lambda}^2}{\phi_{max} ^2}\right)
    + h \sigma \sqrt{2} \right] \\
    \geq Pr\left( E_1 \cap E_2 \cap E_3 \right) \geq \\
     \max\left(0, 1 - \frac{1}{\exp(0.25h^2)}  - \frac{2}{\exp(0.5h^2)} \right) \geq  \max\left(0, 1 - \frac{3}{\exp(0.25h^2)} \right)
\end{gather*}
where $h$ will be set such that the last expression is bigger than 0.
Lastly, we apply a union bound to the event that that  $E_1$, $E_2$ and $E_3$ hold simultaneously for the absolute value of the estimation error (and not just for $\widehat{\tau}_s - \tau_s$ as done so far), then it follows: 
\begin{gather*}
    |\widehat{\tau}_s - \tau_s| < 
     \frac{F\tilde{\lambda}^2}{\phi_{min} } \sqrt{\frac{2\sigma^2n_0}{T_0} + \frac{h\sigma}{T_0^{1.5}} }
    + \left( \frac{h\sigma F \underaccent{\tilde}{\lambda}^2}{\phi_{max} ^2}\right)
    + h \sigma \sqrt{2}
\end{gather*}
with probability $1 - \frac{3}{\exp(0.25h^2)}$, due to Frechet inequality for logical disjunction, i.e. $P(A\cup B) \geq \max(P(A), P(B))$. Note that in the statement of the theorem we amend the notation, so that for clarity $\lambda_{min}$ and $\lambda_{max}$ denote respectively the minimum and maximum common factor $\lambda_{fs}$ in absolute value. However, we have $\lambda_{min} \equiv \underaccent{\tilde}{\lambda}$ and $\lambda_{max} \equiv \tilde{\lambda}$
\end{proof}


\newpage

\subsection{Proof of Proposition \ref{prop_DiD_est_err}}\label{proof_bound_DiD}

\propDiDesterr*
\begin{proof}
    We follow an analogical approach to Proposition \ref{prop_iDiD_cons}. So, we begin by estimating $\hat{\tau}$ via DiD and so fit the model: $$ y_{it} = \rho  + \gamma_i + \delta_t + D_{it}\tau + e_{it} $$
    where $\rho$ is an intercept, $\gamma_i$ is an individual fixed effect and  $\delta_t$ is a time fixed effect. In matrix form, let us denote by $\boldsymbol{Z}$ the $NT \times (1+N-1+T-1)$ matrix of dummies, used for estimating $\rho  + \gamma_i + \delta_t$, so that 
    $$ \boldsymbol{y} = \boldsymbol{Z} \begin{pmatrix}
        \rho \\ 
        \boldsymbol{\gamma} \\
        \boldsymbol{\delta} 
    \end{pmatrix} + 
    \boldsymbol{D} \tau + \boldsymbol{e}  $$
    where $\boldsymbol{D}$ is a $(NT \times 1)$ vector of treatment indicators.
    By Frisch-Waugh-Lovell's Theorem, we have that:
    \begin{align*}
         \hat{\tau} =& \left(\boldsymbol{\hat{u}}'_{D \cdot \rho, \gamma, \delta} \boldsymbol{\hat{u}}_{D \cdot \rho, \gamma , \delta} \right)^{-1} \boldsymbol{\hat{u}}'_{D \cdot \rho, \gamma, \delta} \boldsymbol{u}_{y \cdot \rho, \gamma, \delta} = \\
         =& \left( \boldsymbol{D}'(\boldsymbol{I}_{NT} - \boldsymbol{Z}(\boldsymbol{Z}'\boldsymbol{Z})^{-1} \boldsymbol{Z}')\boldsymbol{D} \right)^{-1} \boldsymbol{D}'(\boldsymbol{I}_{NT} - \boldsymbol{Z}(\boldsymbol{Z}'\boldsymbol{Z})^{-1} \boldsymbol{Z}' ) \boldsymbol{y} = \\
         =&\left(\boldsymbol{\hat{u}}'_{D \cdot \rho, \gamma, \delta}\boldsymbol{D} \right)^{-1} \boldsymbol{\hat{u}}'_{D \cdot \rho, \gamma, \delta} \boldsymbol{y}
    \end{align*}
    where $\boldsymbol{\hat{u}}'_{D \cdot \rho, \gamma, \delta}$ are the residuals\footnote{We do not need to worry about predicted values outside of $[0,1]$ range, as we are regressing only on fixed effects and not on any continuous variable. I would like to thank Frank DiTraglia for pointing this out.} from regressing $D_{it}$ on $\rho$, $\gamma_i$ and $\delta_t$ and analogically for $\boldsymbol{\hat{u}}'_{y \cdot \rho, \gamma, \delta}$ being the residuals from regressing $y_{it}$ on $\rho$, $\gamma_i$ and $\delta_t$. The second row follows due to idempotence of matrix $\boldsymbol{M} \equiv (\boldsymbol{I}_{NT} - \boldsymbol{Z}(\boldsymbol{Z}'\boldsymbol{Z})^{-1} \boldsymbol{Z}' )$, i.e. $\boldsymbol{M}'\boldsymbol{M}=\boldsymbol{M}$. Next, we substitute the factor model generating $\boldsymbol{y}$ to obtain:  
    \begin{align*}
         \hat{\tau} =& \left( \boldsymbol{D}'\boldsymbol{M}\boldsymbol{D} \right)^{-1} \boldsymbol{D}'\boldsymbol{M}  \boldsymbol{y} = \\
         =& \left( \boldsymbol{D}'\boldsymbol{M}\boldsymbol{D} \right)^{-1} \boldsymbol{D}'\boldsymbol{M} (Vec(\boldsymbol{\theta X'}) + Vec(\boldsymbol{\lambda \mu'}) + \boldsymbol{D}\tau + \boldsymbol{e} ) = \\
         =& \tau  + \left(\frac{ \boldsymbol{D}'\boldsymbol{M}\boldsymbol{D}}{NT} \right)^{-1} \frac{\boldsymbol{D}'\boldsymbol{M} Vec(\boldsymbol{\theta X'})}{NT} + 
         \left(\frac{ \boldsymbol{D}'\boldsymbol{M}\boldsymbol{D}}{NT} \right)^{-1} \frac{\boldsymbol{D}'\boldsymbol{M} Vec(\boldsymbol{\lambda \mu'})}{NT} + 
        \left(\frac{ \boldsymbol{D}'\boldsymbol{M}\boldsymbol{D}}{NT} \right)^{-1} \frac{\boldsymbol{D}'\boldsymbol{M} \boldsymbol{e}}{NT} 
    \end{align*}
    where we need to use the $Vec(\boldsymbol{A})$ operator which stacks all columns of matrix $\boldsymbol{A}$. Given the assumption on the DGP, let us hold $T$ and $n_0$ fixed and let $n_1 \to \infty$, so that $N \to \infty$. Then, we have for all $(i,j)\in \{1,2, \dots, N\}$ and $(j,s) \in \{1,2, \dots, T\}$ by the Law of Large Numbers:  
    $$ \frac{\boldsymbol{D}'\boldsymbol{M} \boldsymbol{e}}{NT} \stackrel{p}{\to} E[g(D_{it})e_{js}] = 0 \quad \quad \frac{\boldsymbol{D}'\boldsymbol{M} Vec(\boldsymbol{\theta X'})}{NT}  \stackrel{p}{\to} E[f(D_{it})x_{j}] = 0 $$
    where $g(.)$ and $f(.)$ are some measurable functions. Crucially, we note that while the elements of $\boldsymbol{M}$ could be correlated with with the elements of $\boldsymbol{D}$, what matters here is the fact that both $\boldsymbol{x}$ and $\boldsymbol{\epsilon}$ are assume to be independent from   $\boldsymbol{M}$ and $\boldsymbol{D}$. Lastly, to get the probability limits we need either $n_1 \to \infty$ or $n_0 \to \infty$ or both.  Note that in the expression above $\frac{ \boldsymbol{D}'\boldsymbol{M}\boldsymbol{D}}{NT}$ is a scalar, as $\boldsymbol{D}$ is $(NT\times1)$ vector.  Assuming (as we shall show below) that $\left(\frac{ \boldsymbol{D}'\boldsymbol{M}\boldsymbol{D}}{NT} \right)^{-1} \stackrel{p}{\to} c \neq 0$, then we get as $n_1 \to \infty$:
    \begin{gather}
        \hat{\tau} - \tau \stackrel{p}{\to}  
    \left(\underset{N \to \infty}{\plim}\left(\frac{ \boldsymbol{D}'\boldsymbol{M}\boldsymbol{D}}{NT} \right)\right)^{-1} \underset{N \to \infty}{\plim} \left( \frac{\boldsymbol{D}'\boldsymbol{M} Vec(\boldsymbol{\lambda \mu'})}{NT}  \right)
    \label{pro_lim_est_err_DiD}
    \end{gather}
    with the result following from the Law of Large Numbers and Slutsky's Theorem.
    Next, we will obtain a more precise estimate of the probability limit. We note that $\boldsymbol{M}$ is a function of the matrix of dummies $\boldsymbol{Z}$ and so only of $n_1$, $n_0$, $T_0$ and $T$. To see this, consider the case for $N=3$ and $T=2$ where $\boldsymbol{Z}$ is $6\times4$ matrix:
    \begin{gather*}
        \begin{pmatrix}
            1 & 0 & 0 & 0 \\ 
            1 & 0 & 0 & 1 \\
            1 & 1 & 0 & 0 \\ 
            1 & 1 & 0 & 1 \\ 
            1 & 0 & 1 & 0 \\ 
            1 & 0 & 1 & 1 \\
        \end{pmatrix}
    \end{gather*}
    More generally, the first column contains $1$-s for the intercept, the next $N-1$ columns contain dummies for all individuals in the data and the last $T-1$ are with dummies for the treated periods. 
    
    With this insight in mind, we shall find an \textit{exact} expression for $\boldsymbol{M} = \left(\boldsymbol{I}_{NT} - \boldsymbol{Z}(\boldsymbol{Z}'\boldsymbol{Z})^{-1} \boldsymbol{Z}' \right)$. We begin by showing:
    \begin{gather}
        \boldsymbol{Z'Z} = 
        \begin{pmatrix}
            NT & \boldsymbol{\iota}'_{N-1} T & \boldsymbol{\iota}'_{T-1} N \\
            \boldsymbol{\iota}_{N-1} T & T \boldsymbol{I}_{N-1} & \boldsymbol{1}_{(N-1)\times(T-1)} \\
            \boldsymbol{\iota}_{T-1} N & \boldsymbol{1}_{(T-1)\times(N-1)} & N \boldsymbol{I}_{T-1}  \\
        \end{pmatrix}
        \label{tz_z}
    \end{gather}
    where $\boldsymbol{\iota}_K$ is $1\times K$ column vector of ones and $\boldsymbol{1}_{K\times Q}$ is a $(K\times Q)$ matrix of ones. This matrix be found by guess and verify. Next, we define by $\boldsymbol{H}$ the matrix from (\ref{tz_z}): 
    \begin{gather*}
        \boldsymbol{H} = 
        \begin{pmatrix}
            T \boldsymbol{I}_{N-1} & \boldsymbol{1}_{(N-1)\times(T-1)} \\
             \boldsymbol{1}_{(T-1)\times(N-1)} & N \boldsymbol{I}_{T-1} 
        \end{pmatrix}
    \end{gather*}
    We can then apply the matrix inversion lemma \citep[][p.108]{ber09}:
    \begin{gather}
        \boldsymbol{B}^{-1} = 
        \begin{pmatrix}
            \boldsymbol{P} & \boldsymbol{Q} \\
            \boldsymbol{R} & \boldsymbol{S}
        \end{pmatrix}^{-1} =
        \begin{pmatrix}
            \boldsymbol{P}^{-1} + \boldsymbol{P}^{-1}\boldsymbol{Q}(\boldsymbol{S} -\boldsymbol{R}\boldsymbol{P}^{-1}\boldsymbol{Q} )^{-1} \boldsymbol{R} \boldsymbol{P}^{-1} & 
            -\boldsymbol{P}^{-1} \boldsymbol{Q} (\boldsymbol{S} -\boldsymbol{R}\boldsymbol{P}^{-1}\boldsymbol{Q} )^{-1} \\
            -(\boldsymbol{S} -\boldsymbol{R}\boldsymbol{P}^{-1}\boldsymbol{Q} )^{-1} \boldsymbol{R} \boldsymbol{P}^{-1} & 
            (\boldsymbol{S} -\boldsymbol{R}\boldsymbol{P}^{-1}\boldsymbol{Q} )^{-1} 
        \end{pmatrix}
        \label{minv_lem}
    \end{gather}
    where all the inverted matrices are assumed to be non-singular and also use the Sherman-Morrison-Woodbury lemma \citep[][p.141]{ber09}:
    \begin{gather}
        (\boldsymbol{C} + \boldsymbol{a}\boldsymbol{b}')^{-1} = \boldsymbol{C}^{-1} - \frac{\boldsymbol{C}^{-1}\boldsymbol{a}\boldsymbol{b}' \boldsymbol{C}^{-1}}{1+\boldsymbol{b}'\boldsymbol{C}^{-1} \boldsymbol{a}}
    \label{smw_lem}
    \end{gather}
    where $\boldsymbol{a}$ and $\boldsymbol{b}$ are column vectors. If we use these two results, we can find:
    \begin{gather*}
        \boldsymbol{H}^{-1} = 
        \begin{pmatrix}
            \frac{1}{T}\left(\boldsymbol{I}_{n-1} +\frac{T-1}{T+N-1} \boldsymbol{1}_{(N-1) \times (N-1)} \right) & 
            -\frac{1}{T+N-1} \boldsymbol{1}_{(N-1)\times(T-1)} \\
            -\frac{1}{T+N-1} \boldsymbol{1}_{(T-1) \times (N-1)} & 
            \frac{1}{N} \left(\boldsymbol{I}_{T-1} +\frac{N-1}{T+N-1} \boldsymbol{1}_{(T-1) \times (T-1)} \right)
        \end{pmatrix}
    \end{gather*}
    The calculations are tedious, but not so challenging mentally. Next, we can use $\boldsymbol{H}^{-1}$ in finding the inverse of $\boldsymbol{Z'Z}$. We achieve this in three steps. Firstly, use the matrix inversion lemma (\ref{minv_lem}) to the expression for $\boldsymbol{Z'Z}$ in (\ref{tz_z}) where we use $\boldsymbol{H}$ as our bottom-right block $\boldsymbol{S}$ in (\ref{minv_lem}). Secondly, we use $\boldsymbol{H}^{-1}$  and the Sherman-Morrison-Woodburry lemma to find the bottom right panel of the inverted lemma, that is  $(\boldsymbol{S} -\boldsymbol{R}\boldsymbol{P}^{-1}\boldsymbol{Q} )^{-1} $. Thirdly, we plug-in this expression in the other three panels, noting that it appears in all of them. The derivations are significantly simplified by the fact that in $\boldsymbol{Z'Z}$ the top-left panel is a scalar and so the top-right and bottom-left blocks are actually vectors.
    
    Using this procedure, we can find:
    \begin{gather*}
        (\boldsymbol{Z'Z}) = 
        \begin{pmatrix}
            \frac{N+T-1}{NT} & -\frac{1}{T} \boldsymbol{\iota}'_{N-1} & -\frac{1}{N} \boldsymbol{\iota}'_{T-1} \\
            -\frac{1}{T} \boldsymbol{\iota}_{N-1} & \frac{1}{T} \left( \boldsymbol{I}_{N-1} + \boldsymbol{1}_{(N-1)\times (N-1)} \right) & \boldsymbol{O}_{(N-1)\times(T-1)} \\
            -\frac{1}{N} \boldsymbol{\iota}_{T-1} &  \boldsymbol{O}_{(T-1)\times(N-1)} & \frac{1}{N} \left( \boldsymbol{I}_{T-1} + \boldsymbol{1}_{(T-1)\times (T-1)} \right)
        \end{pmatrix}
    \end{gather*}
    The next step in the derivation is to guess and verify the $(NT \times NT)$ matrix $\boldsymbol{Z(Z'Z)^{-1}Z'}$ has a block symmetric structure:
    \begin{gather*}
        \begin{pmatrix}
            \boldsymbol{A} & \boldsymbol{B} & \boldsymbol{B} & \dots & \boldsymbol{B} & \boldsymbol{B} \\
            \boldsymbol{B}  & \boldsymbol{A}& \boldsymbol{B} & \dots & \boldsymbol{B} & \boldsymbol{B} \\
            \boldsymbol{B}  & \boldsymbol{B} &  \boldsymbol{A} & \dots & \boldsymbol{B} & \boldsymbol{B} \\
            \vdots & \vdots &  \vdots & \ddots & \vdots & \vdots \\
            \boldsymbol{B}  & \boldsymbol{B} & \boldsymbol{B} & \dots & \boldsymbol{A} & \boldsymbol{B}\\
            \boldsymbol{B}  & \boldsymbol{B} & \boldsymbol{B} & \dots & \boldsymbol{B} & \boldsymbol{A}\\
        \end{pmatrix}
    \end{gather*}
    where $\boldsymbol{A} = \frac{1}{NT} \left( T \boldsymbol{I}_T + (N-1)\boldsymbol{1}_T \right)$ and $\boldsymbol{B} = \frac{1}{NT} \left( T \boldsymbol{I}_T - \boldsymbol{1}_{T\times T} \right)$. On each block row, we have $N-1$ matrices $\boldsymbol{B}$  and one matrix $\boldsymbol{A}$ on the block diagonal for some treated observation.  In specific, it is given by:
    \begin{gather*}
        \boldsymbol{Z(Z'Z)^{-1}Z'} =  \frac{1}{NT} \times \\
        \left.
        \begin{matrix}
        \vphantom{s_{m+1}}\\
        \vphantom{\vdots}\\
        \vphantom{s_{m+k}} \\
        \vphantom{s_{m+k}} \\
        \vphantom{s_{m+k}}
        \end{matrix}
        NT \right \lbrace  \hfill
        \overbrace{\begin{pmatrix}
           T \boldsymbol{I}_T + (N-1)\boldsymbol{1}_T & \smash[b]{\block{T(N-1)} } \\
           T \boldsymbol{I}_T - \boldsymbol{1}_{T\times T} & T \boldsymbol{I}_T + (N-1)\boldsymbol{1}_T \quad  \smash[b]{\blockk{T(N-2)} } \\
           \vdots & \vdots \\
            T \boldsymbol{I}_T - \boldsymbol{1}_{T\times T}  & \quad \quad \cdots \quad \quad  \cdots \quad  \quad \quad \quad T \boldsymbol{I}_T + (N-1)\boldsymbol{1}_T
        \end{pmatrix}}^{NT}
        \begin{matrix}
    \end{matrix}
    \end{gather*}
    It is then easy to calculate $\boldsymbol{M} = \left(\boldsymbol{I}_{NT} - \boldsymbol{Z(Z'Z)^{-1}Z'}  \right)$: 
    \begin{gather}
    \frac{1}{NT}\begin{pmatrix}
        T(N-1) \boldsymbol{I}_T - (N-1)\boldsymbol{1}_{T\times T} & -T \boldsymbol{I}_T + \boldsymbol{1}_{T\times T} & \cdots & -T \boldsymbol{I}_T + \boldsymbol{1}_{T\times T} \\
        -T \boldsymbol{I}_T + \boldsymbol{1}_{T\times T} &  T(N-1) \boldsymbol{I}_T - (N-1)\boldsymbol{1}_{T\times T} & \cdots &  -T \boldsymbol{I}_T + \boldsymbol{1}_{T\times T} \\
        \vdots & \vdots & \ddots  & \vdots \\
        -T \boldsymbol{I}_T + \boldsymbol{1}_{T\times T} & -T \boldsymbol{I}_T + \boldsymbol{1}_{T\times T}   & \cdots  & T(N-1) \boldsymbol{I}_T - (N-1)\boldsymbol{1}_{T\times T} \\
    \end{pmatrix}
    \label{M_sol}
     \end{gather}
     Given this expression, we will now find the probability limit of:
     $$  \left(\frac{ \boldsymbol{D}'\boldsymbol{M}\boldsymbol{D}}{NT} \right)^{-1}  \left( \frac{\boldsymbol{D}'\boldsymbol{M} Vec(\boldsymbol{\lambda \mu'})}{NT}  \right) $$
     Let us first find an expression for the scalar $\frac{\boldsymbol{D}'\boldsymbol{M}\boldsymbol{D}}{NT}$. At this point, we realise that the vector $\boldsymbol{D}$ contains only zeros and ones. Since $T_0 = T-1$, we have:
     $$ \boldsymbol{D} = (\overbrace{\underbrace{0, \dots, 0}_{T}, \underbrace{0, \dots, 0}_{T} \dots, \underbrace{0, \dots, 0}_{T}}^{n_0T}, \overbrace{\underbrace{0, \dots, 0}_{T_0}, 1, 0 \underbrace{0, \dots, 0}_{T_0}, 1, \dots, \underbrace{0, \dots, 0}_{T_0}, 1}^{n_1T}  )' $$
     So, using the structure of $\boldsymbol{D}$, we can find:
     $$
         \frac{\boldsymbol{D}'\boldsymbol{M}\boldsymbol{D}}{NT} = \sum_{p=1}^{NT}\sum_{r=1}^{NT} \mathbbm{1}\{(D_p=1) \cap (D_r=1) \}m_{pr} = n_1 m_{kk} + n_1(n_1-1) m_{qk} 
     $$
     where $m_{kk} $ is any diagonal entry of $\boldsymbol{M}$ and $m_{qk}$ is any diagonal entry of the off-diagonal matrix $(-T \boldsymbol{I}_T + \boldsymbol{1}_{T\times T})$ in the expression for $\boldsymbol{M}$ in (\ref{M_sol}). Next, the last expression can be written as $n_1 \to \infty$ in the following way: 
     \begin{align}
         \frac{\boldsymbol{D}'\boldsymbol{M}\boldsymbol{D}}{NT} =& \quad n_1 \frac{[T(N-1) - (N-1)]}{NT} + n_1(n_1 - 1) \frac{(-T + 1)}{NT} \label{pr_lim_large_n0_n1} \\
         = & \quad \frac{(T-1)n_0n_1}{(n_1 + n_0)T} \quad = \quad \frac{(Tn_0 - n_0)}{(1+ \frac{n_0}{n_1})T} \nonumber \\ 
         \stackrel{p}{\to}& \quad \frac{(T-1)n_0}{T} 
         \label{pr_lim_den}
     \end{align}
     Note that in addition if we let $n_0 \to \infty$ we have $ \frac{\boldsymbol{D}'\boldsymbol{M}
     \boldsymbol{D}}{NT} \stackrel{p}{\to}\frac{T-1}{T}n_1 $ and if we have both 
     $n_0 \to \infty$  and $n_1 \to \infty$ then $ \frac{\boldsymbol{D}' 
     \boldsymbol{M}\boldsymbol{D}}{NT} 
     \stackrel{p}{\to} \frac{T-1}{T} \frac{1}{0} \approx \infty $.\footnote{Interestingly, see what happens when $n_0 \to \infty$, $n_1 \to \infty$ and $\frac{n_0}{n_1} \to c$} By Slutsky's Theorem, we have $\left(  \frac{\boldsymbol{D}'\boldsymbol{M}\boldsymbol{D}}{NT} \right) \stackrel{p}{\to} \frac{T}{T_0n_0}$.
     
     Next, assuming that $n_1 > n_0$, we can write:
     \begin{align*}
         \frac{1}{NT} \boldsymbol{D'M} Vec(\boldsymbol{\lambda \mu'}) =& \frac{1}{NT} \sum_{t=1}^{T_0} \left[ \left(\frac{n_1}{n_0}\sum_{j=1}^{n_0} \boldsymbol{\mu}_j \right) - \left(\sum_{i=n_0+1}^N\boldsymbol{\mu}_i \right)   \right] \boldsymbol{\lambda}_t \\
         -& \frac{T_0}{NT}  \left[ \left(\frac{n_1}{n_0}\sum_{j=1}^{n_0} \boldsymbol{\mu}_j \right) - \left(\sum_{i=n_0+1}^N\boldsymbol{\mu}_i \right)   \right] \boldsymbol{\lambda}_T
     \end{align*}
    given our expression for $\boldsymbol{M}$ and recalling that $\boldsymbol{D}$ contains of ones and zeros. Then, by the Law of Large Numbers and our assumptions, we can find as $n_1 \to \infty$:
    \begin{gather*}
        \frac{1}{NT} \sum_{t=1}^{T_0} \left[ \left(\frac{n_1}{n_0}\sum_{j=1}^{n_0} \boldsymbol{\mu}_j \right)   \boldsymbol{\lambda}_t \right] \stackrel{p}{\to} \boldsymbol{\bar{\mu}}_{don}  \boldsymbol{\bar{\lambda}}_{pre} \frac{T_0}{T}  \\
        \frac{1}{NT} \sum_{t=1}^{T_0} \left[ \left(\sum_{i=n_0+1}^N\boldsymbol{\mu}_i \right)   \boldsymbol{\lambda}_t \right]\stackrel{p}{\to}  E[\boldsymbol{\mu}_{tr}|D_{it}=1] \boldsymbol{\bar{\lambda}}_{pre} \frac{T_0}{T} \\
        \frac{T_0}{NT}  \left[ \left(\frac{n_1}{n_0}\sum_{j=1}^{n_0} \boldsymbol{\mu}_j \right) \boldsymbol{\lambda}_T \right]  \stackrel{p}{\to} \boldsymbol{\bar{\mu}}_{don} \boldsymbol{\lambda}_T \frac{T_0}{T} \\
        \frac{T_0}{NT}  \left[  \left(\sum_{i=n_0+1}^N\boldsymbol{\mu}_i \right)   \right] \boldsymbol{\lambda}_T \stackrel{p}{\to} E[\boldsymbol{\mu}_{tr}|D_{it}=1]\boldsymbol{\lambda}_T \frac{T_0}{T} 
    \end{gather*}
    We can then combine by Slutsky's Theorem to obtain:
    \begin{gather}
         \boldsymbol{D'M} Vec(\boldsymbol{\lambda \mu'}) \stackrel{p}{\to} \frac{T_0}{T}(\boldsymbol{\hat{\mu}}_{don} - E[\hat{\boldsymbol{\mu}}|D_{it}=1])(\bar{\boldsymbol{\lambda}}_{pre} -  \boldsymbol{\lambda}_T)
         \label{prob_lim_d_lam_mu}
    \end{gather}
    Lastly, via Slutsky's Theorem we can combine with (\ref{pr_lim_den}) to obtain the final result in the proposition, using the representation (\ref{pro_lim_est_err_DiD}):
    \begin{gather*}
         \hat{\tau} - \tau \stackrel{p}{\to}  
       \frac{(\boldsymbol{\bar{\mu} }_{don} - E[\boldsymbol{\mu}|D_{it}=1])(\bar{\boldsymbol{\lambda}}_{pre} -  \boldsymbol{\lambda}_T)}{n_0} 
    \end{gather*}
    
\end{proof}

\newpage

\subsection{Proof of Proposition \ref{prop_iDiD_cons} } \label{subsec_prop_iDiD_cons}
\propiDiDcons*

\begin{proof}
    We follow an analogical approach to Proposition \ref{prop_DiD_est_err}. So, we begin by estimating $\hat{\tau}$ via iDiD and so fit the model: $$ \tilde{y}_{it} \equiv  y_{it}- \boldsymbol{\tilde{\mu}}_i \boldsymbol{\tilde{\lambda} }_t = \rho  + \gamma_i + \delta_t + D_{it}\tau + e_{it} $$
    where $\rho$ is an intercept, $\gamma_i$ is an individual fixed effect and  $\delta_t$ is a time fixed effect and where we define as $\tilde{y}_{it}$ the values of the outcomes after subtracting the demeaned interactive fixed effects. In matrix form, let us denote by $\boldsymbol{Z}$ the $NT \times (1+N-1+T-1)$ matrix of dummies, used for estimating $\rho  + \gamma_i + \delta_t$, so that 
    $$ \boldsymbol{\tilde{y}} = \boldsymbol{Z} \begin{pmatrix}
        \rho \\ 
        \boldsymbol{\gamma} \\
        \boldsymbol{\delta} 
    \end{pmatrix} + 
    \boldsymbol{D} \tau + \boldsymbol{e}  $$
    By Frisch-Waugh-Lovell's Theorem, we have that:
    \begin{align*}
         \hat{\tau} =& \left(\boldsymbol{\hat{u}}'_{D \cdot \rho, \gamma, \delta} \boldsymbol{\hat{u}}_{D \cdot \rho, \gamma , \delta} \right)^{-1} \boldsymbol{\hat{u}}'_{D \cdot \rho, \gamma, \delta} \boldsymbol{u}_{\tilde{y} \cdot \rho, \gamma, \delta} = \\
         =& \left( \boldsymbol{D}'(\boldsymbol{I}_{NT} - \boldsymbol{Z}(\boldsymbol{Z}'\boldsymbol{Z})^{-1} \boldsymbol{Z}')\boldsymbol{D} \right)^{-1} \boldsymbol{D}'(\boldsymbol{I}_{NT} - \boldsymbol{Z}(\boldsymbol{Z}'\boldsymbol{Z})^{-1} \boldsymbol{Z}' ) \boldsymbol{\tilde{y}} = \\
         =&\left(\boldsymbol{\hat{u}}'_{D \cdot \rho, \gamma, \delta}\boldsymbol{D} \right)^{-1} \boldsymbol{\hat{u}}'_{D \cdot \rho, \gamma, \delta} \boldsymbol{\tilde{y}}
    \end{align*}
    where $\boldsymbol{\hat{u}}'_{D \cdot \rho, \gamma, \delta}$ are the residuals\footnote{We do not need to worry about predicted values outside of $[0,1]$ range, as we are regressing only on fixed effects and not on any continuous variable. I would like to thank Frank DiTraglia for pointing this out.} from regressing $D_{it}$ on $\rho$, $\gamma_i$ and $\delta_t$ and analogically for $\boldsymbol{\hat{u}}'_{\tilde{y} \cdot \rho, \gamma, \delta}$ being the residuals from regressing $\tilde{y}_{it}$ on $\rho$, $\gamma_i$ and $\delta_t$. The second row follows due to idempotence of matrix $\boldsymbol{M} \equiv (\boldsymbol{I}_{NT} - \boldsymbol{Z}(\boldsymbol{Z}'\boldsymbol{Z})^{-1} \boldsymbol{Z}' )$, i.e. $\boldsymbol{M}'\boldsymbol{M}=\boldsymbol{M}$. Next,  we substitute the interactive fixed effects model generating $\boldsymbol{\tilde{y} }$ to obtain:  
    \begin{align*}
         \hat{\tau} =& \left( \boldsymbol{D}'\boldsymbol{M}\boldsymbol{D} \right)^{-1} \boldsymbol{D}'\boldsymbol{M}  \boldsymbol{\tilde{y}} = \\
         =& \left( \boldsymbol{D}'\boldsymbol{M}\boldsymbol{D} \right)^{-1} \boldsymbol{D}'\boldsymbol{M} (Vec(\boldsymbol{\beta X'}) + \boldsymbol{D}\tau + \boldsymbol{\epsilon} ) = \\
         =& \tau  + \left(\frac{ \boldsymbol{D}'\boldsymbol{M}\boldsymbol{D}}{NT} \right)^{-1} \frac{\boldsymbol{D}'\boldsymbol{M} Vec(\boldsymbol{\beta X'})}{NT}  + 
        \left(\frac{ \boldsymbol{D}'\boldsymbol{M}\boldsymbol{D}}{NT} \right)^{-1} \frac{\boldsymbol{D}'\boldsymbol{M} \boldsymbol{\epsilon}}{NT} 
    \end{align*}
    where we need to use the $Vec(\boldsymbol{A})$ operator which stacks all columns of matrix $\boldsymbol{A}$ \citep{cre08}. Given the assumption on the DGP, let us hold $T$ and let $n_0 \to \infty$ or  $n_1 \to \infty$, so that $N \to \infty$. Then, we have for all $(i,j)\in \{1,2, \dots, N\}$ and $(j,s) \in \{1,2, \dots, T\}$ by the Law of Large Numbers:  
    $$ \frac{\boldsymbol{D}'\boldsymbol{M} \boldsymbol{e}}{NT} \stackrel{p}{\to} E[g(D_{it})e_{js}] = 0 \quad \quad \frac{\boldsymbol{D}'\boldsymbol{M} Vec(\boldsymbol{\theta X'})}{NT}  \stackrel{p}{\to} E[f(D_{it})x_{j}] = 0 $$
    where $g(\cdot)$ and $f(\cdot)$ are some measurable functions. Note that in the expression above $\frac{ \boldsymbol{D}'\boldsymbol{M}\boldsymbol{D}}{NT}$ is a scalar, as $\boldsymbol{D}$ is $(NT\times1)$ vector.  Assuming (as we shall show below in Proposition \ref{prop_DiD_est_err}) that $\left(\frac{ \boldsymbol{D}'\boldsymbol{M}\boldsymbol{D}}{NT} \right)^{-1} \stackrel{p}{\to} c \neq 0$, then we get:
    \begin{gather*}
        N \to \infty: \quad \quad \hat{\tau}^{iDiD} \stackrel{p}{\to}   \tau 
    \end{gather*}
    with the result following from the Law of Large Numbers and Slutsky's Theorem.
\end{proof}

\newpage

\subsection{Proof of Proposition \ref{prop_optim_prob_full}}
\label{sec_proof_prop_optim_prob}

Define $\otimes$ to be the Kronecker product of two matrices and $\boldsymbol{\iota}_{n_1}$ to be a ($n_1 \times 1$) column  vector of $1$-s. We also introduce the operator $Vec(.)$ for vectorising some matrix $\boldsymbol{A}$. In particular, $Vec(A)$ takes the columns of $\boldsymbol{A}$ and stack them on top of each other.\footnote{For example, $Vec\begin{pmatrix} 1 & 2 \\ 3 & 4 \end{pmatrix} = (1,3,2,4)' $}

\propoptimprobfull*

\begin{proof}
The strategy for the proof involves three steps, in which we are turning model (\ref{CSC_mod}) into a constrained quadratic optimisation problem by stacking equations over $i$ and $t$.

Firstly, we substitute the constraint $(Random. \ Coef)$ for some given $i$ and $t$: 
\begin{gather*}
  y_{it} = \eta_i + \sum_{j=1}^{n_0} \underbrace{\omega_j}_{ind.-invariant} y_{jt} +  \sum_{j=1}^{n_0} \underbrace{\boldsymbol{x}_i \boldsymbol{\alpha}^{j}}_{ind.-specific} y_{jt} + \epsilon_{it} \quad s.t. \\
  \quad \quad \sum_{j=1}^{n_0} \omega_j + \boldsymbol{x}_i \boldsymbol{\alpha}^{j} = 1 \quad \quad  \forall j: \omega_j + \boldsymbol{x}_i \boldsymbol{\alpha}^{j} \geq 0
\end{gather*}
We can rewrite this expression as the first constrained quadratic optimisation problem in the proposition:
\begin{gather*}
     \max_{\alpha_j^k, \omega_j} \sum_{i=1}^{n_1} \sum_{t=1}^{T_0} \left( y_{it} - \eta_i - \sum_{j=1}^{n_0} y_{jt} \omega_j - \sum_{j=1}^{n_0} \sum_{k=1}^K y_{jt} \alpha^{k}_j x_i^k  \right)^2 \quad \quad s.t. \\
     \quad \quad \sum_{j=1}^{n_0} \left(\omega_j + \sum_{k=1} x_{i}^k \alpha^k_j\right) = 1 \quad \quad  \forall j: \omega_j + \sum_{k=1}^K x_{i}^k \alpha^k_j \geq 0
\end{gather*}
where every symbol represents a scalar.

Secondly, we will stack $y_{it}$ over time periods $t$ and treated units index $i$. Let us leave the constraints in their present form and focus on the expression for $y_{it}^0$ rewritten in vector form:
 \begin{gather*}
 y_{it} = \eta_i +  \boldsymbol{y'_{t, n_0}} \boldsymbol{\omega} +   \boldsymbol{y'_{t, n_0}} \boldsymbol{\alpha} \boldsymbol{x}'_i  + e_{it}
 \end{gather*}
where  $ \boldsymbol{y_{t, n_0}} $ is a $(n_0 \times 1)$ vector of outcomes for the donor pool at time $t$, $\boldsymbol{\omega}$ is $ (n_0 \times 1)$ vector of ind.-invariant weights, $\boldsymbol{x}_i$ a $(1\times K)$ row vector,  $\boldsymbol{\alpha}$ is a $(n_0 \times K)$ vector of coefficients and $\boldsymbol{y_{t, n_0}}$ is a $(n_0 \times 1)$ vector of outcomes for  donor pool. The next step is to stack over pre-treatment time period $t \leq T_0$ for some given individual $i$:
\begin{gather*}
 \underbrace{(\boldsymbol{y}^{pre}_{i})'}_{ T_0 \times 1} =  \underbrace{\boldsymbol{\iota}_{T_0}\eta_i}_{T_0 \times 1}  + 
 \underbrace{(\boldsymbol{Y}^{pre}_{n_0})'}_{T_0 \times n_0} \underbrace{\boldsymbol{\omega}}_{n_0 \times 1}  +  \underbrace{(\boldsymbol{Y}^{pre}_{n_0})'}_{T_0 \times n_0} \underbrace{\boldsymbol{\alpha}}_{n_0 \times K} \underbrace{\boldsymbol{x}_i}_{K \times 1} + \underbrace{\boldsymbol{\epsilon}_{i}}_{T_0 \times 1}
 \end{gather*}
 where $ \boldsymbol{y}^{pre}_{i}$ is a row $ 1 \times T_0$ vector of outcomes for the given treated unit $i$ in the pre-treatment period, $\boldsymbol{Y}^{pre}_{n_0}$ is a $ n_0 \times T_0$ matrix of pre-treatment outcomes for the donors and $\boldsymbol{\epsilon}_{i}$ is a $T_0 \times 1$ vector of errors. Here we can remark that $\boldsymbol{Y}^{pre}_{n_0}$ is the same matrix that appears in the full matrix of observed outcomes $\boldsymbol{\Theta}$ in (\ref{full_theta_matrix}). The matrix $\boldsymbol{\Theta}$  is the reason we pick the specific dimensions of the stacked $\boldsymbol{y}$ vectors. Next, we will combine the information over all treated individuals $i$ to form the following mapping: 
 \begin{gather}
 \underbrace{(\boldsymbol{Y}^{pre}_{n1})'}_{T_0 \times n_1} =    \underbrace{\boldsymbol{\iota}_{T_0}}_{T_0\times 1} \otimes \underbrace{\boldsymbol{\eta}'}_{ 1 \times n_1}   + \underbrace{(\boldsymbol{Y}^{pre}_{n_0})' \boldsymbol{\omega}}_{T_0 \times 1} \otimes \underbrace{\boldsymbol{\iota}'_{n_1}}_{1\times n_1} +   \underbrace{(\boldsymbol{Y}^{pre}_{n_0})'}_{T_0 \times n_0} \underbrace{\boldsymbol{\alpha}}_{n_0 \times K}  \otimes \underbrace{(\boldsymbol{X}_{n_1})'}_{K \times n_1} + \underbrace{\boldsymbol{\epsilon}}_{T_0 \times n_1}
 \label{full_stacked}
 \end{gather}
where $\boldsymbol{Y}^{pre}_{n1}$ is the $(n_1 \times T_0)$ matrix of pre-treatment outcomes that can also be found in (\ref{full_theta_matrix}), $\boldsymbol{X}_{n_1}$ is a $(n_1 \times K)$ matrix of time-invariant covariates for all treated units and $\boldsymbol{\epsilon}$ is a $(T_0 \times n_0)$ matrix. Both sides of the last equation are actually $(T_0 \times n_1)$ matrices. 

Thirdly, to turn (\ref{full_stacked}) into a quadratic optimisation problem, we can introduce the operator $Vec(.)$ for vectorising some matrix $\boldsymbol{A}$. In particular, $Vec(A)$ takes the columns of $\boldsymbol{A}$ and stack them on top of each other, as discussed above. Thus, if we apply this to (\ref{full_stacked}), we will be able to identify the parameters via solving the constrained quadratic optimisation problem:
\begin{gather*}
    \max_{\boldsymbol{\omega}, \boldsymbol{\alpha}} \quad  Vec\left((\boldsymbol{Y}^{pre}_{n1})' -     \boldsymbol{\iota}_{T_0} \otimes \boldsymbol{\eta}'  - (\boldsymbol{Y}^{pre}_{n_0})' \boldsymbol{\omega} \otimes \boldsymbol{\iota}'_{n_1} -   (\boldsymbol{Y}^{pre}_{n_0})' \boldsymbol{\alpha} \otimes \boldsymbol{X}'_{n_1}\right)'\\ 
    Vec\left((\boldsymbol{Y}^{pre}_{n1})' -     \boldsymbol{\iota}_{T_0} \otimes \boldsymbol{\eta}'  - (\boldsymbol{Y}^{pre}_{n_0})' \boldsymbol{\omega} \otimes \boldsymbol{\iota}'_{n_1} -   (\boldsymbol{Y}^{pre}_{n_0})' \boldsymbol{\alpha} \otimes \boldsymbol{X}'_{n_1} \right) \quad \quad s.t. \\ 
    (\boldsymbol{\iota}_{n_1}  \otimes \boldsymbol{\omega'} + \boldsymbol{\boldsymbol{X}}_{n_1} \boldsymbol{\alpha}' ) \boldsymbol{\iota}_{n_0}  = \boldsymbol{\iota}_{n_1} \quad \quad 
    \boldsymbol{\iota}_{n_1} \otimes \boldsymbol{\omega'} + \boldsymbol{\boldsymbol{X}}_{n_1} \boldsymbol{\alpha}' \geq \boldsymbol{0}
\end{gather*}
where as discussed $\boldsymbol{Y}^{pre}_{n1}$ is the observed $(n_1 \times T_0)$ matrix of pre-treatment outcomes for the treated group, $\boldsymbol \eta$ is a $(n_1 \times 1)$ column vector of intercepts, $\boldsymbol{Y}^{pre}_{n1}$ is the observed $(n_0 \times T_0)$ matrix of pre-treatment outcomes for the donors, $\boldsymbol{\omega}$ is the $(n_0 \times 1)$ vector of individual invariant weights, $\boldsymbol{\alpha}$ is $(n_0 \times K)$ matrix of coefficients.
\end{proof}
\newpage

\subsection{Proof of Proposition \ref{prop_city_scm} } \label{subsec_prop_city_scm}
\propcityscm*

\begin{proof}
        The pooled SC solves the problem:
    \begin{gather*}
        \max_{w_{jc} } \sum_{c=c_1+1}^C  \sum_{i=1}^n \sum_{t=1}^T \left(y_{cit} - \sum_{d=1}^{c_0} \sum_{j=1}^{n} w_{dj}y_{djt} \right)^2 \quad  \quad s.t. \quad \quad \sum_{d=1}^{c_0} \sum_{j=1}^{n} w_{dj} = 1 \quad w_{dj} \geq 0
    \end{gather*}
    Let us ignore the constraints and impose the additional requirement that every individual within a city gets the same weight $w_{dj} = \frac{w_d}{n}$. Then, we can  rewrite the objective function as:
    \begin{align*}
        \min_{\boldsymbol{w}} \sum_{c=c_1+1}^{C} \sum_{i=1}^n\sum_{t=1}^T( y_{cit} - \underbrace{\sum_{d=1}^{c_0} w_d \bar{y}_{dt}}_{h_t(w)} )^2
    \end{align*}
    where we use $\bar{y}_{dt} = \frac{\sum_{j=1}^n y_{djt}}{n}$ and $h_t(w) = \sum_{d=1}^{c_0} w_d \bar{y}_{dt}$. If we open the squares and drop the terms that do not involve $\boldsymbol{w}$, we obtain:
    \begin{align}
        \min_{\boldsymbol{w}} \sum_{t=1}^T h_t(w) \left(Ch_t(w) - 2 \sum_{c=c_1+1}^C \bar{y}_{ct} \right)
        \label{obj_psc}
    \end{align}
    where we use the same definition of $\bar{y}_{ct}$
    
    In contrast, the city-level SC solves:
    \begin{gather*}
        \min_{\boldsymbol{w}} \sum_{c=c_1+1}^C \sum_{t=1}^T (\bar{y}_{ct} - \sum_{d=1}^{c_0}w_d \bar{y}_{dt} )^2
    \end{gather*}
    Once we upon the square and drop the terms that are irrelevant for optimisation, we can rewrite this as:
    \begin{gather}
         \min_{\boldsymbol{w}} \sum_{t=1}^T h_t(w) \left(Ch_t(w) - 2 \sum_{c=c_1+1}^C \bar{y}_{ct} \right)
         \label{obj_ssc}
    \end{gather}
    which is the same as the (\ref{obj_psc}), so that the objective function is the same in the two cases.\footnote{This is related to the regression equivariance property of the Least Squares Objective function.}
\end{proof}

\end{document}